\newcommand{\Prob}{{\mathbb P}}
\newcommand{\Exp}{{\mathbb E}}
\newcommand{\kap}{\mbox{$\kappa$}}
\newtheorem{krule}{Reduction Rule}
\newcommand{\pp}{\hat{\phi}}
\begin{document}

\title{Constraint Satisfaction Problems Parameterized Above or Below Tight Bounds: A Survey}

\author{ Gregory Gutin and Anders Yeo}

\authorrunning{Gutin and Yeo}

\institute{
Royal Holloway, University of London, United Kingdom \\ \email{\{gutin,anders\}@cs.rhul.ac.uk}
}

\date{}
\maketitle

\begin{center}
{\bf This paper is dedicated to the 60th Birthday of Michael R. Fellows}
\end{center}

\begin{abstract}
\noindent
We consider constraint satisfaction problems parameterized above or below tight bounds. One example is MaxSat parameterized above $m/2$:
given a CNF formula $F$ with $m$ clauses, decide whether there is a truth assignment that satisfies at least $m/2+k$ clauses, where $k$ is the parameter.
Among other problems we deal with are MaxLin2-AA (given a system of linear equations over $\mathbb{F}_2$ in which each equation has a positive integral weight, decide whether there is an assignment to the variables that satisfies equations of total weight at least $W/2+k$, where $W$ is the total weight of all equations), Max-$r$-Lin2-AA (the same as MaxLin2-AA, but each equation has at most $r$ variables, where $r$ is a constant) and Max-$r$-Sat-AA (given a CNF formula $F$ with $m$ clauses in which each clause has at most $r$ literals, decide whether there is a truth assignment satisfying at least $\sum_{i=1}^m(1-2^{r_i})+k$ clauses, where $k$ is the parameter, $r_i$ is the number of literals in Clause $i$, and $r$ is a constant). We also consider
Max-$r$-CSP-AA, a natural generalization of both Max-$r$-Lin2-AA and Max-$r$-Sat-AA, order (or, permutation) constraint satisfaction problems of arities 2 and 3 parameterized above the average value and some other problems related to MaxSat. We discuss results, both polynomial kernels and parameterized algorithms, obtained for the problems mainly in the last few years as well as some open questions.
\end{abstract}

\pagenumbering{arabic}
\pagestyle{plain}

\section{Introduction}\label{section:intro}

This paper surveys mainly recent results in a subarea of parameterized algorithms and complexity that was launched quite early in the short history
of parameterized algorithms and complexity, namely, in the Year 2 BDF\footnote{BDF stands for Before Downey-Fellows, i.e., before 1999 when the first monograph describing foundations of parameterized algorithms and complexity was published \cite{DowneyFellows99}.}.

Consider  the well-known problem {\sc MaxSat}, where for a given CNF formula $F$ with $m$ clauses, we are asked to
determine the maximum number of clauses of $F$ that can be satisfied by a truth assignment.
It is well-known (and shown below, in Section \ref{sec:sat}) that there exists a truth assignment to the variables of
$F$ which satisfies at least $m/2$ clauses.

The standard parametrization $k$-{\sc MaxSat} of {\sc MaxSat} is as follows: decide whether there is a truth assignment
which satisfies at least $k$ clauses of $F$, where $k$ is the parameter.
(We provide basic terminology and notation on parameterized algorithms and complexity in Section \ref{sec:tn}.)
It is very easy to see that $k$-{\sc MaxSat} has a kernel with a linear number of variables. Indeed, consider an instance $I$
of $k$-{\sc MaxSat}. If $k\le m/2$ then $I$ is a {\sc Yes}-instance. Otherwise, we have $k>m/2$ and $m\le 2k-1.$
Suppose that we managed somehow to obtain a better result, a kernel with at most $pk$ variables, where $1\le p<2$. Is such a kernel of any interest?
Such a kernel would be of interest only for $k> m/2$, i.e., when the size of the kernel
would be bounded by $pk> pm/2$. Thus, such a kernel should be viewed as {\em huge} rather than {\em small} as the bound $pk$ might suggest
at the first glance.

The bound $m/2$ is tight as we can satisfy only half clauses in the instances consisting of pairs $(x),(\bar{x})$ of clauses. This suggest
the following parameterization of {\sc MaxSat} {\em above tight bound} introduced by Mahajan and Raman \cite{Mahajan97}:
decide whether there is a truth assignment which satisfies at least $m/2+k$ clauses of $F$, where $k$ is the parameter.

To the best our knowledge, \cite{Mahajan97} was the first paper on problems parameterized above or below tight bounds
and remained the only one for several years, at least for constraint satisfaction problems (CSPs). However, in the last few years
the study of CSPs parameterized above or below tight bounds has finally picked up. This is, in large part, due to
emergence of new probabilistic and linear-algebraic methods and approaches in the area.

In this survey paper, we will overview several results on CSPs parameterized above or below tight bounds, as well as some methods used to obtain these
results. While not going into details of the proofs, we will discuss some ideas behind the proofs.
We will also consider several open problems in the area.

In the remainder of this section we give a brief overview of the paper and its organization.

In the next section we provide basics on parameterized algorithms and complexity. The notions
mentioned there are all well-known apart from a recent notion of a bikernel introduced by Alon et al. \cite{AlonEtAl2009a}.
In Section \ref{sec:prob}, we describe some probabilistic and Harmonic Analysis tools. These tools are, in particular, used in
the recently introduced Strictly-Above-Below-Expectation method \cite{GutinKimSzeiderYeo09a}.

Results on {\sc MaxSat} parameterized above or below tight bounds are discussed in Section \ref{sec:sat}.
We will consider the above-mentioned parameterization of {\sc MaxSat} above tight bound, some ``stronger" parameterizations of {\sc MaxSat}
introduced or inspired by Mahajan and Raman \cite{Mahajan97}. The stronger parameterizations are based on the notion of a $t$-satisfiable CNF formula
(a formula in which each set of $t$ clauses can be satisfied by a truth assignment)
and asymptotically tight lower bounds on the maximum number of clauses of a $t$-satisfiable CNF formula satisfied by a truth assignment for $t=2$ and 3.
We will describe linear-variable kernels obtained for both $t=2$ and 3.

We will also consider the parameterization of {\sc 2-Sat} below the upper bound $m$, the number of clauses. This problem was proved to be fixed-parameter tractable by Razgon and O'Sullivan \cite{RazOsu}. Raman et al. \cite{RamRamSau} and Cygan et al. \cite{CygPilPilWoj} designed faster parameterized algorithms
for the problem. The problem has several application, which we will briefly overview.

Boolean Maximum $r$-CSPs parameterized above the average value are considered in Section \ref{sec:csp}, where $r$ is a positive integral constant.
In general, the Maximum $r$-CSP is given by a set $V$ of $n$ variables
and a set of $m$ Boolean formulas; each formula is assigned an integral positive weight and contains at most $r$ variables from $V$. The aim is to find a truth assignment which maximizes the weight of satisfied formulas. Averaging over all truth assignments, we can find the average value $A$ of the weight of satisfied formulas. It is easy to show that we can always find a truth assignment to the variables of $V$ which satisfied formulas of total weight at least $A$.
Thus, a natural parameterized problem is whether there exists a truth assignment that satisfies formulas of total weight at least $A+k,$ where $k$ is the parameter ($k$ is a nonnegative integer). We denote such a problem by {\sc Max-$r$-CSP-AA}.

The problem {\sc Max-$r$-Lin2-AA} is a special case of {\sc Max-$r$-CSP-AA} when every formula is a linear equation over
$\mathbb{F}_2$ with at most $r$ variables. For {\sc Max-$r$-Lin2-AA}, we have $A=W/2$, where $W$ is the total weight of all equations.
It is well-known that, in polynomial time, we can find an assignment to the variables that satisfies equations of total weight at least $W/2$, but, for any $\epsilon > 0$ it is NP-hard to decide whether there is an assignment satisfying equations of total weight at least $W(1+\epsilon)/2$ \cite{Hastad01}.
We give proof schemes of a result by Gutin et al. \cite{GutinKimSzeiderYeo09a} that {\sc Max-$r$-Lin2-AA} has a kernel of quadratic size and a result of Crowston, Fellows et al. \cite{CroFelGut} that {\sc Max-$r$-Lin2-AA} has a kernel with at most $(2k-1)r$ variables. The latest result improves that of Kim and Williams \cite{KimWil} that {\sc Max-$r$-Lin2-AA} has a kernel with at most $r(r+1)k$ variables. Papers  \cite{CroFelGut,KimWil} imply an algorithm of runtime $2^{O(k)}+m^{O(1)}$ for {\sc Max-$r$-Lin2-AA}.

We give a proof scheme of a result by Alon et al. \cite{AlonEtAl2009a} that {\sc Max-$r$-CSP-AA} has a a kernel of
polynomial size.  The main idea of the proof is to reduce {\sc Max-$r$-CSP-AA} to {\sc Max-$r$-Lin2-AA} and use results on {\sc Max-$r$-Lin2-AA} and a lemma
on bikernels given in the next section. The result of Alon et al. \cite{AlonEtAl2009a}
solves an open question of Mahajan, Raman and Sikdar \cite{MahajanRamanSikdar09} not only for {\sc Max-$r$-Sat-AA} but for the more general problem {\sc Max-$r$-CSP-AA}.
The problem {\sc Max-$r$-Sat-AA} is a special case of {\sc Max-$r$-CSP-AA} when
every formula is a clause with at most
$r$ variables. For {\sc Max-$r$-Sat-AA}, the reduction to {\sc Max-$r$-Lin2-AA} can be complemented by a reduction from {\sc Max-$r$-Lin2-AA}
back to {\sc Max-$r$-Sat-AA}, which yields a kernel of quadratic size. (Note that while the size of the kernel for {\sc Max-$r$-CSP-AA} is polynomial
we are unable to bound the degree of the polynomial.)

{\sc MaxLin2-AA} is the same problem as {\sc Max-$r$-Lin2-AA}, but the number of variables in an equation is not bounded. Thus, {\sc MaxLin2-AA} is a generalization of {\sc Max-$r$-Lin2-AA}. Section \ref{sec:maxlin} presents a scheme of a recent proof by Crowston, Fellows et al. \cite{CroFelGut} that {\sc MaxLin2-AA} is fixed-parameter tractable and has a kernel with polynomial number of variables. This result finally solved an open question of Mahajan, Raman and Sikdar \cite{MahajanRamanSikdar09}. Still, we do not know whether {\sc MaxLin2-AA} has a kernel of polynomial size and we present only partial results on the topic. {\sc Max-Sat-AA} is the same problem as {\sc Max-$r$-Sat-AA}, but the number of variables in a clause is not bounded. Crowston et al. \cite{CroGutJonRamSau} proved that {\sc Max-Sat-AA} is para-NP-complete and, thus, {\sc MaxSat-AA} is not fixed-parameter tractable unless P$=$NP.
We give a short discussion of this result in the end of Section \ref{sec:maxlin}.

In Section \ref{sec:perm} we discuss parameterizations above average of Ordering CSPs of arities 2 and 3. It turns out that
for our parameterization the most important Ordering CSP is the problem \textsc{$r$-Linear Ordering} ($r \geq 2$).
An instance of \textsc{$r$-Linear Ordering}
consists of a set $V$ of variables and a multiset $C$ of constraints, which are
ordered $r$-tuples of distinct variables of $V$ (note that the same set of $r$ variables
may appear in several different constraints). The objective is
to find an ordering $\alpha$ of $V$ that maximizes
the number of constraints whose order in $\alpha$
follows that of the constraint (such constraints are {\em satisfied} by $\alpha$).

It is easy to see that $|C|/r!$ is the average number of constraints satisfied by an ordering of $V$ and that it is a tight lower bound on the maximum number
of constraints satisfied by an ordering of $V$. The only nontrivial Ordering CSP of arity 2 is {\sc 2-Linear Ordering}. For this problem, Guruswami, Manokaran and Raghavendra \cite{GurManRag} proved that it is impossible to find, in polynomial time, an ordering that satisfies at least $|C|(1+\epsilon)/2$ constraints for every $\epsilon >0$ provided the Unique Games Conjecture (UGC) of Khot \cite{khot} holds.  Similar approximation resistant results were proved for all Ordering CSPs of arity 3 by  Charikar, Guruswami and Manokaran \cite{ChaGurMan}  and for Ordering CSPs of any arity by Guruswami et al. \cite{GurHasManRagCha}.

 In the problem {\sc $r$-Linear Ordering} parameterized above average ({\sc $r$-Linear Ordering-AA}), given an instance of {\sc $r$-Linear Ordering} with
a multiset $C$ of constraints, we are to decide whether there is an ordering satisfying at least $|C|/r!+k$ constraints,  where $k$ is the parameter.
Gutin et al. \cite{GutinKimSzeiderYeo09a} proved that {\sc 2-Linear Ordering-AA} is fixed-parameter tractable and, moreover, has a kernel of a quadratic size.
{\sc Betweenness} is an Ordering CSP of arity 3, which is  formulated  in Section \ref{sec:perm}.
Gutin et al. \cite{GutinKimMnichYeo} solved an open question of Benny Chor stated in Niedermeier's monograph \cite{Niedermeier06} by showing that {\sc Betweenness} parameterized above average is fixed-parameter tractable and, moreover, has a kernel of a quadratic size.

A simple, yet important, observation is that all Ordering CSPs of arity 3 parameterized above average can be reduced, in polynomial time, to {\sc 3-Linear Ordering} parameterized above average ({\sc 3-Linear Ordering-AA}) and that this reduction preserves the parameter. Thus, to prove that all Ordering CSPs of arity 3 parameterized above average are fixed-parameter tractable, it suffices to show that  {\sc 3-Linear Ordering-AA} is fixed-parameter tractable. Gutin et al. \cite{GutinIerselMnichYeo} proved that {\sc 3-Linear Ordering-AA} is fixed-parameter tractable and, moreover, has a kernel with a quadratic number of constraints and variables.

Kim and Williams \cite{KimWil} partially improved the results above by showing that {\sc 2-Linear Ordering-AA} and {\sc 3-Linear Ordering-AA} have kernels with linear number of variables. Parameterized complexity of Ordering CSPs of arities 4 and higher is still unknown. It seems to be technically much more difficult to prove that {\sc 4-Linear Ordering-AA} is fixed-parameter tractable than that {\sc 3-Linear Ordering-AA} is fixed-parameter tractable.

\section{Basics on Parameterized Algorithms and Complexity} \label{sec:tn}

A parameterized problem $\Pi$ can be considered as a set of pairs
$(I,k)$ where $I$ is the \emph{problem instance} and $k$ (usually a nonnegative
integer) is the \emph{parameter}.  $\Pi$ is called
\emph{fixed-parameter tractable (fpt)} if membership of $(I,k)$ in
$\Pi$ can be decided by an algorithm of runtime $O(f(k)|I|^c)$, where $|I|$ is the size
of $I$, $f(k)$ is an arbitrary function of the
parameter $k$ only, and $c$ is a constant
independent from $k$ and $I$. Such an algorithm is called an {\em fpt} algorithm.
Let $\Pi$ and $\Pi'$ be parameterized
problems with parameters $k$ and $k'$, respectively. An
\emph{fpt-reduction $R$ from $\Pi$ to $\Pi'$} is a many-to-one
transformation from $\Pi$ to $\Pi'$, such that (i) $(I,k)\in \Pi$ if
and only if $(I',k')\in \Pi'$ with $k'\le g(k)$ for a fixed
computable function $g$, and (ii) $R$ is of complexity
$O(f(k)|I|^c)$.

If the nonparameterized version of $\Pi$ (where $k$ is just part of the input)
is NP-hard, then the function $f(k)$ must be superpolynomial
provided P$\neq$NP. Often $f(k)$ is ``moderately exponential,''
which makes the problem practically feasible for small values of~$k$.
Thus, it is important to parameterize a problem in such a way that the
instances with small values of $k$ are of real interest.

When the decision time is replaced by the much more powerful $|I|^{O(f(k))},$
we obtain the class XP, where each problem is polynomial-time solvable
for any fixed value of $k.$ There is an infinite number of parameterized complexity
classes between FPT and XP (for each integer $t\ge 1$, there is a class W[$t$]) and they form the following tower:
$$FPT \subseteq W[1] \subseteq W[2] \subseteq \cdots \subseteq W[P] \subseteq XP.$$
Here W[P] is the class of all parameterized problems $(x,k)$ that can be decided in $f(k)|x|^{O(1)}$ time
by a nondeterministic Turing machine that makes at most $f(k)\log |I|$ nondeterministic steps for some function $f$.
For the definition of classes W[$t$],
see, e.g., \cite{FlumGrohe06} (we do not use these classes in the rest of the paper).

$\Pi$ is in \emph{para-NP} if membership of $(I,k)$ in
$\Pi$ can be decided in nondeterministic time $O(f(k)|I|^c)$,
where $|I|$ is the size of $I$, $f(k)$ is an arbitrary function of the
parameter $k$ only,
and $c$ is a constant independent from $k$ and $I$. Here,
nondeterministic time means that we can use nondeterministic
Turing machine. A parameterized problem $\Pi'$ is {\em
para-NP-complete} if it is in para-NP and for any parameterized
problem $\Pi$ in para-NP there is an fpt-reduction from $\Pi$ to
$\Pi'$.

While several fpt algorithms were designed many years ago (e.g., pseudo-polynomial algorithms with parameter
being the binary length of the maximum number, cf. \cite{GareyJohnson79}),
Downey and Fellows were the first to systematically study the theory of parameterized algorithms and complexity
and they wrote the first monograph \cite{DowneyFellows99} in the area\footnote{Michael R. Fellows has worked tirelessly for many years to promote the area and so can be affectionately called St. Paul of Parameterized Complexity.}.

Given a pair $\Pi,\Pi'$ of parameterized problems,
a \emph{bikernelization from $\Pi$ to $\Pi'$} is a polynomial-time
algorithm that maps an instance $(I,k)$ to an instance $(I',k')$ (the
\emph{bikernel}) such that (i)~$(I,k)\in \Pi$ if and only if
$(I',k')\in \Pi'$, (ii)~ $k'\leq f(k)$, and (iii)~$|I'|\leq g(k)$ for some
functions $f$ and $g$. The function $g(k)$ is called the {\em size} of the bikernel.
A {\em kernelization} of a parameterized problem
$\Pi$ is simply a bikernelization from $\Pi$ to itself. Then $(I',k')$ is a {\em kernel}.
The term bikernel was coined by Alon et al. \cite{AlonEtAl2009a}; in \cite{BodlaenderEtAl2009a}
a bikernel is called a generalized kernel.

It is well-known that
a parameterized problem $\Pi$ is fixed-parameter
tractable if and only if it is decidable and admits a
kernelization \cite{DowneyFellows99,FlumGrohe06,Niedermeier06}. This result can be extended as follows:
A decidable parameterized problem $\Pi$ is fixed-parameter
tractable if and only if it admits a
bikernelization from itself to a decidable parameterized problem $\Pi'$ \cite{AlonEtAl2009a}.

Due to applications, low degree polynomial size kernels are of main interest. Unfortunately,
many fixed-parameter tractable problems do not have
kernels of polynomial size unless the polynomial hierarchy collapses to the third level \cite{BodlaenderEtAl2009a,BodlaenderEtAl2009,Fernau2009}.
For further background and terminology on parameterized complexity we
refer the reader to the monographs~\cite{DowneyFellows99,FlumGrohe06,Niedermeier06}.

The following lemma of Alon et al. \cite{AlonEtAl2009a} inspired by a lemma from \cite{BodlaenderEtAl2009}
shows that polynomial bikernels imply polynomial kernels.

\begin{lemma}\label{lem:pk}
Let $\Pi,\Pi'$ be a pair of decidable parameterized problems such that the nonparameterized version of $\Pi'$
is in NP, and the nonparameterized version of $\Pi$ is NP-complete.
If there is a bikernelization from $\Pi$ to $\Pi'$ producing a
bikernel of polynomial size, then $\Pi$ has a polynomial-size kernel.
\end{lemma}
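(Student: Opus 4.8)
The plan is to compose the given polynomial-size bikernelization from $\Pi$ to $\Pi'$ with two NP-completeness-style reductions, so that we land back inside $\Pi$ with an instance whose size is polynomial in $k$. First I would apply the bikernelization hypothesis: on input $(I,k)$, in polynomial time we obtain an instance $(I',k')$ of $\Pi'$ with $k'\le f(k)$ and $|I'|\le g(k)$ for polynomials $f,g$, and with $(I,k)\in\Pi$ iff $(I',k')\in\Pi'$. Now I would forget the parameter $k'$ and view $I'$ simply as an instance of the nonparameterized problem $\Pi'$, whose size is at most $g(k)$.

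The next step uses the two complexity assumptions. Since the nonparameterized version of $\Pi'$ is in NP and the nonparameterized version of $\Pi$ is NP-complete, there is a polynomial-time many-one reduction $\rho$ from $\Pi'$ (nonparameterized) to $\Pi$ (nonparameterized): indeed, $\Pi'\in$ NP reduces to the NP-complete problem $\Pi$. Apply $\rho$ to $I'$ to get an instance $J$ of the nonparameterized problem $\Pi$ with $I'\in\Pi'$ iff $J\in\Pi$, and $|J|\le (|I'|)^{O(1)}\le g(k)^{O(1)}$, which is still polynomial in $k$. So far we have a $\Pi$-instance $J$ of size polynomial in $k$ that is a yes-instance iff $(I,k)\in\Pi$; what remains is to equip $J$ with an appropriate parameter so that the pair becomes a genuine kernel of $\Pi$.

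The final step is to choose the new parameter. Let $k'' := |J|$; since $|J|$ is bounded by a polynomial in $k$, we have $k''\le h(k)$ for a polynomial $h$, which is all that the kernelization definition requires of the parameter (it need not be bounded by the original $k$, only by a computable — here polynomial — function of it). The output $(J,k'')$ is then a kernel of $\Pi$: it is computable in polynomial time (composition of polynomial-time steps), $(J,k'')\in\Pi\iff (I,k)\in\Pi$ by chaining the three equivalences, and $|J|=k''\le h(k)$ is polynomial in $k$, so the kernel has polynomial size. One should also note, for completeness, that the trivial cases where an intermediate reduction outputs a fixed yes- or no-instance are handled by replacing that output with a fixed small constant-size yes- or no-instance of $\Pi$, which exists because $\Pi$ is a nontrivial decidable problem.

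The main obstacle — really the only subtlety — is conceptual rather than computational: one must be comfortable that a kernel is allowed to have its parameter blown up to any polynomial (indeed any computable function) of the original parameter, and that passing through the \emph{nonparameterized} worlds of $\Pi'$ and $\Pi$ is legitimate precisely because the bikernel $I'$ already has size bounded by $g(k)$, so every subsequent polynomial-time transformation keeps the size polynomial in $k$. The NP-completeness of $\Pi$ is exactly what lets us re-enter $\Pi$'s own problem space, and the membership $\Pi'\in$ NP is exactly what lets the Cook–Levin-style reduction from $\Pi'$ to $\Pi$ exist; without both hypotheses the argument breaks.
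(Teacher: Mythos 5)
Your proposal is correct and follows essentially the same route as the argument of Alon et al.\ that the survey cites for this lemma (the paper itself only states it): apply the bikernelization, then use that the nonparameterized version of $\Pi'$ is in NP and that of $\Pi$ is NP-complete to reduce the size-$g(k)$ bikernel back into $\Pi$ in polynomial time, noting that the resulting instance and its parameter are both bounded by a polynomial in $k$. The one imprecision is that you should feed the whole pair $(I',k')$ --- not $I'$ alone --- to the reduction $\rho$, since membership in the nonparameterized version of $\Pi'$ depends on $k'$; correspondingly $\rho$ already outputs an instance of the nonparameterized version of $\Pi$, i.e.\ a pair $(I'',k'')$, so the separate step of inventing a parameter $k''=|J|$ is unnecessary.
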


Henceforth $[n]$ stands for the set $\{1,2,\ldots , n\}.$

\section{Probabilistic and Harmonic Analysis Tools}\label{sec:prob}

We start this section by outlining the very basic principles of the probabilistic method which will be
implicitly used in this paper. Given random variables $X_1, \ldots , X_n$, the fundamental property known as
{\em linearity of expectation} states that $\Exp(X_1+\ldots +X_n)=\Exp(X_1)+ \ldots + \Exp(X_n)$.
The {\em averaging argument} utilizes the fact that there is a point for which $X\geq \Exp(X)$ and a
point for which $X\leq \Exp(X)$ in the probability space. Also a positive probability $\Prob(A)>0$ for
some event $A$ means that there is at least one point in the probability space which belongs to $A$.
For example, $\Prob(X\geq k)>0$ tells us that there exists a point for which $X\geq k$.

A random variable is {\em discrete} if its distribution function
has a finite or countable number of positive increases.
A random variable $X$ is {\em symmetric}  if $-X$ has the same distribution function as $X$.
If $X$ is discrete, then $X$ is symmetric if and only if $\Prob(X=a)=\Prob(X=-a)$
for each real $a.$  Let $X$ be a symmetric variable for which the first moment $\mathbb{E}(X)$ exists.
Then $\mathbb{E}(X)=\mathbb{E}(-X)=-\mathbb{E}(X)$ and, thus, $\mathbb{E}(X)=0.$
The following is easy to prove \cite{GutinKimSzeiderYeo09a}.
\begin{lemma}\label{eqsym}
If $X$ is a symmetric random variable and $\mathbb{E}(X^2)$ is finite, then
$$\Prob(\ X \ge \sqrt{\mathbb{E}(X^2)}\ )>0.$$\end{lemma}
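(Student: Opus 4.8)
The plan is to argue by contradiction using only the averaging argument and the symmetry of $X$, exactly as advertised just before the statement. Write $\sigma=\sqrt{\mathbb{E}(X^2)}$. If $\sigma=0$ then $\mathbb{E}(X^2)=0$ forces $X=0$ almost surely, so $\Prob(X\ge 0)=1>0$ and we are done; hence assume $\sigma>0$ and suppose, for contradiction, that $\Prob(X\ge\sigma)=0$.

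First I would use symmetry to control the lower tail as well. Since $-X$ has the same distribution function as $X$, we get $\Prob(X\le-\sigma)=\Prob(-X\le-\sigma)=\Prob(X\ge\sigma)=0$. Because $\sigma>0$, the events $\{X\ge\sigma\}$ and $\{X\le-\sigma\}$ are disjoint and their union is exactly $\{X^2\ge\sigma^2\}$, so $\Prob(X^2\ge\sigma^2)=0$; equivalently $X^2<\sigma^2$ almost surely.

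Next, set $Y=\sigma^2-X^2$. By the previous step $Y>0$ almost surely, while by linearity of expectation $\mathbb{E}(Y)=\sigma^2-\mathbb{E}(X^2)=0$. A nonnegative random variable with zero expectation vanishes almost surely (for instance, Markov's inequality gives $\Prob(Y\ge 1/n)\le n\,\mathbb{E}(Y)=0$ for every $n\ge 1$, and then $\Prob(Y>0)=0$ by countable additivity). This contradicts $Y>0$ almost surely, so the assumption $\Prob(X\ge\sigma)=0$ is untenable and $\Prob(X\ge\sqrt{\mathbb{E}(X^2)})>0$.

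The argument is short, and the only points needing a little care are the boundary case $\sigma=0$ and the passage from ``$X^2<\sigma^2$ almost surely'' to a genuine contradiction, i.e. ruling out that $X$ takes values arbitrarily close to $\pm\sigma$ without ever reaching them; this is precisely where the ``nonnegative mean-zero variable is $0$ a.s.'' fact is used. For a discrete $X$ (in particular a finitely supported one, which is the case relevant to the applications in this survey) everything is immediate: $X^2$ then takes at most countably many values, each strictly below $\sigma^2$ on an event of full probability, so $\mathbb{E}(X^2)=\sigma^2$ is plainly impossible.
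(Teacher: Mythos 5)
Your proof is correct: the symmetry argument giving $\Prob(X\le-\sigma)=\Prob(X\ge\sigma)$, the deduction that $X^2<\sigma^2$ almost surely, and the contradiction with $\mathbb{E}(X^2)=\sigma^2$ via the ``nonnegative mean-zero variable vanishes a.s.'' fact is exactly the standard argument behind this lemma. The survey itself omits the proof (citing \cite{GutinKimSzeiderYeo09a} as ``easy to prove''), and your write-up, including the careful handling of the boundary case $\sigma=0$, matches that intended argument.
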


If $X$ is not symmetric then the following lemma can be used instead (a similar result was
already proved in \cite{AGK04}).
\begin{lemma}[Alon et al.~\cite{AlonEtAl2009a}]
\label{lem32}
  Let $X$ be a real random variable and suppose that its first, second and fourth moments satisfy $\mathbb{E}[X] = 0$, $\mathbb{E}[X^2] = \sigma^2 > 0$ and $\mathbb{E}[X^4] \leq c \mathbb{E}[X^2]^2$, respectively, for some constant $c$.
  Then $\Prob(X > \frac{\sigma}{2 \sqrt c}) > 0$.
\end{lemma}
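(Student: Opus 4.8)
The plan is to prove $\Prob(X > \sigma/(2\sqrt{c})) > 0$ by contradiction: assume $X \le \sigma/(2\sqrt c)$ almost surely, and derive an upper bound on $\mathbb{E}[X^2]$ that contradicts $\mathbb{E}[X^2] = \sigma^2$. The key device is to split the second moment according to whether $X$ is large or small. Write $t = \sigma/(2\sqrt c)$. Then
$$\sigma^2 = \mathbb{E}[X^2] = \mathbb{E}[X^2 \mathbf{1}_{X \le t}] + \mathbb{E}[X^2 \mathbf{1}_{X > t}].$$
Under the contradiction hypothesis the second term vanishes... no — more carefully, the contradiction hypothesis is $\Prob(X > t) = 0$, so indeed $\mathbb{E}[X^2 \mathbf{1}_{X>t}] = 0$, and we would get $\sigma^2 = \mathbb{E}[X^2\mathbf{1}_{X \le t}]$. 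That alone is not a contradiction, so the real argument must use both $\mathbb{E}[X]=0$ and the fourth-moment bound to control the mass of $X$ on the negative side.

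The cleaner route I would actually take is the following. First, use $\mathbb{E}[X] = 0$ to write $\mathbb{E}[X \mathbf{1}_{X>0}] = \mathbb{E}[(-X)\mathbf{1}_{X<0}] = \tfrac12 \mathbb{E}[|X|]$, so in particular $\mathbb{E}[X\mathbf{1}_{X > 0}] = \tfrac12\mathbb{E}|X| \ge 0$; combined with the trivial bound this gives a handle on how much positive mass $X$ must carry. Second, apply the Cauchy--Schwarz inequality in the form
$$\mathbb{E}[X^2] = \mathbb{E}\!\left[ X^{2/3} \cdot X^{4/3}\right] \le \left(\mathbb{E}[X]\right)^{2/3}\left(\mathbb{E}[X^4]\right)^{1/3}$$
— which is the standard interpolation $\|X\|_2 \le \|X\|_1^{1/3}\|X\|_4^{2/3}$ (valid by Hölder with exponents $3$ and $3/2$ applied to $|X| = |X|^{1/3}|X|^{2/3}$). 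This is the crucial inequality: it says $\sigma^2 = \|X\|_2^2 \le \|X\|_1^{2/3}\,\|X\|_4^{4/3} \le \|X\|_1^{2/3}(c^{1/4}\sigma^{1/2})^{4/3} = \|X\|_1^{2/3} c^{1/3}\sigma^{2/3}$, hence $\|X\|_1 \ge \sigma^2/\sqrt{c}$. So $\mathbb{E}|X|$ is bounded below, and therefore $\mathbb{E}[X\mathbf{1}_{X>0}] = \tfrac12\mathbb{E}|X| \ge \sigma^2/(2\sqrt c)$.

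Finally I would close the argument: if $X \le t := \sigma/(2\sqrt c)$ almost surely, then $X\mathbf{1}_{X>0} \le t\,\mathbf{1}_{X>0} \le t$ pointwise, so $\mathbb{E}[X\mathbf{1}_{X>0}] \le t = \sigma/(2\sqrt c)$. But we also need strictness somewhere to rule out $\Prob(X>t)=0$ while still allowing $\Prob(X = t) > 0$; here one observes that equality $\mathbb{E}[X\mathbf{1}_{X>0}] = \sigma/(2\sqrt c)$ forces $X$ to be supported on $\{0, t\}$ with the extremal split, and then a direct check of $\mathbb{E}[X] = 0$ and the moment relations shows this configuration violates $\mathbb{E}[X^4] \le c\,\mathbb{E}[X^2]^2$ unless $\sigma = 0$, contradicting $\sigma^2 > 0$. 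Hence $\Prob(X > \sigma/(2\sqrt c)) > 0$.

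The main obstacle is getting the constant exactly right and handling the boundary case cleanly: the interpolation inequality yields a lower bound on $\mathbb{E}|X|$, but turning "$\mathbb{E}[X\mathbf 1_{X>0}]$ is at least $\sigma/(2\sqrt c)$" into the strict statement "$\Prob(X > \sigma/(2\sqrt c)) > 0$" requires ruling out the degenerate extremal distribution, which is where the fourth-moment hypothesis must be invoked a second time (or, alternatively, one proves a slightly weaker strict inequality with a marginally worse constant and notes that $\sigma/(2\sqrt c)$ is what survives). I expect the bookkeeping of these constants — rather than any conceptual difficulty — to be the fiddly part of the write-up.
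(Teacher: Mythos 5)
Your approach is essentially the one used by Alon et al.\ for this lemma (the survey itself only cites the proof): use $\mathbb{E}[X]=0$ to get $\mathbb{E}[X\mathbf{1}_{X>0}]=\tfrac12\mathbb{E}|X|$, lower-bound $\mathbb{E}|X|$ via the interpolation $\|X\|_2\le\|X\|_1^{1/3}\|X\|_4^{2/3}$ together with the fourth-moment hypothesis, and compare with what $X\le\sigma/(2\sqrt c)$ a.s.\ would allow. Two corrections, though. First, a constant-tracking slip: from $\mathbb{E}[X^4]\le c\sigma^4$ you get $\|X\|_4\le c^{1/4}\sigma$, not $c^{1/4}\sigma^{1/2}$, so the conclusion is $\mathbb{E}|X|\ge\sigma/\sqrt c$ and $\mathbb{E}[X\mathbf{1}_{X>0}]\ge\sigma/(2\sqrt c)$ -- i.e.\ exactly $t$ -- rather than $\sigma^2/\sqrt c$ and $\sigma^2/(2\sqrt c)$; as written your lower and upper bounds on $\mathbb{E}[X\mathbf{1}_{X>0}]$ are not comparable. (Also $(\mathbb{E}[X])^{2/3}$ in the displayed H\"older step should be $(\mathbb{E}|X|)^{2/3}$.) Second, the endgame is simpler than you fear and does not require invoking the fourth moment a second time: if $X\le t$ a.s., then $t\le\mathbb{E}[X\mathbf{1}_{X>0}]\le t\,\Prob(X>0)\le t$, so all inequalities are equalities, forcing $X=t$ almost surely; but then $\mathbb{E}[X]=t=\sigma/(2\sqrt c)>0$, contradicting $\mathbb{E}[X]=0$. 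So no analysis of distributions supported on $\{0,t\}$ is needed once the constant is right.
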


\noindent
To check $\mathbb{E}[X^4] \leq c \mathbb{E}[X^2]^2$ we often can use the following well-known inequality.

\begin{lemma}[Hypercontractive Inequality~\cite{Bonami1970}]
\label{lem41}
  Let $f = f(x_1,\ldots,x_n)$ be a polynomial of degree $r$ in $n$ variables $x_1,\ldots,x_n$ each with domain $\{-1,1\}$.
  Define a random variable $X$ by choosing a vector $(\epsilon_1,\ldots,\epsilon_n)\in \{-1,1\}^n$ uniformly at random and setting $X = f(\epsilon_1,\ldots,\epsilon_n)$.
  Then $\mathbb E[X^4]\leq 9^r\mathbb E[X^2]^2$.
\end{lemma}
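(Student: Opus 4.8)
The plan is to prove the statement by induction on the number of variables $n$, following the standard two-point (Bonami--Beckner) argument. The base case $n=0$ is trivial since $f$ is a constant. For the inductive step, I would write $f(x_1,\ldots,x_n) = g(x_1,\ldots,x_{n-1}) + x_n\, h(x_1,\ldots,x_{n-1})$, where $g$ has degree at most $r$ and $h$ has degree at most $r-1$; here $g$ collects the monomials of $f$ not containing $x_n$ and $x_n h$ collects those that do. Let $Y$ be the random variable obtained from $g$ and $Z$ the one obtained from $h$ by substituting uniform random signs for $x_1,\ldots,x_{n-1}$; then $X = Y + \epsilon_n Z$ with $\epsilon_n$ an independent uniform sign. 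Because $\epsilon_n$ is independent of $(Y,Z)$ and has vanishing odd moments, expanding $\mathbb{E}[X^2] = \mathbb{E}[(Y+\epsilon_n Z)^2]$ and $\mathbb{E}[X^4] = \mathbb{E}[(Y+\epsilon_n Z)^4]$ kills all terms with an odd power of $\epsilon_n$, leaving $\mathbb{E}[X^2] = \mathbb{E}[Y^2] + \mathbb{E}[Z^2]$ and $\mathbb{E}[X^4] = \mathbb{E}[Y^4] + 6\,\mathbb{E}[Y^2 Z^2] + \mathbb{E}[Z^4]$.

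Next I would invoke the inductive hypothesis on the $(n-1)$-variable polynomials: $\mathbb{E}[Y^4] \le 9^r \mathbb{E}[Y^2]^2$ and $\mathbb{E}[Z^4] \le 9^{r-1}\mathbb{E}[Z^2]^2$. The cross term $\mathbb{E}[Y^2 Z^2]$ I would bound by Cauchy--Schwarz, $\mathbb{E}[Y^2 Z^2] \le \sqrt{\mathbb{E}[Y^4]}\sqrt{\mathbb{E}[Z^4]} \le 9^{r-1/2}\,\mathbb{E}[Y^2]\,\mathbb{E}[Z^2]$ (using $\sqrt{9^r\cdot 9^{r-1}} = 9^{r-1/2}$ and that $Y$, $Z$ have degrees at most $r$, $r-1$). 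Writing $a = \mathbb{E}[Y^2] \ge 0$ and $b = \mathbb{E}[Z^2]\ge 0$, the goal reduces to the purely numerical inequality
\[
9^r a^2 + 6\cdot 9^{r-1/2} ab + 9^{r-1} b^2 \;\le\; 9^r (a+b)^2 .
\]
Dividing by $9^{r-1}$, this is $9 a^2 + 18 ab + b^2 \le 9(a+b)^2 = 9a^2 + 18 ab + 9 b^2$, i.e. $b^2 \le 9 b^2$, which holds trivially. This closes the induction and gives $\mathbb{E}[X^4] \le 9^r \mathbb{E}[X^2]^2$.

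The only delicate point is making sure the degree bookkeeping is honored so that the induction applies with exponent $r-1$ to $h$; everything else is elementary. In particular the main "obstacle'' is really just recognizing that one should track the exact power of $9$ for the lower-degree part $h$, rather than crudely using $9^r$ for both pieces — the slack $b^2 \le 9b^2$ is exactly what the induction needs and is what forces the constant $9$ (as opposed to something smaller). A remark worth adding is that the same argument with general independent $\{-1,1\}$-valued inputs goes through verbatim, and that the constant $9$ is not optimal for fixed small $r$ but is convenient and suffices for the applications via Lemma~\ref{lem32}.
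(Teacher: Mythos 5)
Your proof is correct. Note that the paper does not prove Lemma~\ref{lem41} at all---it is quoted as a known result of Bonami, so there is no in-paper argument to compare against; what you have written is the standard self-contained inductive proof of $(2,4)$-hypercontractivity for degree-$r$ polynomials on the Boolean cube. The decomposition $f=g+x_nh$, the vanishing of the odd-moment cross terms, the inductive bounds $\mathbb{E}[Y^4]\le 9^r\mathbb{E}[Y^2]^2$ and $\mathbb{E}[Z^4]\le 9^{r-1}\mathbb{E}[Z^2]^2$, the Cauchy--Schwarz estimate $\mathbb{E}[Y^2Z^2]\le 9^{r-1/2}\,\mathbb{E}[Y^2]\mathbb{E}[Z^2]$, and the final reduction to $b^2\le 9b^2$ all check out; the key point that the coefficient $6\cdot 9^{r-1/2}=2\cdot 9^r$ exactly matches the cross term of $9^r(a+b)^2$ is handled properly, and this is indeed what forces the base $9=3^2$. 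One small caveat: for the induction to close you should state the hypothesis for polynomials of degree \emph{at most} $r$ (as $h$ may have degree strictly less than $r-1$, and $g$ strictly less than $r$), which is harmless since the bound is monotone in $r$. Also, your closing remark that the argument goes through ``verbatim'' for general independent $\{-1,1\}$-valued inputs is only true if they remain uniform; for biased inputs the odd moments of $\epsilon_n$ no longer vanish and the constant degrades with the bias, so that sentence should be dropped or qualified.
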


If $f = f(x_1,\ldots,x_n)$ is a polynomial in $n$ variables $x_1,\ldots,x_n$ each with domain $\{-1,1\}$,
then it can be written as $f=\sum_{I\subseteq [n]} c_I \prod_{i\in S}x_i$, where $[n]=\{1,\ldots ,n\}$ and $c_I$ is a real for each $I\subseteq [n].$

The following dual, in a sense, form of the Hypercontractive Inequality was proved by Gutin and Yeo \cite{GutinYeo2011}; for a weaker result, see \cite{GutinKimSzeiderYeo09a}.

\begin{lemma}\label{lem:HI2}
Let $f = f(x_1,\ldots,x_n)$ be a polynomial in $n$ variables $x_1,\ldots,x_n$ each with domain $\{-1,1\}$ such that $f=\sum_{I\subseteq [n]} c_I \prod_{i\in S}x_i$.
Suppose that no variable $x_i$ appears in more than $\rho$ monomials of $f$.
Define a random variable $X$ by choosing a vector $(\epsilon_1,\ldots,\epsilon_n)\in \{-1,1\}^n$ uniformly at random and setting $X = f(\epsilon_1,\ldots,\epsilon_n)$.
Then $\mathbb E[X^4]\leq (2\rho+1)\mathbb E[X^2]^2$.
\end{lemma}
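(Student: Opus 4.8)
The plan is to pass to the Fourier expansion over $\{-1,1\}^n$. Writing $\chi_I=\prod_{i\in I}\epsilon_i$, the system $\{\chi_I:I\subseteq[n]\}$ is orthonormal under the uniform measure: $\mathbb{E}[\chi_I\chi_J]=\mathbb{E}[\chi_{I\triangle J}]$ equals $1$ if $I=J$ and $0$ otherwise. In particular $\mathbb{E}[X^2]=\sum_I c_I^2$. Since $\chi_I\chi_J=\chi_{I\triangle J}$ (each $\epsilon_i^2=1$), squaring $f$ gives $f^2=\sum_S d_S\chi_S$ with $d_S=\sum_{I\triangle J=S}c_Ic_J$, and orthonormality then yields $\mathbb{E}[X^4]=\mathbb{E}[(f^2)^2]=\sum_S d_S^2$. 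As $d_\emptyset=\sum_I c_I^2=\mathbb{E}[X^2]$, the inequality to prove reduces to
$$\sum_{S\neq\emptyset}d_S^2\ \le\ 2\rho\Big(\sum_I c_I^2\Big)^2 .$$

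The first main step is to estimate a single coefficient $d_S$ with $S\neq\emptyset$. Choose once and for all a coordinate $i(S)\in S$ for every nonempty $S$. Whenever $I\triangle J=S$, exactly one of $I,J$ contains $i(S)$; using $c_Ic_J=c_Jc_I$, this lets me write $d_S=2\sum_{I\in\mathcal M_{i(S)}}c_I\,c_{I\triangle S}$, where $\mathcal M_j=\{I:j\in I,\ c_I\neq0\}$ is the set of monomials in which $x_j$ occurs, so that $|\mathcal M_j|\le\rho$ by hypothesis. Cauchy--Schwarz then gives $d_S^2\le 4\,|\mathcal M_{i(S)}|\sum_{I\in\mathcal M_{i(S)}}c_I^2\,c_{I\triangle S}^2\le 4\rho\sum_{I\in\mathcal M_{i(S)}}c_I^2\,c_{I\triangle S}^2$.

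The second step is to sum over all nonempty $S$ and substitute $J:=I\triangle S$ (so $S=I\triangle J$ and $S\neq\emptyset$ iff $I\neq J$). The crucial point is that the side condition $i(S)\in I$ inherited from $I\in\mathcal M_{i(S)}$ orients each unordered pair $\{I,J\}$ with $I\neq J$ in exactly one way, because $i(I\triangle J)\in I\triangle J$ lies in precisely one of $I$ and $J$. Hence the double sum collapses to at most $4\rho\sum_{\{I,J\}:\,I\neq J}c_I^2c_J^2$, and the elementary identity $\sum_{\{I,J\}:I\neq J}c_I^2c_J^2=\tfrac12\big((\sum_Ic_I^2)^2-\sum_Ic_I^4\big)\le\tfrac12(\sum_Ic_I^2)^2$ gives $\sum_{S\neq\emptyset}d_S^2\le 2\rho(\sum_Ic_I^2)^2$; adding back $d_\emptyset^2=\mathbb{E}[X^2]^2$ yields $\mathbb{E}[X^4]\le(2\rho+1)\mathbb{E}[X^2]^2$.

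I expect that orientation-counting observation to be the only genuine subtlety. Bounding $|d_S|$ termwise, or summing the Cauchy--Schwarz estimates over \emph{ordered} pairs instead of unordered ones, loses a factor of $2$ and yields only the weaker constant $4\rho+1$; obtaining the sharp $2\rho+1$ forces the argument to count each monomial pair exactly once, which is precisely what picking a single witness coordinate $i(S)$ per set $S$ achieves. The remaining ingredients — the Fourier identities, the observation that including monomials with $c_{I\triangle S}=0$ in $\mathcal M_{i(S)}$ changes nothing, and the inequality on $\sum_{\{I,J\}}c_I^2c_J^2$ — are routine.
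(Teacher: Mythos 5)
Your argument is correct, and every step checks out: Parseval gives $\mathbb{E}[X^2]=\sum_I c_I^2$ and $\mathbb{E}[X^4]=\sum_S d_S^2$ for $f^2=\sum_S d_S\chi_S$; the identity $d_S=2\sum_{I\ni i(S)}c_I c_{I\triangle S}$ is valid because $i(S)\in S=I\triangle J$ lies in exactly one of $I,J$; Cauchy--Schwarz with the bound $|\mathcal M_{i(S)}|\le\rho$ gives $d_S^2\le 4\rho\sum_{I\in\mathcal M_{i(S)}}c_I^2c_{I\triangle S}^2$; and your orientation observation correctly shows that summing over $S\neq\emptyset$ picks up each unordered pair $\{I,J\}$, $I\neq J$, at most once, yielding $\sum_{S\neq\emptyset}d_S^2\le 2\rho(\sum_I c_I^2)^2$ and hence the stated constant $2\rho+1$. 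Note that the survey itself gives no proof of this lemma --- it is quoted from Gutin and Yeo \cite{GutinYeo2011} --- so there is no in-paper argument to compare against; the original proof bounds the fourth moment $\sum_{I\triangle J\triangle K\triangle L=\emptyset}c_Ic_Jc_Kc_L$ by grouping quadruples according to which coordinate witnesses $I\triangle J\neq\emptyset$, which is essentially the same combinatorial content as your per-coefficient Cauchy--Schwarz on $f^2$; your write-up via the Fourier expansion of $f^2$ is a clean, self-contained reorganization of it, and your remark that the unordered-pair counting is what saves the factor of $2$ over the naive $4\rho+1$ is exactly the right point to flag. (You silently corrected the statement's typo $\prod_{i\in S}x_i$ to $\prod_{i\in I}x_i$, which is the intended reading.)
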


The following lemma is easy to prove, cf. \cite{GutinKimSzeiderYeo09a}. In fact, the equality there is a special case of Parseval's Identity in Harmonic Analysis, cf. \cite{odonn}.

\begin{lemma}\label{lem:Pars}
Let $f = f(x_1,\ldots,x_n)$ be a polynomial in $n$ variables $x_1,\ldots,x_n$ each with domain $\{-1,1\}$ such that $f=\sum_{I\subseteq [n]} c_I \prod_{i\in I}x_i$. Define a random variable $X$ by choosing a vector $(\epsilon_1,\ldots,\epsilon_n)\in \{-1,1\}^n$ uniformly at random and setting $X = f(\epsilon_1,\ldots,\epsilon_n)$.
Then $\mathbb E[X^2]=\sum_{i\in I}c^2_I$.
\end{lemma}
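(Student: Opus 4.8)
The plan is to expand $X^2$ directly and use the fact that the monomials $\prod_{i\in I}x_i$ behave like an orthonormal system under the uniform distribution on $\{-1,1\}^n$. First I would write
$$X^2 = \Big(\sum_{I\subseteq [n]} c_I \prod_{i\in I}\epsilon_i\Big)\Big(\sum_{J\subseteq [n]} c_J \prod_{j\in J}\epsilon_j\Big) = \sum_{I,J\subseteq [n]} c_I c_J \prod_{i\in I}\epsilon_i \prod_{j\in J}\epsilon_j .$$
Since each $\epsilon_k\in\{-1,1\}$ satisfies $\epsilon_k^2=1$, the product $\prod_{i\in I}\epsilon_i \prod_{j\in J}\epsilon_j$ collapses to $\prod_{k\in I\triangle J}\epsilon_k$, where $I\triangle J$ denotes symmetric difference. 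Hence $X^2 = \sum_{I,J} c_I c_J \prod_{k\in I\triangle J}\epsilon_k$.

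Next I would apply linearity of expectation together with the independence of the $\epsilon_k$'s. For any $S\subseteq [n]$ we have $\Exp\big[\prod_{k\in S}\epsilon_k\big] = \prod_{k\in S}\Exp[\epsilon_k]$, and since $\Exp[\epsilon_k] = \tfrac12(1) + \tfrac12(-1) = 0$, this expectation equals $0$ when $S\neq\emptyset$ and equals $1$ when $S=\emptyset$. Applying this with $S = I\triangle J$ and noting that $I\triangle J = \emptyset$ exactly when $I=J$, we get
$$\Exp[X^2] = \sum_{I,J\subseteq [n]} c_I c_J \,\Exp\Big[\prod_{k\in I\triangle J}\epsilon_k\Big] = \sum_{I\subseteq [n]} c_I^2 ,$$
which is the claimed identity.

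There is no real obstacle here; the only point requiring minor care is the bookkeeping that $\prod_{i\in I}\epsilon_i\prod_{j\in J}\epsilon_j = \prod_{k\in I\triangle J}\epsilon_k$ (each index in $I\cap J$ contributes $\epsilon_k^2=1$), and the observation that this vanishing-expectation computation is precisely the statement that $\{\prod_{i\in I}x_i : I\subseteq[n]\}$ is an orthonormal basis for the space of functions on $\{-1,1\}^n$ — i.e. the identity is the Fourier/Parseval identity specialized to this setting. One can alternatively phrase the whole argument in that Harmonic Analysis language, but the elementary expansion above is shortest and self-contained.
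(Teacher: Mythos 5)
Your proof is correct and is exactly the standard argument the paper has in mind: the paper omits the proof, noting only that the identity is ``easy to prove'' and is a special case of Parseval's Identity, which is precisely the orthonormality-of-monomials computation you carry out. (You also correctly write the right-hand side as $\sum_{I\subseteq[n]} c_I^2$, silently fixing the paper's typo $\sum_{i\in I} c_I^2$.)
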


\section{Parameterizations of  MaxSat}\label{sec:sat}

In the well-known problem {\sc MaxSat}, we are given a CNF formula $F$ with $m$ clauses and asked to
determine the maximum number of clauses of $F$ that can be satisfied by a truth assignment.
Let us assign {\sc True} to each variable of $F$ with probability $1/2$ and observe that the probability of a clause
to be satisfied is at least $1/2$ and thus, by linearity of expectation, the expected number of satisfied clauses in $F$
is at least $m/2$. Thus, by the averaging argument, there exists a truth assignment to the variables of $F$ which satisfies
at least $m/2$ clauses of $F$.

Let us denote by ${\rm sat}(F)$ the maximum number of clauses of $F$ that can be satisfied by a truth assignment.
The lower bound ${\rm sat}(F)\ge m/2$ is tight as we have ${\rm sat}(H)=m/2$ if $H=(x_1)\wedge(\bar{x}_1)\wedge \cdots \wedge (x_{m/2})\wedge(\bar{x}_{m/2})$.
Consider the following parameterization of {\sc MaxSat} above tight lower bound introduced by Mahajan and Raman \cite{Mahajan97}.

\begin{quote}
{\sc MaxSat-A}($m/2$)\\ \nopagebreak
  \emph{Instance:} A CNF formula $F$ with $m$ clauses (clauses may appear several times in $F$) and a nonnegative integer $k$.\\
    \nopagebreak
  \emph{Parameter:} $k$.\\ \nopagebreak
  \emph{Question:} ${\rm sat}(F)\ge m/2 + k$?
 \end{quote}

Mahajan and Raman  \cite{Mahajan97} proved that {\sc MaxSat-A}($m/2$) admits a kernel with at most $6k+3$ variables and $10k$ clauses.
Crowston et al. \cite{CrowstonIPEC} improved this result, by obtaining a kernel with at most $4k$ variables and $(2\sqrt{5}+4)k$ clauses.
The improved result is a simple corollary of a new lower bound on ${\rm sat}(F)$ obtained in  \cite{CrowstonIPEC},
which is significantly stronger than the simple bound ${\rm sat}(F)\ge m/2$. We give the new lower bound below, in Theorem \ref{thm:CGJYbound}.

For a variable $x$ in $F$, let $m(x)$ denote
the number of pairs of unit of clauses $(x),(\bar{x})$ that have to be deleted from $F$ such that $F$ has no pair $(x),(\bar{x})$ any longer.
Let ${\rm var}(F)$ be the set of all variables in $F$ and let $\ddot{m}=\sum_{x\in {\rm var}(F)}m(x).$ The following is a stronger lower bound on
${\rm sat}(F)$ than $m/2$.

\begin{theorem}\label{thm:LSbound}
For a CNF formula $F$, we have ${\rm sat}(F)\ge \ddot{m}/2 +  \pp (m-\ddot{m})$, where $\pp =(\sqrt{5}-1)/2\approx 0.618$.
\end{theorem}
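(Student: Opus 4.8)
The plan is to reduce the statement to the classical Lieberherr--Specker lower bound for $2$-satisfiable formulas, using a decomposition of $F$ dictated by the quantity $\ddot m$. Recall that a CNF formula is called $2$-satisfiable if every two of its clauses can be satisfied by a common assignment; the bound $\pp\cdot(\text{number of clauses})$ is exactly what Lieberherr and Specker proved for such formulas, and $\pp=(\sqrt5-1)/2$ is characterised by the identity $\pp^2=1-\pp$.

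First I would split $F$ as a disjoint union of multisets of clauses $F=F_1\uplus F_2$, where $F_1$ is a maximum sub-multiset consisting of complementary unit-clause pairs $(x),(\bar x)$ (for each variable $x$ it contains equally many copies of $(x)$ and of $(\bar x)$, as many pairs as $F$ allows), and $F_2$ is everything else; by the definition of $m(x)$ and $\ddot m$ we have $|F_1|=\ddot m$ and $|F_2|=m-\ddot m$. The point is that $F_2$ is $2$-satisfiable: two clauses can fail to have a common satisfying assignment only when they form a complementary pair of unit clauses (any non-unit clause, and any two units that are not complementary, are always jointly satisfiable), and in $F_2$ at most one of $(x),(\bar x)$ remains for every variable $x$. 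Hence the Lieberherr--Specker theorem gives $\mathrm{sat}(F_2)\ge \pp\,|F_2|=\pp(m-\ddot m)$.

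Next I would observe that $F_1$ contributes a fixed number of satisfied clauses, namely $\ddot m/2$, under every truth assignment: for each variable $x$ an assignment satisfies all copies of exactly one of $(x),(\bar x)$ in $F_1$ and none of the other, and $F_1$ contains equally many of each, so the total over $F_1$ is $|F_1|/2=\ddot m/2$ independently of the assignment. Choosing an assignment $\tau$ that is optimal for $F_2$ (and arbitrary on the remaining variables) therefore yields $\mathrm{sat}(F)\ge \mathrm{sat}_\tau(F_1)+\mathrm{sat}_\tau(F_2)=\ddot m/2+\mathrm{sat}(F_2)\ge \ddot m/2+\pp(m-\ddot m)$, as desired.

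The only genuinely non-trivial ingredient is the Lieberherr--Specker bound, so if a self-contained argument is wanted I would include its short probabilistic proof applied to $F_2$. After replacing some variables by their negations so that every unit clause of $F_2$ becomes positive (possible since $F_2$ has no complementary unit pair), set each variable to \textsc{True} independently with probability $\pp$. Then a positive unit clause is satisfied with probability $\pp$, a clause of length $2$ with probability at least $1-\pp^2=\pp$ (here the identity $\pp^2=1-\pp$ is used), and a clause of length $\ge 3$ with even larger probability; by linearity of expectation the expected number of satisfied clauses is at least $\pp\,|F_2|$, and the averaging argument produces an assignment meeting this bound. I expect the case analysis showing that every clause is satisfied with probability at least $\pp$ (together with the harmless variable-flipping reduction) to be the step needing the most care; the decomposition and the observation that $F_1$ contributes a constant are both immediate.
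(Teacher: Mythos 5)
Your proof is correct and is essentially the paper's own argument: the paper reduces Theorem~\ref{thm:LSbound} to the assertion that every $2$-satisfiable formula satisfies ${\rm sat}(F)\ge\pp m$ --- exactly your decomposition into the complementary unit-clause pairs (which contribute exactly half their number under every assignment) and the $2$-satisfiable remainder --- and then proves that assertion with the same Yannakakis-style biased random assignment using $1-\pp^2=\pp$. The only point worth noting is that you (correctly) read $\ddot m$ as counting the deleted \emph{clauses} rather than the deleted \emph{pairs}; this is the convention under which the stated bound is tight for formulas consisting solely of pairs $(x),(\bar x)$, whereas the literal ``number of pairs'' reading would make the claimed inequality false for such formulas.
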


A CNF formula $F$ is \emph{$t$-satisfiable} if for any $t$ clauses in $F$,
there is a truth assignment which satisfies all of them. It is easy to check that $F$ is 2-satisfiable if and only if
$\ddot{m}=0$ and clearly Theorem \ref{thm:LSbound} is equivalent to the assertion that if $F$ is 2-satisfiable then ${\rm sat}(F)\ge \pp m$.
The proof of this assertion by Lieberherr and Specker \cite{LieberherrSpecker81} is quite long; Yannakakis \cite{Yannakakis94} gave the following short probabilistic
proof. For $x\in {\rm var}(F)$, let the probability of $x$ being assigned {\sc True} be $\pp$ if $(x)$ is in $F$, $1-\pp$ if
$(\bar{x})$ is in $F$, and $1/2$, otherwise, independently of the other variables.
Let us bound the probability $p(C)$ of a clause $C$ to be satisfied. If $C$ contains only one literal,
then, by the assignment above,  $p(C)=\pp.$ If $C$ contains two literals, then, without loss of generality, $C=(x \vee y)$. Observe that
the probability of $x$ assigned {\sc False} is at most $\pp$ (it is $\pp$ if $(\bar{x})$ is in $F$). Thus, $p(C)\ge 1-\pp^2.$ It remains to observe that
$ 1-\pp^2=\pp.$ Now to obtain the bound ${\rm sat}(F)\ge \pp m$ apply linearity of expectation and the averaging argument.

Note that $\pp m$ is an {\em asymptotically} tight lower bound: for each $\epsilon >0$ there are 2-satisfiable CNF formulae $F$ with
${\rm sat}(F) < m(\pp + \epsilon)$ \cite{LieberherrSpecker81}. Thus, the following problem stated by Mahajan and Raman \cite{Mahajan97} is natural.

\begin{quote}
{\sc Max-2S-Sat-A}($\pp m$)\\ \nopagebreak
  \emph{Instance:} A 2-satisfiable CNF formula $F$ with $m$ clauses (clauses may appear several times in $F$) and a nonnegative integer $k$.\\
    \nopagebreak
  \emph{Parameter:} $k$.\\ \nopagebreak
  \emph{Question:} ${\rm sat}(F)\ge \pp m + k$?
 \end{quote}

Mahajan and Raman \cite{Mahajan97} conjectured that {\sc Max-2S-Sat-A}($\pp m$) is fpt.
Crowston et al. \cite{CrowstonIPEC} solved this conjecture in the affirmative; moreover, they obtained a kernel with at most $(7+3\sqrt{5})k$ variables. This result is an easy corollary from a lower bound on ${\rm sat}(F)$ given in Theorem \ref{thm:CGJYbound}, which, for 2-satisfiable CNF formulas, is stronger than the one in Theorem \ref{thm:LSbound}. The main idea of \cite{CrowstonIPEC} is to obtain a lower bound on ${\rm sat}(F)$ that includes the number of variables as a factor. It is clear that for general CNF formula $F$ such a bound is impossible. For consider a formula containing a single clause $c$ containing a large number of variables. We can arbitrarily increase the number of variables in the formula, and the maximum number of satisfiable clauses will always be 1.
We therefore need a reduction rule that cuts out `excess' variables. Our reduction rule is based on the notion of an expanding formula given below.
Lemma \ref{lem:expformula} and Theorem \ref{thm:expth} show the usefulness of this notion.

A CNF formula $F$ is called {\em expanding} if for each $X\subseteq {\rm var}(F)$, the number of clauses containing at least one variable from $X$ is at least $|X|$ \cite{FKS2002,Sze2004}. The following lemma and its parts were proved by many authors, see, e.g., Fleischner, Kullmann and Szeider \cite{FKS2002}, Lokshtanov \cite{LokshtanovPhD} and Szeider \cite{Sze2004}.

\begin{lemma}\label{lem:expformula}
Let $F$ be a CNF formula and let $V$ and $C$ be its sets of variables and clauses.
There exists a subset $C^*\subseteq C$ that can
be found in polynomial time, such that the formula $F'$ with clauses $C\setminus C^*$ and
variables $V\setminus V^*$, where $V^*={\rm var}(C^*)$, is expanding. Moreover, ${\rm sat}(F)={\rm sat}(F')+|C^*|.$
\end{lemma}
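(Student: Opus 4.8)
The plan is to build the set $C^*$ greedily by repeatedly applying a single local operation: while the current formula is not expanding, find a witnessing set $X$ of variables that violates the expansion condition, move all clauses touching $X$ (together with those variables) into $C^*$, and iterate. First I would observe that by definition of \emph{expanding}, non-expansion means there is a nonempty $X \subseteq \var(F)$ such that the set $C_X$ of clauses containing at least one variable of $X$ has $|C_X| < |X|$. I would take such an $X$ (it can be found in polynomial time — see below), remove the clauses $C_X$ and the variables $\var(C_X) \supseteq X$ from $F$, and recurse on the remaining formula $F_1$. Since each step removes at least one variable, the process terminates after at most $|V|$ rounds, and the union of all removed clause-sets is the desired $C^*$; the residual formula $F'$ is expanding because the process only stops when no violating $X$ remains.

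The second step is to verify the sat-identity ${\rm sat}(F) = {\rm sat}(F') + |C^*|$. The key structural point is that the variables in $V^* = \var(C^*)$ and the variables in $V \setminus V^*$ are \emph{disjoint}, and every clause of $F$ either lies entirely in $C^*$ or lies entirely in $C \setminus C^*$ and uses only variables of $V \setminus V^*$ — so $F$ decomposes into two variable-disjoint subformulas. Hence any truth assignment splits into an assignment on $V^*$ and one on $V \setminus V^*$, and the number of satisfied clauses adds. This gives ${\rm sat}(F) \le {\rm sat}(F') + |C^*|$ trivially, and for the reverse inequality I need ${\rm sat}(C^*) = |C^*|$, i.e. the clauses of $C^*$ are simultaneously satisfiable. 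This follows from a Hall-type / deficiency argument: at each round we removed a block $C_X$ with $|C_X| < |X| \le |\var(C_X)|$, so in the bipartite incidence graph between $C^*$ and $V^*$ every clause-block has strictly more private-enough variables than clauses; more carefully, one shows the incidence (multi-)graph restricted to $C^*$ satisfies Hall's condition for a matching saturating $C^*$, and a matching of clauses to distinct variables lets us satisfy every clause of $C^*$ by setting each matched variable to the polarity its clause wants.

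The main obstacle I anticipate is exactly this matching/Hall argument: I must check that the ``deficiency $<0$'' property is preserved globally across all rounds, not just locally within one round, because a variable removed in round $i$ may have occurred in clauses that were themselves removed only because of an $X$ chosen in a later-analyzed block. The clean way is to prove directly that $C^*$ (as finally assembled) has the property that for every $S \subseteq C^*$ the set of variables incident to $S$ has size $\ge |S|$: if not, take a minimal counterexample $S$ and trace which round first removed a clause of $S$; the witness set $X$ of that round contributes enough fresh variables to $S$ to contradict minimality. Once Hall's condition is in hand, K\H{o}nig/Hall gives the clause-to-variable matching and the satisfiability of $C^*$ follows, completing the argument. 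The polynomial-time claim is routine: a single violating set $X$ (or a certificate that none exists) can be found via a max-flow / matching computation on the variable-clause incidence bipartite graph, and there are at most $|V|$ rounds.
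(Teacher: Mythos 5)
There is a genuine gap, and it is in the removal step rather than in the Hall-type argument you were worried about. Your rule deletes $C_X$ together with \emph{all} of ${\rm var}(C_X)$, but a variable $w\in{\rm var}(C_X)\setminus X$ will in general also occur in clauses outside $C_X$ (the definition of $C_X$ only forces clauses meeting $X$ to be collected, not clauses meeting $w$). Hence your central structural claim --- that every clause of $C\setminus C^*$ uses only variables of $V\setminus V^*$, so that $F$ splits into two variable-disjoint subformulas --- is false, and with it both directions of the sat-identity. Concretely, take $F=(x_1\vee x_2\vee w)\wedge(w\vee\bar{z})\wedge(\bar{w}\vee z)\wedge(z)\wedge(\bar{z})\wedge(z)\wedge(\bar{z})$. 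The only violating set is $X=\{x_1,x_2\}$, with $C_X=\{(x_1\vee x_2\vee w)\}$; your rule then deletes $w$ as well, the residual formula consists of six unit clauses on $z$ with ${\rm sat}(F')=3$, yet ${\rm sat}(F)=5\neq 1+3$. (The survey states the lemma with $V^*={\rm var}(C^*)$ and gives no proof; in the cited sources the surviving clauses are kept intact, and the variables actually removed form the domain of an autarky, not all of ${\rm var}(C^*)$.)

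The repair is to delete only $X$ itself: since $C_X$ already contains every clause meeting $X$, the surviving clauses contain no variable of $X$ and need not be altered at all. But then the matching you need is one of $C_X$ into $X$, not into ${\rm var}(C_X)$, because the assignment satisfying $C^*$ must avoid every variable still occurring in $C\setminus C^*$ --- i.e., it must be an \emph{autarky}. An arbitrary witnessing set $X$ does not admit such a matching (e.g.\ $X=\{x_1,x_2,x_3\}$ with $C_X=\{(x_1\vee a),(\bar{x}_1\vee b)\}$, where $x_2,x_3$ occur in no clause: these two clauses cannot both be satisfied from within $X$), so you must take $X$ inclusion-minimal, or better, fix a maximum matching $M$ of the variable--clause incidence graph and let $X$ be the set of variables reachable from the $M$-unsaturated variables by $M$-alternating paths. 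Then $N(X)=C_X$, every clause of $C_X$ is $M$-matched to a distinct variable of $X$ (setting each matched variable to satisfy its clause gives the autarky, whence ${\rm sat}(F)=|C_X|+{\rm sat}(F')$), and $M$ restricted to the residual instance saturates all remaining variables, so by Hall's theorem a single round already leaves an expanding formula; your cross-round ``minimal counterexample'' bookkeeping is then unnecessary.
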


The following result was shown by Crowston et al. \cite{CrowstonIPEC}. The proof is nontrivial and consists of a deterministic algorithm for finding the corresponding truth assignment and a detailed combinatorial analysis of the algorithm.

\begin{theorem}\label{thm:expth}
Let $F$ be an expending 2-satisfiable CNF formula with $n$ variables and $m$ clauses.
Then ${\rm sat}(F)\ge \pp m + n(2-3\pp )/2.$
\end{theorem}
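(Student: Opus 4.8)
The goal is to bound $\mathrm{sat}(F)$ from below for an expanding $2$-satisfiable formula $F$, incorporating the variable count $n$ as well as the clause count $m$. The natural starting point is Yannakakis's probabilistic argument for Theorem~\ref{thm:LSbound}: assign each variable $x$ the value \textsc{True} with probability $\pp$ if $(x)\in F$, with probability $1-\pp$ if $(\bar x)\in F$, and with probability $1/2$ otherwise, independently. Under this assignment each unit clause is satisfied with probability exactly $\pp$ and each clause of size $\ge 2$ with probability at least $1-\pp^2=\pp$, so $\Exp[\text{\#satisfied}]\ge \pp m$. To squeeze out the extra $n(2-3\pp)/2$ term, the plan is to show that enough clauses are satisfied with probability strictly larger than $\pp$, and to account for the ``surplus'' probability in terms of $n$. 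A clause of size $\ge 3$, or a clause of size $2$ neither of whose variables occurs as a unit clause, is satisfied with probability at least $3/4>\pp$; more generally one wants to argue that the clauses responsible only for the bare $\pp$ bound are the unit clauses and the $2$-clauses both of whose literals are ``pinned'' by unit clauses — and that an expanding formula cannot have too many of those relative to $n$.

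The core of the argument, following the remark after the theorem statement, is not purely probabilistic but a deterministic algorithm together with a combinatorial charging analysis. First I would use Lemma~\ref{lem:expformula} only implicitly (here $F$ is already assumed expanding). Then I would set up a greedy/branching procedure: repeatedly pick a variable $x$ that appears in a unit clause, decide its value so as to satisfy that unit clause (and simplify $F$ accordingly by removing satisfied clauses and deleting falsified literals), and track two quantities — the number of clauses satisfied so far and the number of variables fixed so far. When no unit clauses remain, the residual formula has all clauses of size $\ge 2$; here one falls back on a bound of the form $\mathrm{sat}\ge (1-\pp^2)m' = \pp m'$ for the $m'$ surviving clauses, or better, on $\mathrm{sat}\ge \tfrac34 m'$ if all surviving clauses have size $\ge 2$. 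The accounting must show that each variable fixed in the unit-clause phase either (i) lets us satisfy a clause that a random assignment would only satisfy with probability $\pp$, thereby contributing a deterministic full unit instead of $\pp$, i.e.\ a surplus of $1-\pp$, or (ii) creates new unit clauses whose eventual satisfaction is charged back. Using expansion one shows the number of variables consumed is comparable to the number of clauses removed, so the surplus aggregates to $\Theta(n)$; matching constants forces the coefficient $(2-3\pp)/2$, which is exactly the per-variable surplus $1-\pp$ minus a loss term coming from the transition to the residual formula (note $1-\pp = (3-\sqrt5)/2$, while $(2-3\pp)/2 = (7-3\sqrt5)/4$, consistent with a $1/2$-weighting of the surplus).

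The main obstacle, and the reason the proof is ``nontrivial'' as the authors warn, is the combinatorial bookkeeping when unit clauses cascade: fixing one variable to satisfy $(x)$ may shrink a $2$-clause $(\bar x\vee y)$ to the unit clause $(y)$, and $y$ might also occur in other clauses — so one must design the branching/greedy order and the potential function carefully to ensure that the expansion hypothesis is genuinely used (it must be invoked on the residual variable set at each stage, or via an amortized argument over the whole run) and that no variable's surplus is double-counted. A secondary difficulty is handling clauses that contain several pinned variables, or a pinned variable together with free ones, where the probability of satisfaction lies strictly between $\pp$ and $1$; the cleanest route is probably to reduce first (by a reduction rule eliminating ``excess'' variables, as hinted in the text) to a formula where every clause of size $\ge 2$ already guarantees probability $\ge 3/4$ for each of its variables when that variable is free, and then run the deterministic phase only on the pinned part. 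Once the potential-function invariant ``(clauses satisfied) $\ge \pp\cdot(\text{clauses removed}) + \tfrac12(2-3\pp)\cdot(\text{variables removed})$'' is established as a loop invariant and shown to be preserved by each step and by the terminal residual-formula bound, the theorem follows by summing over the run.
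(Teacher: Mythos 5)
The paper does not reproduce the proof of this theorem: it only records that the argument of Crowston et al.\ \cite{CrowstonIPEC} is a deterministic assignment-finding algorithm followed by a detailed combinatorial analysis. Your plan matches that description in spirit, but what you have written is a strategy outline rather than a proof, and every load-bearing step is deferred. The central claim of your argument is the loop invariant
``(clauses satisfied) $\ge \pp\cdot(\text{clauses removed}) + \tfrac12(2-3\pp)\cdot(\text{variables removed})$'',
yet you never verify it for even a single step of the greedy procedure. This is precisely where the difficulty lives: when you fix $x$ to satisfy $(x)$, clauses containing $\bar x$ shrink, may become unit clauses that conflict with other unit clauses already present or created, and may ultimately be falsified; you acknowledge this cascade but offer no mechanism (no concrete potential function, no case analysis) that survives it. Likewise, the expansion hypothesis is invoked only in the form ``using expansion one shows the number of variables consumed is comparable to the number of clauses removed'' --- but converting the Hall-type condition (every set $X$ of variables meets at least $|X|$ clauses) into a per-variable surplus of exactly $(2-3\pp)/2$ is the entire content of the theorem, not a detail. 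Note that the easy regime is the one with no unit clauses at all, where $\mathrm{sat}(F)\ge 3m/4$ and expansion gives $m\ge n$, so the bound follows since $3/4-\pp=(5-2\sqrt5)/4>(2-3\pp)/2$; the case you must actually handle is the interleaving of unit-clause propagation with the residual bound, and that is exactly the part left blank.

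A smaller but telling point: your sanity check of the constant is wrong. You assert that $(2-3\pp)/2$ is ``consistent with a $1/2$-weighting of the surplus'' $1-\pp$, but $(1-\pp)/2=(3-\sqrt5)/4\approx 0.191$ while $(2-3\pp)/2=(7-3\sqrt5)/4\approx 0.073$; these are not equal, so the heuristic by which you propose to ``force'' the coefficient does not in fact produce it. Since the coefficient is never derived and the invariant never established, the proposal as it stands does not constitute a proof of the theorem; it is a reasonable research plan whose execution is the nontrivial work the paper attributes to \cite{CrowstonIPEC}.
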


Lemma \ref{lem:expformula} and Theorem \ref{thm:expth} imply the following:

\begin{theorem}\label{thm:CGJYbound}
Let $F$ be a 2-satisfiable CNF formula and let $V$ and $C$ be its sets of variables and clauses.
There exists a subset $C^*\subseteq C$ that can
be found in polynomial time, such that the formula $F'$ with clauses $C\setminus C^*$ and variables $V\setminus V^*$, where $V^*={\rm var}(C^*)$, is expanding. Moreover, we have $${\rm sat}(F)\ge \pp m + (1-\pp)m^* + (n-n^*)(2-3\pp )/2,$$ where $m=|C|,$ $m^*=|C^*|$, $n=|V|$ and $n^*=|V^*|.$
\end{theorem}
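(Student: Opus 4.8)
The plan is to combine Lemma~\ref{lem:expformula} with Theorem~\ref{thm:expth} in the obvious way. First I would invoke Lemma~\ref{lem:expformula} to extract a subset $C^*\subseteq C$, computable in polynomial time, so that the reduced formula $F'$ on clause set $C\setminus C^*$ and variable set $V\setminus V^*$ (with $V^*=\var(C^*)$) is expanding, and so that ${\rm sat}(F)={\rm sat}(F')+|C^*|$. This is exactly the hypothesis we want to reach in the statement, so all that remains is to lower-bound ${\rm sat}(F')$.

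Next I would check that $F'$ is still 2-satisfiable: since $F'$ is obtained from the 2-satisfiable formula $F$ merely by deleting clauses, any two clauses of $F'$ are two clauses of $F$ and hence are simultaneously satisfiable, so 2-satisfiability is inherited. Now $F'$ is an expanding 2-satisfiable CNF formula with $n-n^*$ variables and $m-m^*$ clauses (using $m=|C|$, $m^*=|C^*|$, $n=|V|$, $n^*=|V^*|$; here I am implicitly using that the clauses of $C^*$ account for exactly $m^*$ of the $m$ clauses, so $F'$ has $m-m^*$ clauses). Applying Theorem~\ref{thm:expth} to $F'$ gives
$$ {\rm sat}(F')\ge \pp (m-m^*) + (n-n^*)(2-3\pp)/2. $$
Adding $|C^*|=m^*$ to both sides and using ${\rm sat}(F)={\rm sat}(F')+m^*$ yields
$$ {\rm sat}(F)\ge \pp (m-m^*) + m^* + (n-n^*)(2-3\pp)/2 = \pp m + (1-\pp)m^* + (n-n^*)(2-3\pp)/2, $$
which is precisely the claimed inequality. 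The polynomial-time computability of $C^*$ is carried over verbatim from Lemma~\ref{lem:expformula}.

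There is essentially no obstacle here: the theorem is a pure corollary, and the only points needing a line of justification are the inheritance of 2-satisfiability under clause deletion and the bookkeeping that $F'$ has exactly $m-m^*$ clauses and $n-n^*$ variables. The genuine work — the deterministic algorithm and combinatorial analysis establishing Theorem~\ref{thm:expth}, and the structural argument behind Lemma~\ref{lem:expformula} — is already packaged in those two statements, so I would simply cite them.
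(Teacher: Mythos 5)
Your proof is correct and follows exactly the route the paper intends: the paper states Theorem~\ref{thm:CGJYbound} as an immediate consequence of Lemma~\ref{lem:expformula} and Theorem~\ref{thm:expth}, and your combination of the two (including the check that 2-satisfiability is preserved under clause deletion and the bookkeeping $\pp(m-m^*)+m^*=\pp m+(1-\pp)m^*$) is precisely that derivation.
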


Let us turn now to 3-satisfiable CNF formulas. If $F$ is $3$-satisfiable then it is not hard to check that the forbidden sets of clauses are pairs of
the form $\{x\}, \{\bar{x}\}$ and triplets of the form $\{x\}, \{y\},
\{\bar{x}, \bar{y}\}$ or $\{x\}, \{\bar{x}, y\}, \{\bar{x}, \bar{y}\}$, as
well as any triplets that can be derived from these by switching positive
literals with negative literals.

Lieberherr and Specker \cite{LieberherrSpecker82} and, later, Yannakakis  \cite{Yannakakis94} proved the following:
if $F$ is $3$-satisfiable then ${\rm sat}(F) \ge \frac{2}{3}w({\cal C}(F))$. This bound is also asymptotically tight.
Yannakakis \cite{Yannakakis94} gave a probabilistic proof which is similar to his proof for 2-satisfiable formulas, but requires consideration
of several cases and, thus, not as short as for 2-satisfiable formulas. For details of his proof, see, e.g., Gutin, Jones and Yeo \cite{GutJonYeoFCT11}
and Jukna \cite{Juk2001} (Theorem 20.6). Yannakakis's approach was extended by Gutin, Jones and Yeo \cite{GutJonYeoFCT11} to prove the following theorem using a quite complicated probabilistic distribution for a random truth assignment.

\begin{theorem}\label{thm:3s-sat}
Let $F$ be an expanding 3-satisfiable CNF formula with $n$ variables and $m$ clauses. Then ${\rm sat}(F)\ge \frac{2}{3}m + \rho n,$
where  $\rho(>0.0019)$ is a constant.
\end{theorem}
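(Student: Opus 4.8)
The plan is to extend Yannakakis's probabilistic argument for $3$-satisfiable formulas, exactly as we did for the $2$-satisfiable case in Theorem~\ref{thm:expth}, but now tracking the ``excess'' contributed by the expansion property. First I would set up a random truth assignment in which each variable $x$ is assigned {\sc True} with a probability $p_x$ that depends on the local structure of the unit and binary clauses containing $x$ and $\bar x$; for variables not constrained by short clauses we keep $p_x=1/2$, and for the constrained ones we use the carefully chosen biased probabilities from \cite{GutJonYeoFCT11} that already yield ${\rm sat}(F)\ge \tfrac23 m$. The point of the refinement is that whenever a variable is assigned probability exactly $1/2$, or more generally whenever the clauses around a variable are not ``tight,'' the clauses touching that variable are satisfied with probability strictly above the worst-case $2/3$, so each such variable contributes a positive additive surplus to $\Exp[{\rm sat}]$.

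The key steps, in order, would be: (1) classify variables by the pattern of unit/binary clauses on $\{x\},\{\bar x\}$, using the list of forbidden configurations for $3$-satisfiable formulas quoted just before the theorem, and assign $p_x$ accordingly; (2) for each clause $C$, lower-bound $p(C)$, the probability $C$ is satisfied, as a sum $\tfrac23 + (\text{nonnegative local terms})$, where the extra terms are strictly positive precisely when some incident variable sits in a ``slack'' configuration; (3) use the expanding hypothesis to charge these per-clause surpluses to variables: since for every $X\subseteq{\rm var}(F)$ at least $|X|$ clauses meet $X$, an averaging/Hall-type counting argument lets us guarantee a total surplus of at least $\rho n$ for a universal constant $\rho>0$; (4) apply linearity of expectation to get $\Exp[{\rm sat}(F)]\ge \tfrac23 m+\rho n$, and finish by the averaging argument to obtain a truth assignment meeting the bound. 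Extracting the explicit numerical value $\rho>0.0019$ is then a finite (if tedious) optimization over the constantly many local configurations and the associated choices of $p_x$.

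The main obstacle will be step~(2)–(3): unlike the $2$-satisfiable case, where a single binary clause $(x\vee y)$ gives the clean identity $1-\pp^2=\pp$, here the biased distribution of \cite{GutJonYeoFCT11} is genuinely multi-case, and the binary and ternary clauses interact (a ternary clause $\{\bar x,\bar y,\bar z\}$ whose variables are all pulled toward {\sc True} by incident unit clauses is the worst case), so one must verify that the surplus terms really are nonnegative across \emph{all} configurations and that they do not cancel when several biased variables share a clause. A secondary difficulty is that the expansion argument must assign the surplus of each clause to a variable in an injective-enough way; the natural fix is to apply Lemma~\ref{lem:expformula}-style reasoning or a fractional relaxation of Hall's theorem so that the $\ge|X|$ clauses-per-$X$ condition translates into a genuine $\Omega(n)$ aggregate gain rather than double-counting the same clause for many variables. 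Once these are in place, combining with Lemma~\ref{lem:expformula} (as Theorem~\ref{thm:CGJYbound} combines it with Theorem~\ref{thm:expth}) would also give the ``non-expanding'' version with a $+\rho(n-n^*)$ term, though the statement here is confined to the expanding case.
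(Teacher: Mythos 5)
You should first know that the survey does not actually prove Theorem~\ref{thm:3s-sat}: it quotes it from \cite{GutJonYeoFCT11} and describes the proof in a single sentence as an extension of Yannakakis's probabilistic argument ``using a quite complicated probabilistic distribution for a random truth assignment.'' At that level of resolution your strategy coincides with the real one. (One small mischaracterization: Theorem~\ref{thm:expth} for the 2-satisfiable case was, per the survey, proved by a \emph{deterministic} algorithm with a combinatorial analysis, not by the probabilistic method, so ``exactly as we did for the 2-satisfiable case'' is not an accurate anchor.)

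The genuine gap is that everything that makes the theorem true is deferred. The load-bearing claim in your step (2) --- that $p(C)$ decomposes as $\tfrac23$ plus nonnegative local terms which are \emph{strictly positive} whenever an incident variable is in a ``slack'' configuration --- is false as stated for the natural Yannakakis-style distribution, and no replacement distribution is exhibited. Concretely, a unit clause $(x)$ forces $p_x\ge\tfrac23$ and is then satisfied with probability exactly $\tfrac23$; a binary clause $(x\vee y)$ with $(\bar y)\in F$ and $x$ unbiased is satisfied with probability $1-\tfrac12\cdot\tfrac23=\tfrac23$ exactly, even though $x$ sits at probability $1/2$. So many clauses carry zero surplus, the surplus cannot be attached to an arbitrary incident variable, and the whole content of the theorem is in (a) designing a distribution (depending on the local configuration around each variable, and consistent across variables sharing clauses) for which \emph{every} clause still clears $\tfrac23$, and (b) routing a strictly positive surplus of at least $\rho$ to each variable through the clauses guaranteed by the expansion property (e.g.\ via a system of distinct representatives from Hall's theorem), verified over all the finitely many but numerous configurations, including the interacting worst cases you yourself flag. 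Your proposal names these as obstacles but does not resolve them, and the value $\rho>0.0019$ is asserted to come out of an optimization that is never set up. As it stands the argument is a correct identification of the proof strategy of \cite{GutJonYeoFCT11} rather than a proof.
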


This theorem and Lemma \ref{lem:expformula} imply the following:

\begin{theorem}\label{thm:GJYbound}
Let $F$ be a 3-satisfiable CNF formula and let $V$ and $C$ be its sets of variables and clauses.
There exists a subset $C^*\subseteq C$ that can
be found in polynomial time, such that the formula $F'$ with clauses $C\setminus C^*$ and variables $V\setminus V^*$, where $V^*={\rm var}(C^*)$, is expanding. Moreover, we have $${\rm sat}(F)\ge \frac{2}{3} m + \frac{1}{3}m^* + \rho(n-n^*),$$ where $\rho(>0.0019)$ is a constant, $m=|C|,$ $m^*=|C^*|$, $n=|V|$ and $n^*=|V^*|.$
\end{theorem}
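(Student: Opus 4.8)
The plan is to derive Theorem~\ref{thm:GJYbound} as a direct consequence of Lemma~\ref{lem:expformula} and Theorem~\ref{thm:3s-sat}, in exactly the same way Theorem~\ref{thm:CGJYbound} follows from Lemma~\ref{lem:expformula} and Theorem~\ref{thm:expth}. First I would invoke Lemma~\ref{lem:expformula} applied to the given $3$-satisfiable CNF formula $F$: this produces, in polynomial time, a subset $C^*\subseteq C$ such that the formula $F'$ with clause set $C\setminus C^*$ and variable set $V\setminus V^*$ (where $V^*={\rm var}(C^*)$) is expanding, and moreover ${\rm sat}(F)={\rm sat}(F')+|C^*|$. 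This already establishes the structural part of the statement (existence of the claimed $C^*$ found in polynomial time), so the only remaining task is the numerical lower bound on ${\rm sat}(F)$.

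For the bound, the key observation is that $F'$ is still $3$-satisfiable: it is a subformula of $F$ obtained by deleting clauses (and the variables occurring only in them), and every subset of clauses of $F'$ is a subset of clauses of $F$, hence simultaneously satisfiable. Thus Theorem~\ref{thm:3s-sat} applies to $F'$. Writing $m=|C|$, $m^*=|C^*|$, $n=|V|$, $n^*=|V^*|$, the formula $F'$ has $m-m^*$ clauses and $n-n^*$ variables, so Theorem~\ref{thm:3s-sat} gives
$${\rm sat}(F')\ \ge\ \tfrac{2}{3}(m-m^*)+\rho(n-n^*).$$
Combining this with the identity ${\rm sat}(F)={\rm sat}(F')+m^*$ from Lemma~\ref{lem:expformula}, we obtain
$${\rm sat}(F)\ \ge\ \tfrac{2}{3}(m-m^*)+\rho(n-n^*)+m^*\ =\ \tfrac{2}{3}m+\tfrac{1}{3}m^*+\rho(n-n^*),$$
which is precisely the claimed inequality, with $\rho(>0.0019)$ the constant from Theorem~\ref{thm:3s-sat}.

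There is essentially no obstacle here: the proof is a short bookkeeping argument once Theorem~\ref{thm:3s-sat} is in hand. The one point worth stating explicitly is the closure property just noted --- that deleting clauses preserves $3$-satisfiability --- since this is what licenses applying Theorem~\ref{thm:3s-sat} to $F'$ rather than only to $F$; the analogous remark underlies Theorem~\ref{thm:CGJYbound} as well. All the genuine difficulty has been front-loaded into Theorem~\ref{thm:3s-sat} (the intricate probabilistic distribution and case analysis extending Yannakakis's method) and Lemma~\ref{lem:expformula} (the polynomial-time extraction of the expanding subformula via a deficiency/matching-type argument), both of which we are free to assume.
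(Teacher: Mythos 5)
Your proposal is correct and follows exactly the route the paper intends: the paper derives Theorem~\ref{thm:GJYbound} by combining Lemma~\ref{lem:expformula} with Theorem~\ref{thm:3s-sat}, and your bookkeeping (including the observation that deleting clauses preserves $3$-satisfiability, so Theorem~\ref{thm:3s-sat} applies to $F'$) is precisely the omitted argument.
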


Using this theorem it is easy to obtain a linear-in-number-of-variables kernel for the following natural analog of {\sc Max-2S-Sat-A}($\pp m$), see \cite{GutJonYeoFCT11} for details.

\begin{quote}
{\sc Max-3S-Sat-A}($\frac{2}{3}m$)\\ \nopagebreak
  \emph{Instance:} A 3-satisfiable CNF formula $F$ with $m$ clauses and a nonnegative integer $k$.\\
    \nopagebreak
  \emph{Parameter:} $k$.\\ \nopagebreak
  \emph{Question:} ${\rm sat}(F)\ge  \frac{2}{3} m + k$?
 \end{quote}

Now let us consider the following important parameterization of {\sc $r$-Sat} below the tight upper bound $m$:
\begin{quote}
{\sc $r$-Sat-B}($m$)\\ \nopagebreak
  \emph{Instance:} An $r$-CNF formula $F$ with $m$ clauses (every clause has at most $r$ literals) and a nonnegative integer $k$.\\
    \nopagebreak
  \emph{Parameter:} $k$.\\ \nopagebreak
  \emph{Question:} ${\rm sat}(F)\ge m - k$?
 \end{quote}
Since {\sc Max-$r$-Sat} is NP-hard for each fixed $r\ge 3$, {\sc $r$-Sat-B}($m$) is not fpt unless
P$=$NP. However, the situation changes for $r=2$:  Razgon and O'Sullivan \cite{RazOsu} proved that {\sc 2-Sat-B}($m$) is fpt. The algorithm in \cite{RazOsu} is of complexity $O(15^kk m^3)$ and, thus, {\sc Max-2-Sat-B}($m$) admits a kernel with at most $15^kk$ clauses.
It is not known whether {\sc 2-Sat-B}($m$) admits a kernel with a polynomial number of variables. Raman et al. \cite{RamRamSau} and Cygan et al. \cite{CygPilPilWoj} designed algorithms for {\sc 2-Sat-B}($m$) of runtime $9^k(km)^{O(1)}$ and $4^k(km)^{O(1)}$, respectively.
In both papers, the authors consider the following parameterized problem ({\sc VC-AMM}): given a graph $G$ whose maximum matching is of cardinality $\mu$, decide whether $G$ has a vertex cover with at most $\mu + k$ vertices, where $k$ is the parameter. A parameterized algorithm of the above-mentioned complexity actually is obtained for {\sc VC-AMM}, and {\sc 2-Sat-B}($m$) is polynomially transformed into {\sc VC-AMM} (the transformation is parameter-preserving).
While Raman et al. \cite{RamRamSau} obtain the parameterized algorithm for {\sc VC-AMM} directly, Cygan et al. \cite{CygPilPilWoj} derive it via a reduction from a more general problem on graphs parameterized above a tight bound.

{\sc 2-Sat-B}($m$) has several application. {\sc 2-Sat-B}($m$) is, in fact, equivalent to {\sc VC-AMM} \cite{MisALGO,RamRamSau,CygPilPilWoj}.  Mishra et al. \cite{MisALGO} studied the following problem: given a graph $G$, decide whether by deleting at most $k$ vertices we can make $G$ {\em K{\"o}nig}, i.e., a graph in which the minimum size of a vertex cover equals the maximum number of edges in a matching. They showed how to reduce the last problem to {\sc VC-AMM}.
It is noted by  Gottlob and Szeider \cite{GotSze} that fixed-parameter tractability of {\sc VC-AMM} implies the fixed-parameter tractability
of the following problem. Given a CNF formula $F$ (not necessarily 2-CNF), decide whether there exists a subset of at most $k$ variables of $F$
so that after removing all occurrences of these variables from the clauses of $F$, the resulting CNF formula is {\em Renamable Horn}, i.e., it can
be transformed by renaming of the variables into a CNF formula with at most one positive literal in each clause.

{\sc 2-SAT-B}$(m)$ has also been used in order to obtain the best known bound on the order of a kernel for {\sc Vertex Cover} (given a graph $G$ and an integer $k$, decide whether $G$ has a vertex cover with at most $k$ vertices). The fact that {\sc Vertex Cover} has a kernel with at most $2k$ vertices was known for a long time, see Chen,  Kanj and Jia \cite{Chen}. This was improved to $2k-1$ by Chleb\'{i}k and Cleb\'{i}kov\'{a} \cite{Chlebik} and further to $2k-c$ for any constant $c$ by Soleimanfallah and Yeo \cite{ASYeo11}. Lampis \cite{Lampis} used the same approach as in \cite{ASYeo11}, but instead of reducing an instance of {\sc Vertex Cover} to a large number of {\sc 2-SAT} instances, he reduced {\sc Vertex Cover} to {\sc 2-SAT-B}$(m)$ via {\sc VC-AMM}. As a result, Lampis \cite{Lampis} obtained a kernel of order at most $2k- c \log k$  for any constant $c$. We will now briefly describe how this kernel was obtained.

For a graph $G$ let $\beta(G)$ denotes the minimum size of a vertex cover of $G$ and $\mu(G)$ the maximum size of a matching in $G$. In their classical work Nemhauser and Trotter \cite{NemTro} proved the following:

\begin{theorem}\label{thm:NT} 
There is an $O(|E|\sqrt{|V|})$-time algorithm which for a given graph $G=(V,E)$ computes two
disjoint subsets of vertices of $G,$ $V'$, $V''$, such that $\beta(G) = \beta(G[V']) + |V''|$ and $\beta(G[V'])\ge |V'|/2.$
\end{theorem}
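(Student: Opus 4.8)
The plan is to derive the partition from a linear programming relaxation of \textsc{Vertex Cover}, together with the half-integrality of this relaxation, which in turn follows from a bipartite-matching (equivalently, max-flow) argument. First I would set up the standard LP relaxation: assign a variable $x_v\in[0,1]$ to each vertex $v$, minimize $\sum_{v\in V}x_v$ subject to $x_u+x_v\ge 1$ for every edge $uv\in E$. Let $x^*$ be an optimal solution. The first key step is to show that there is an optimal solution that is \emph{half-integral}, i.e.\ $x^*_v\in\{0,\tfrac12,1\}$ for all $v$; this is the classical result of Nemhauser and Trotter and can be proved by a rounding/exchange argument (if $V_{<1/2}=\{v:x^*_v<1/2\}$ and $V_{>1/2}=\{v:x^*_v>1/2\}$, shifting mass by $\varepsilon$ from the former to the latter, and simultaneously by $\varepsilon$ in the opposite direction, both yield feasible solutions whose objective values average to that of $x^*$, so one of them is also optimal and has strictly fewer fractional-but-not-half coordinates; iterate).

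Given a half-integral optimum $x^*$, I would define $V''=\{v:x^*_v=1\}$, $V_0=\{v:x^*_v=0\}$, and $V'=\{v:x^*_v=\tfrac12\}$. The sets $V'$ and $V''$ are disjoint by construction, and both are computable in the stated time because the half-integral LP optimum can be obtained by solving a bipartite matching / max-flow problem on an auxiliary graph (split each vertex $v$ into $v_L,v_R$, put an edge $u_L v_R$ and $v_L u_R$ for each $uv\in E$; a minimum vertex cover of this bipartite graph, found via König's theorem from a maximum matching, yields the half-integral solution), and Hopcroft--Karp runs in $O(|E|\sqrt{|V|})$ time. Next I would verify the two asserted identities. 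For $\beta(G[V'])\ge |V'|/2$: restricting $x^*$ to $V'$ gives a feasible fractional cover of $G[V']$ of value $|V'|/2$, and no fractional (hence no integral) cover can do better, so $\beta(G[V'])\ge |V'|/2$ (in fact equality holds, but only the inequality is needed).

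The main work is the identity $\beta(G)=\beta(G[V'])+|V''|$. The inequality $\le$ is the easy direction: take an optimum cover $S'$ of $G[V']$ and observe $S'\cup V''$ covers all of $G$ (edges inside $V'$ are covered by $S'$; any edge with an endpoint in $V''$ is covered; and there is no edge with both endpoints in $V_0$, nor an edge between $V_0$ and $V'$, because such an edge $uv$ with $x^*_u=0$ would force $x^*_v\ge 1$, contradicting $v\in V_0\cup V'$). For the reverse inequality $\ge$, I would take an optimum vertex cover $S$ of $G$ and perform a local exchange to make it respect the partition: replace $S$ by $S^\dagger=(S\cap V')\cup V''\cup (S\cap \text{neighbours forced})$—more precisely, one shows that $S^\dagger := (S\setminus V_0)\cup\big(V''\setminus S\big)$... the cleanest route is to argue that $\tfrac12\mathbf 1_{V''}+\tfrac12 x^*$ restricted appropriately, combined with the indicator of $S\cap V'$, yields a fractional cover whose value lower-bounds $|S|$; equivalently, use LP duality: $\beta(G)\ge$ the LP optimum $=|V''|+|V'|/2$, and since $\beta(G[V'])$ might exceed $|V'|/2$ one instead argues directly that any cover of $G$ induces a cover of $G[V']$ of size $|S|-|V''|$ after the exchange step that pushes $S$ off $V_0$ and onto all of $V''$ without increasing its size. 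I expect this exchange argument—showing the swap never increases $|S|$, using that every edge leaving $V_0$ enters $V''$—to be the one genuinely delicate point; everything else is bookkeeping on the LP.
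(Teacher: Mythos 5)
The paper does not prove this theorem at all; it is quoted verbatim from Nemhauser and Trotter \cite{NemTro}, so there is no in-paper argument to compare against. Your overall strategy --- half-integral LP optimum computed via the bipartite double cover and Hopcroft--Karp, then $V''=\{v:x^*_v=1\}$, $V'=\{v:x^*_v=\tfrac12\}$ --- is exactly the classical route and would work, and the easy direction $\beta(G)\le\beta(G[V'])+|V''|$ is handled correctly (the observation that an edge $uv$ with $x^*_u=0$ forces $x^*_v=1$ is the right one).

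However, there is a genuine gap at precisely the point you flag as ``delicate'' and then leave open: the inequality $\beta(G)\ge\beta(G[V'])+|V''|$. Setting $S^\dagger=(S\setminus V_0)\cup V''$ for a minimum cover $S$, the fact that every edge leaving $V_0$ enters $V''$ only gives you that $S^\dagger$ is still a cover; it does not give $|S^\dagger|\le|S|$, which is the actual crux. What is needed is $|V''\setminus S|\le|S\cap V_0|$, and this does \emph{not} follow from feasibility of $x^*$ --- it requires its \emph{optimality}. The standard argument: if $|V''\setminus S|>|S\cap V_0|$, perturb $x^*$ by subtracting $\tfrac12$ on $V''\setminus S$ and adding $\tfrac12$ on $S\cap V_0$; the only edges that could become violated are those with one endpoint in $V''\setminus S$ and the other in $V_0\setminus S$, but such an edge would be uncovered by $S$, a contradiction; hence the perturbation is feasible and strictly cheaper than $x^*$, contradicting LP optimality. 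Your proof as written never invokes optimality of $x^*$ for this direction, so the size bound is unproved. The same omission affects your claim that $\beta(G[V'])\ge|V'|/2$: you assert ``no fractional cover \lbrack of $G[V']$\rbrack\ can do better'' without proof; the justification is again via optimality of $x^*$ (a cheaper fractional cover of $G[V']$, extended by $1$ on $V''$ and $0$ on $V_0$, would be a feasible fractional cover of $G$ cheaper than $x^*$). Both gaps are closable by the same one idea --- exploit that $x^*$ is an LP \emph{optimum}, not merely half-integral and feasible --- but as submitted the two central inequalities of the theorem rest on assertions rather than arguments.
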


Soleimanfallah and Yeo \cite{ASYeo11} showed the following additional inequality: 
\begin{equation}\label{eq:SY} \beta(G[V']) \ge |V'| - \mu(G).\end{equation}

Let $k':=k-|V''|.$ By Theorem \ref{thm:NT}, $\beta(G)\le k$ if and only if $\beta(G[V'])\le k'$. If $|V'| \le  2k'-c \log k'\le  2k -c \log k$
then we have a kernel and we are done. Thus, it suffices to show that if $|V'| > 2k'-c \log k'$ we can decide whether $ \beta(G[V'])\le k'$ in polynomial time.
We assume that $|V'| > 2k'-c \log k'$ and we may also assume that $|V'|\le 2k'$ as otherwise $ \beta(G[V'])> k'$ by Theorem \ref{thm:NT}. By (\ref{eq:SY}) if
$\mu(G[V'])\le (|V'|-c \log k')/2$ then $\beta(G[V']) \ge (|V'|+c \log k')/2.$ Since $|V'| > 2k' -  c \log k'$ this
means that $\beta(G[V']) > k'$. 

So, consider the case $\mu(G[V']) > (|V'|-c \log k')/2$. Since $|V'| > 2k'-c \log k'$ and $\mu(G[V']) > (|V'|-c \log k')/2,$
we have $\mu(G[V']) > k'- c \log k'$ and so $k'< \mu(G[V']) + c \log k'$. Thus, to decide whether $ \beta(G[V'])\le k'$ it suffices to compute $\ell$ such that $ \beta(G[V'])= \mu(G[V'])+\ell$, where $\ell < c \log k'$, and to compare $\mu(G[V'])+\ell$ with $k'$. Using an fpt algorithm for {\sc VC-AMM} (which is essentially an fpt algorithm for {\sc Max-2-Sat-B}($m$) as the two problems are equivalent) we can compute $\ell$ in fpt time (provided we use an efficient algorithm such as in \cite{RazOsu,RamRamSau,CygPilPilWoj}).

\section{Boolean Max-$r$-CSPs Above Average}\label{sec:csp}

Throughout this section, $r$ is a positive integral constant.
Recall that the problem {\sc Max-$r$-CSP-AA} is given by a set $V$ of $n$ variables
and a set of $m$ Boolean formulas; each formula is assigned an integral positive weight and contains at most $r$ variables from $V$.
Averaging over all truth assignments, we can find the average value $A$ of the weight of satisfied formulas. We wish to
decide whether there exists a truth assignment that satisfies formulas of total weight at least $A+k,$ where $k$ is the parameter ($k$ is a nonnegative integer).

Recall that the problem {\sc Max-$r$-Lin2-AA} is a special case of {\sc Max-$r$-CSP-AA} when every formula is a linear equation
over $\mathbb{F}_2$ with at most $r$ variables and
that {\sc Max-Lin2-AA} is the extension of {\sc Max-$r$-Lin2-AA} when we do not bound the number of variables in an equation.
Research of both {\sc Max-$r$-Lin2-AA} and {\sc Max-Lin2-AA} led to a number of basic notions and results of interest for both problems, and we devote Subsection \ref{sec:basicr} to these notions and results. In particular, we will show that $A=W/2$, where $W$ is the total weight of all equations,
introduce a Gaussian-elimination-type algorithm for both problems, and a notion and simple lemma of a sum-free subset of a set of vectors in $\mathbb{F}^n_2$. This lemma is a key ingredient in proving  some important results for {\sc Max-$r$-Lin2-AA} and {\sc Max-Lin2-AA}.

{\sc Max-$r$-Lin2-AA} is studied in Subsection \ref{sec:maxrlin}, where we give proof schemes of a result by Gutin et al. \cite{GutinKimSzeiderYeo09a} that {\sc Max-$r$-Lin2-AA} has a kernel of quadratic size and a result of Crowston, Fellows et al. \cite{CroFelGut} that {\sc Max-$r$-Lin2-AA} has a kernel with at most $(2k-1)r$ variables. The latest result improves that of Kim and Williams \cite{KimWil} that {\sc Max-$r$-Lin2-AA} has a kernel with at most $r(r+1)k$ variables.

In Subsection \ref{subsec:max-r-csp}, we give a proof scheme of a result by Alon et al. \cite{AlonEtAl2009a} that {\sc Max-$r$-CSP-AA} has a a kernel of polynomial size.  The main idea of the proof is to reduce {\sc Max-$r$-CSP-AA} to {\sc Max-$r$-Lin2-AA} and use the above results on {\sc Max-$r$-Lin2-AA}
and Lemma \ref{lem:pk}. This shows the existence of a polynomial-size kernel, but does not allow us to obtain a bound on the degree of the polynomial.
Nevertheless, this solves an open question of Mahajan, Raman and Sikdar \cite{MahajanRamanSikdar09} not only for {\sc Max-$r$-Sat-AA} but also for the more general problem {\sc Max-$r$-CSP-AA}. Recall that the problem {\sc Max-$r$-Sat-AA} is a special case of {\sc Max-$r$-CSP-AA} when every formula is a
clause with at most $r$ variables. For {\sc Max-$r$-Sat-AA}, the reduction to {\sc Max-$r$-Lin2-AA} can be complemented by a reduction from {\sc Max-$r$-Lin2-AA} back to {\sc Max-$r$-Sat-AA}, which yields a kernel of quadratic size.

\subsection{Basic Results for Max-Lin2-AA and Max-$r$-Lin2-AA} \label{sec:basicr}

Recall that in the problems \textsc{MaxLin2-AA} and \textsc{Max-$r$-Lin2-AA}, we are given a system $S$ consisting of
$m$ linear equations in $n$ variables
over $\mathbb{F}_2$ in which each equation is assigned a positive integral weight.
In \textsc{Max-$r$-Lin2-AA}, we have an extra constraint that every equation has at most $r$ variables.
Let us write the system $S$ as $\sum_{i\in I}z_i=b_I$, $I\in \cal F$, and let $w_I$ denote the weight
of an equation $\sum_{i\in I}z_i=b_I$. Clearly, $m=|{\cal F}|.$ Let $W=\sum_{I\in \cal F}w_I$
and let ${\rm sat}(S)$ be the maximum total weight of equations that can be satisfied simultaneously.

For each $i\in [n],$ set $z_i=1$ with probability 1/2 independently of the rest of the variables. Then
each equation is satisfied with probability 1/2 and the expected weight
of satisfied equations is $W/2$ (as our probability distribution is uniform, $W/2$ is also the average weight of satisfied equations).
Hence $W/2$ is a lower bound; to see
its tightness consider a system of pairs of equations of the
form $\sum_{i\in I}z_i=0,\ \sum_{i\in I}z_i=1$ of weight 1.
The aim in both \textsc{Max-Lin2-AA} and \textsc{Max-$r$-Lin2-AA}
is to decide whether for the given system $S$, ${\rm sat}(S)\ge W/2 +k,$ where $k$ is the parameter.
It is well-known that, in polynomial time, we can find an assignment to the variables that satisfies equations of total weight at least $W/2$, but, for any $\epsilon > 0$ it is NP-hard to decide whether there is an assignment satisfying equations of total weight at least $W(1+\epsilon)/2$ \cite{Hastad01}.

Henceforth, it will often be convenient for us to consider
linear equations in their multiplicative form, i.e., instead of an equation
$\sum_{i\in I}z_i=b_I$ with $z_i\in \{0,1\}$, we will consider the equation $\prod_{i\in I}x_i=(-1)^{b_I}$
with $x_i\in \{-1,1\}$. Clearly, an assignment $z^0=(z^0_1,\ldots ,z^0_n)$ satisfies $\sum_{i\in I}z_i=b_I$ if and only if
 the assignment $x^0=(x^0_1,\ldots ,x^0_n)$ satisfies $\prod_{i\in I}x_i=(-1)^{b_I},$ where $x^0_i=(-1)^{z^0_i}$ for
each $i\in [n].$

Let $\varepsilon(x)=\sum_{I\in \cal F}w_I(-1)^{b_I}\prod_{i\in I}x_i$ (each $x_i\in \{-1,1\}$)
and note that $\varepsilon(x^0)$ is the difference between the total weight of satisfied
and falsified equations when $x_i=x^0_i$ for each $i\in [n].$ Crowston et al. \cite{CrowstonSWAT} call $\varepsilon(x)$  the {\em excess} and
the maximum possible value of $\varepsilon(x)$ the {\em maximum excess}.

\begin{remark} \label{MaxLinExcess} Observe that the answer to \textsc{Max-Lin2-AA} and \textsc{Max-$r$-Lin2-AA} is {\sc Yes} if and only if
the maximum excess is at least $2k$. \end{remark}

Let $A$ be the matrix over $\mathbb{F}_2$ corresponding to the set of equations in $S$,
such that $a_{ji} = 1$ if $i \in I_j$ and $0$, otherwise.

Consider two reduction rules for {\sc Max-Lin2-AA} introduced by Gutin et al. \cite{GutinKimSzeiderYeo09a}. Rule \ref{rule1} was studied before in \cite{HasVen04}.

  \begin{krule}\label{rule1}
  If we have, for a subset $I$ of $[n]$, an equation $\prod_{i \in I} x_i =b_I'$
  with weight $w_I'$, and an equation $\prod_{i \in I} x_i =b_I''$ with weight $w_I''$,
  then we replace this pair by one of these equations with weight $w_I'+w_I''$ if $b_I'=b_I''$ and, otherwise, by
  the equation whose weight is bigger, modifying its
  new weight to be the difference of the two old ones. If the resulting weight
  is~0, we delete the equation from the system.
  \end{krule}
    \begin{krule}\label{rulerank}
  Let $t={\rm rank} A$ and suppose columns $a^{i_1},\ldots ,a^{i_t}$ of $A$ are linearly independent.
  Then delete all variables not in $\{x_{i_1},\ldots ,x_{i_t}\}$ from the equations of $S$.
  \end{krule}

  \begin{lemma}\label{lem:SS'}\cite{GutinKimSzeiderYeo09a}
  Let $S'$ be obtained from $S$ by Rule~\ref{rule1} or \ref{rulerank}.
  Then the maximum excess of $S'$  is equal to the maximum excess of $S$.
  Moreover, $S'$ can be obtained from $S$ in time polynomial in $n$ and $m$.
  \end{lemma}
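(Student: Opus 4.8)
The plan is to handle the two rules separately, and for each to check (i) that it leaves the maximum excess unchanged and (ii) that it runs in time polynomial in $n$ and $m$; by Remark~\ref{MaxLinExcess} (or simply by definition of ${\rm sat}$) this is all that is needed. Throughout I keep the excess polynomial $\varepsilon(x)=\sum_{I\in{\cal F}}w_I(-1)^{b_I}\prod_{i\in I}x_i$ in view as a function $\{-1,1\}^n\to\mathbb{R}$, so that the maximum excess of a system is simply $\max_{x\in\{-1,1\}^n}\varepsilon(x)$.

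For Rule~\ref{rule1} the point is that the rule does not alter $\varepsilon$ \emph{at all}, as a function. The two equations being merged contribute the single term $\bigl(w_I'(-1)^{b_I'}+w_I''(-1)^{b_I''}\bigr)\prod_{i\in I}x_i$ to $\varepsilon$, and an immediate case check --- on whether $b_I'=b_I''$ and, if not, on which of $w_I',w_I''$ is larger --- shows that the scalar in front equals $(w_I'+w_I'')(-1)^{b_I'}$ when $b_I'=b_I''$, equals $\pm(w_I'-w_I'')$ times $(-1)^{b}$ with $b$ the right-hand side of the heavier equation when $b_I'\neq b_I''$ with $w_I'\neq w_I''$, and equals $0$ precisely in the remaining case $b_I'\neq b_I''$, $w_I'=w_I''$ where the rule deletes the equation. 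In every case this is exactly the contribution of the equation (if any) left behind by the rule, so $\varepsilon$, and hence $\max_x\varepsilon(x)$, is untouched.

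For Rule~\ref{rulerank} I would first re-encode the maximum excess over $\mathbb{F}_2$. Writing an assignment as $z\in\mathbb{F}_2^n$ (so $x_i=(-1)^{z_i}$) and recalling that row $j$ of $A$ is the characteristic vector of $I_j$, one gets $\varepsilon(z)=\sum_j w_j(-1)^{b_j+(Az)_j}$, whence the maximum excess equals $\max\bigl\{\,\sum_j w_j(-1)^{b_j+v_j}\ :\ v\in \mathrm{Im}(A)\,\bigr\}$, a quantity depending only on the subspace $\mathrm{Im}(A)\subseteq\mathbb{F}_2^m$ and on the data $(w_j,b_j)_j$. Deleting the variables outside $\{x_{i_1},\dots,x_{i_t}\}$ leaves every $w_j$ and every $b_j$ in place and replaces $A$ by the submatrix $B$ formed by its columns $a^{i_1},\dots,a^{i_t}$; the same computation gives that the maximum excess of the new system is $\max\bigl\{\,\sum_j w_j(-1)^{b_j+v_j}\ :\ v\in \mathrm{Im}(B)\,\bigr\}$. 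Now $a^{i_1},\dots,a^{i_t}$ are $t=\mathrm{rank}(A)$ linearly independent columns of $A$, so they form a basis of the column space of $A$; therefore $\mathrm{Im}(B)=\mathrm{span}\{a^{i_1},\dots,a^{i_t}\}=\mathrm{Im}(A)$, the two maxima are taken over the very same set of real numbers, and the maximum excess is preserved.

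Finally, each rule is polynomial-time: Rule~\ref{rule1} is carried out by grouping the equations by their set $I$ (e.g.\ by sorting the characteristic vectors) and merging within each group, which is $\mathrm{poly}(n,m)$; Rule~\ref{rulerank} is carried out by Gaussian elimination on $A$ over $\mathbb{F}_2$ to obtain $t$ and a maximal independent set of columns, again in $\mathrm{poly}(n,m)$ time, followed by striking out the remaining variables from all equations. The only step with real content is the column-space identity $\mathrm{Im}(B)=\mathrm{Im}(A)$ in the Rule~\ref{rulerank} case; everything else is a routine case check and bookkeeping. I expect the cleanest write-up to keep the $\mathbb{F}_2$ reformulation of the maximum excess as the organizing idea, since it makes both inclusions $\mathrm{Im}(B)\subseteq\mathrm{Im}(A)$ and $\mathrm{Im}(A)\subseteq\mathrm{Im}(B)$ transparent and sidesteps any explicit ``substitution of variables'' computation.
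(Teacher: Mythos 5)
Your proof is correct; note that the survey itself states Lemma~\ref{lem:SS'} without proof, citing \cite{GutinKimSzeiderYeo09a}, so there is no in-paper argument to compare against. Your two observations --- that Rule~\ref{rule1} leaves the excess polynomial $\varepsilon$ unchanged as a function on $\{-1,1\}^n$, and that for Rule~\ref{rulerank} the maximum excess equals $\max\{\sum_j w_j(-1)^{b_j+v_j} : v\in \mathrm{Im}(A)\}$ and is therefore preserved because the retained columns span the column space, giving $\mathrm{Im}(B)=\mathrm{Im}(A)$ --- are exactly the standard argument from the cited source, and the polynomial-time claims are routine as you say.
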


If we cannot change a weighted system $S$ using  Rules~\ref{rule1} and \ref{rulerank}, we call it {\em irreducible}.

Let $S$ be an irreducible system of {\sc Max-Lin2-AA}.
Consider the following algorithm introduced in \cite{CrowstonSWAT}.
We assume that, in the beginning, no equation or variable in $S$ is marked.

\begin{center}
\fbox{~\begin{minipage}{12cm}
\textsc{Algorithm $\cal H$}

\smallskip
While the system $S$ is nonempty do the following:

1. Choose an equation  $\prod_{i \in I} x_i =b$ and
mark a variable $x_l$ such that $l \in I$.

\smallskip

2. Mark this equation and delete it from the system.

\smallskip

3. Replace every equation $\prod_{i \in I'} x_i =b'$ in the system containing $x_l$ by $\prod_{i \in I\Delta I'} x_i
= bb'$, where $I\Delta I'$ is the symmetric difference of $I$ and $I'$ (the weight of the equation is unchanged).

\smallskip

4. Apply Reduction Rule \ref{rule1} to the system.

\smallskip
\end{minipage}~}
\end{center}

\smallskip

The {\em maximum ${\cal H}$-excess} of $S$ is the maximum possible total weight of equations marked by ${\cal H}$ for $S$
taken over all possible choices in Step 1 of $\cal H$. The following lemma indicates the potential power of $\cal H$.

\begin{lemma}\label{lemExcess}\cite{CrowstonSWAT}
Let $S$ be an irreducible system. Then the maximum excess of $S$ equals its maximum ${\cal H}$-excess.
\end{lemma}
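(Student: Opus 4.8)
Lemma~\ref{lemExcess}: for an irreducible system $S$, the maximum excess of $S$ equals its maximum $\mathcal H$-excess.

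The plan is to prove the two inequalities separately. The easy direction is that the maximum $\mathcal H$-excess is at most the maximum excess: the set of equations marked by a run of $\mathcal H$ forms, by construction, a system with a triangular structure (the $l$-th marked variable appears in the $l$-th marked equation and, after Step~3 is applied in later iterations, in no earlier marked equation), so one can set the marked variables one at a time, in reverse order of marking, to satisfy every marked equation simultaneously; the remaining variables can be set arbitrarily, say all to $1$. Hence there is an assignment satisfying all marked equations, so their total weight is a lower bound on $\mathrm{sat}$ in the transformed system, and I would argue (using that Steps~3--4 are invertible weight-preserving operations, as in Lemma~\ref{lem:SS'} / Rule~\ref{rule1}) that this transfers back to give an assignment to the original $S$ whose excess is at least the weight of the marked equations. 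Taking the maximum over runs of $\mathcal H$ gives maximum $\mathcal H$-excess $\le$ maximum excess.

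For the reverse inequality — the harder and more interesting direction — I would proceed by induction on the number of equations in $S$. Fix an optimal assignment $x^*$ achieving the maximum excess. In the first iteration $\mathcal H$ picks some equation $\prod_{i\in I}x_i=b$ and a variable $x_l$ with $l\in I$; I would want to choose this pair so that the chosen equation is \emph{satisfied} by $x^*$ — this is possible because an irreducible nonempty system always has an assignment of positive excess, hence at least one satisfied equation in any optimal assignment (one must check the degenerate case where the maximum excess is $0$, i.e. $S$ empty, separately). After marking and deleting this equation and performing the substitution in Step~3, one gets a system $S'$ on one fewer "free" variable; the key point is that Step~3 is exactly the linear change of variable $x_l \mapsto b\prod_{i\in I\setminus\{l\}}x_i$, which is a bijection on assignments, and it sends $x^*$ to an assignment $x'$ of $S'$ whose excess on $S'$ equals (excess of $x^*$ on $S$) $-$ (weight of the marked equation), since the marked equation contributed $+w_I$ and every other equation's satisfaction status is preserved under the bijection. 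Then Step~4 (Rule~\ref{rule1}) preserves maximum excess by Lemma~\ref{lem:SS'}. By the induction hypothesis applied to the resulting irreducible system, its maximum excess equals its maximum $\mathcal H$-excess, and prepending the first marked equation (of weight $w_I$) yields a run of $\mathcal H$ on $S$ whose marked weight is $w_I +$ (max excess of $S'$) $= $ max excess of $S$.

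The main obstacle I anticipate is the bookkeeping around the substitution step: one must verify carefully that Step~3 really is a weight-preserving bijection on the solution space that decreases the excess by exactly the weight of the removed equation, and that after Steps~3--4 the system is again irreducible (so the induction hypothesis applies) and that the "choose a satisfied equation" move is always available — this last point rests on the fact that an irreducible system has strictly positive maximum excess unless it is empty, which should itself be extractable from the reduction rules (if every equation were falsified by every optimal assignment, flipping a variable would contradict optimality or irreducibility). Once these structural facts are pinned down, the induction closes cleanly.
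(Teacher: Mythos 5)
Your proof is correct, and since the survey states Lemma~\ref{lemExcess} without proof (deferring to \cite{CrowstonSWAT}), there is nothing in the paper to compare it against; your two-inequality plan --- backward substitution through the triangular structure of the marked equations for one direction, and induction on the number of equations with the first marked equation chosen among those satisfied by an optimal assignment for the other --- is exactly the standard argument. Three small points to tighten. First, Step~3 of ${\cal H}$ acts on equations, not on assignments (it multiplies each equation containing $x_l$ by the marked one), so rather than a ``bijection on assignments'' you should say: for a fixed assignment $x^*$ satisfying the marked equation, every transformed equation has the same satisfaction status under $x^*$ as its source, and Rule~\ref{rule1} preserves the excess pointwise, whence $\varepsilon_{S'}(x^*)=\varepsilon_S(x^*)-w(e_1)$; the same telescoping identity, applied to the assignment built by backward substitution, gives the first direction with equality, $\varepsilon_S(x)=\sum_t w(e_t)$. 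Second, Step~4 only restores irreducibility with respect to Rule~\ref{rule1}, not Rule~\ref{rulerank}, so the induction hypothesis should be stated for Rule-\ref{rule1}-irreducible systems (the argument never uses Rule~\ref{rulerank}; what Rule~\ref{rule1} irreducibility buys is that supports are distinct and nonempty, so Step~3 never creates an empty-support equation and the equation count strictly drops). Third, the availability of a satisfied equation to mark is exactly as you suspect: a nonempty Rule-\ref{rule1}-irreducible system has excess given by a nonzero multilinear polynomial with zero constant term, hence zero mean over $\{-1,1\}^n$, hence strictly positive maximum; and your final step only needs $w(e_1)+\max\text{-excess}(S')\ge w(e_1)+\varepsilon_{S'}(x^*)=\max\text{-excess}(S)$, the reverse inequality coming from the first direction.
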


This lemma gives no indication on how to choose equations in Step 1 of Algorithm $\cal H$. As the problem {\sc Max-Lin2-AA}
is NP-hard, we cannot hope to obtain an polynomial-time procedure for optimal choice of equations in Step 1 and, thus,
have to settle for a good heuristic. For the heuristic we need the following notion first used in \cite{CrowstonSWAT}.
Let $K$ and $M$ be sets of vectors in $\mathbb{F}^n_2$ such that $K \subseteq M$. We say $K$ is \emph{$M$-sum-free}
if no sum of two or more distinct vectors in $K$ is equal to a vector in $M$. Observe that $K$ is $M$-sum-free if
and only if $K$ is linearly independent and no sum of vectors in $K$ is equal to a vector in $M \backslash K$.

The following lemma was proved implicitly in \cite{CrowstonSWAT} and, thus, we provide a short proof of this result.

\begin{lemma}\label{lem:M-freeapplic}
Let $S$ be an irreducible system of {\sc Max-Lin2-AA} and let $A$ be the matrix corresponding to $S$.
Let $M$ be the set of rows of $A$
(viewed as vectors in $\mathbb{F}^n_2$) and let $K$ be an $M$-sum-free set of $k$ vectors. Let $w_{\rm min}$ be the minimum
weight of an equation in $S$. Then, in time in $(nm)^{O(1)}$, we can find an assignment to the variables of $S$
that achieves excess of at least $w_{\rm min}\cdot k.$
\end{lemma}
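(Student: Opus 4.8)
The plan is to show that an $M$-sum-free set $K$ of $k$ vectors gives us $k$ equations that, after running Algorithm $\cal H$ starting from these equations, all get marked, each contributing its own weight to the $\cal H$-excess; then invoke Lemma~\ref{lemExcess}. Concretely, let $K=\{\mathbf{r}_1,\ldots,\mathbf{r}_k\}$ be the given $M$-sum-free set, and let $\prod_{i\in I_j}x_i=b_j$ be the equation of $S$ whose row in $A$ is $\mathbf{r}_j$. I would run Algorithm $\cal H$, choosing at the $j$-th iteration the equation corresponding to $\mathbf{r}_j$ (more precisely, the equation into which it has been transformed by the previous symmetric-difference steps). The key claim is that this choice is always available: at step $j$, the equation originating from $\mathbf{r}_j$ has not yet been deleted, has nonzero weight, and has a variable still unmarked that we can mark.

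The main obstacle — and the heart of the argument — is verifying that the $M$-sum-freeness of $K$ guarantees exactly this. First I would observe that after processing $\mathbf{r}_1,\ldots,\mathbf{r}_{j-1}$, the row vector of the equation currently descending from $\mathbf{r}_j$ is $\mathbf{r}_j + \sum_{\ell\in T}\mathbf{r}_\ell$ for some $T\subseteq\{1,\ldots,j-1\}$ (each time we eliminate a marked variable $x_{l}$ occurring in this equation we add the row of the equation that was marked, which by induction is a sum of vectors of $K$). Since $K$ is linearly independent and $\mathbf r_j \notin \mathrm{span}(\mathbf r_1,\dots,\mathbf r_{j-1})$, this vector is nonzero, so the equation has not collapsed under Rule~\ref{rule1}; and since $K$ is $M$-sum-free, this vector is not in $M\setminus K$, and being a nontrivial sum involving $\mathbf r_j$ it is not equal to any $\mathbf r_\ell$ with $\ell<j$ either, so no earlier-marked equation had the same row — hence it was never merged away or deleted. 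Being nonzero it contains at least one variable, and that variable has not been marked yet because only $j-1$ variables are marked and each marked variable was eliminated from all surviving equations in step~3; so we can mark a fresh variable $x_{l_j}$. Thus the choice is legal for all $j\le k$, and Algorithm $\cal H$ marks $k$ equations, each of weight at least $w_{\rm min}$ (weights are preserved by steps~1--3 and only combined upward or left unchanged by Rule~\ref{rule1}), giving maximum $\cal H$-excess at least $w_{\rm min}\cdot k$.

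Finally, by Lemma~\ref{lemExcess} the maximum excess of the irreducible system $S$ equals its maximum $\cal H$-excess, hence is at least $w_{\rm min}\cdot k$; by the definition of excess this means some assignment $x^0\in\{-1,1\}^n$ satisfies equations of total weight exceeding the falsified weight by at least $w_{\rm min}\cdot k$. The assignment is recovered by back-substitution through the marked variables in reverse order of marking, a standard feature of Algorithm $\cal H$; the whole procedure — computing $A$, running $\cal H$ for $k$ steps, applying Rule~\ref{rule1} each time, and back-substituting — runs in time $(nm)^{O(1)}$.
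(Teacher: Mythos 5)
Your overall strategy is exactly the paper's: run Algorithm $\cal H$ choosing only the $k$ equations indexed by $K$, argue that $M$-sum-freeness guarantees each of them survives with its original weight until it is marked, and then convert the marked weight into excess via the maximum ${\cal H}$-excess. The bookkeeping --- writing the current row of the $j$-th chosen equation as $\mathbf r_j+\sum_{\ell\in T}\mathbf r_\ell$ with $T\subseteq\{1,\dots,j-1\}$ --- is also the right one.

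The gap is in the step ``since $K$ is $M$-sum-free, this vector is not in $M\setminus K$ \dots\ so no earlier-marked equation had the same row --- hence it was never merged away or deleted.'' Two things are off here. First, marked equations are removed from the system at Step 2, so they are not what Rule~\ref{rule1} could merge the $\mathbf r_j$-descendant with; the danger comes from the \emph{unmarked} equations still present. Second, and more importantly, those equations have also been transformed by the symmetric-difference steps: an equation with original row $v\in M$ currently has row $v+\sum_{\ell\in T'}\mathbf r_\ell$ for some $T'\subseteq\{1,\dots,j-1\}$, not $v$ itself. Showing that $\mathbf r_j+\sum_{\ell\in T}\mathbf r_\ell\notin M\setminus\{\mathbf r_j\}$ only excludes collisions with equations that have not been touched ($T'=\emptyset$); it does not exclude a collision $\mathbf r_j+\sum_{\ell\in T}\mathbf r_\ell=v+\sum_{\ell\in T'}\mathbf r_\ell$ with $T'\neq\emptyset$. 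The repair is exactly the computation in the paper's proof: such a collision yields $v=\mathbf r_j+\sum_{\ell\in T\Delta T'}\mathbf r_\ell$, a sum of two or more distinct vectors of $K$ equal to a vector of $M$ (unless $T=T'$, in which case $v=\mathbf r_j$ and the ``other'' equation is $e_j$ itself, impossible since $S$ is irreducible) --- contradicting $M$-sum-freeness. With that correction the argument closes. A small further point: Rule~\ref{rule1} can \emph{decrease} weights (it takes the difference when the right-hand sides disagree), so your parenthetical ``only combined upward or left unchanged'' is not the reason the marked weights stay at least $w_{\rm min}$; the reason is that, once collisions are excluded, the chosen equations never participate in Rule~\ref{rule1} and their weights are never altered at all.
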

\begin{proof}

Let $\{e_{j_1}, \ldots, e_{j_k}\}$ be the set of equations corresponding to the vectors in $K$.
Run Algorithm $\cal H$, choosing at Step 1
an equation of $S$ from $\{e_{j_1}, \ldots, e_{j_k}\}$ each time,
and let $S'$ be the resulting system.
Algorithm $\cal H$ will run for $k$ iterations of the while loop as
no equation from $\{e_{j_1}, \ldots, e_{j_k}\}$ will be deleted before it has been marked.

Indeed, suppose that this is not true. Then for some $e_{j_l}$ and some other equation $e$ in $S$, after
applying Algorithm $\cal H$ for at most $l-1$ iterations $e_{j_l}$ and $e$ contain the same variables.
Thus, there are vectors $v_j\in K$ and $v\in M$ and a pair of nonintersecting subsets $K'$ and $K''$ of
$K\setminus \{v,v_j\}$ such that $v_j+\sum_{u\in K'}u=v+\sum_{u\in K''}u$. Thus,
$v=v_j+\sum_{u\in K'\cup K''}u$, a contradiction to the definition of $K.$\qed
\end{proof}

\subsection{Max-$r$-Lin2-AA}\label{sec:maxrlin}

The following result was proved by Gutin et al. \cite{GutinKimSzeiderYeo09a}.

\begin{theorem}\label{thm:Max-r-Lin2fpt}
 The problem {\sc Max-$r$-Lin2-AA} admits a kernel with at most $O(k^2)$ variables and equations.
\end{theorem}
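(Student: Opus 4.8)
The goal is to produce, from an instance $S$ of \textsc{Max-$r$-Lin2-AA}, an equivalent instance on $O(k^2)$ variables and equations. The strategy is to first normalize $S$ by applying Rules~\ref{rule1} and \ref{rulerank} exhaustively to obtain an irreducible system (possible in polynomial time by Lemma~\ref{lem:SS'}, and the maximum excess is preserved, so by Remark~\ref{MaxLinExcess} the answer is unchanged). After Rule~\ref{rulerank} the number of variables $n$ equals $\mathrm{rank}\,A \le m$, so it suffices to bound $m$. After Rule~\ref{rule1} all rows of $A$ (as vectors in $\mathbb{F}_2^n$) are distinct. The crux is a dichotomy: either $m$ is already small (say $m \le $ some quadratic-in-$k$ bound), in which case we are done; or $m$ is large, in which case I claim the answer is automatically \textsc{Yes}, so we may output a trivial \textsc{Yes}-instance of constant size.

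**Key steps for the large-$m$ case.** Suppose $m$ is large. The first subgoal is to find an $M$-sum-free set $K$ of $k$ vectors among the rows $M$ of $A$; then by Lemma~\ref{lem:M-freeapplic} (with $w_{\min}\ge 1$) we obtain excess at least $k \cdot 1 \ge \dots$ — wait, we need excess $\ge 2k$, so we should aim for an $M$-sum-free set of size $2k$, giving excess $\ge 2k$, hence a \textsc{Yes}-instance by Remark~\ref{MaxLinExcess}. So the real question is: how large must $m$ be to guarantee an $M$-sum-free subset of size $2k$? Here the bound $r$ on the number of variables per equation is essential. A clean way to force a large sum-free set is via the algebraic/combinatorial fact that a set $M$ of distinct nonzero vectors in $\mathbb{F}_2^n$, each of Hamming weight at most $r$, that does not contain an $M$-sum-free set of size $t$ must be small — intuitively, the "span-closure" behaviour of low-weight vectors is limited. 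One approach: greedily build $K$; if at some point no further vector can be added, every remaining row lies in the set of sums of subsets of the current $K$ together with $M$ — but this only bounds things if we additionally use that each row has at most $r$ ones, combined with a counting argument on how many low-weight vectors can appear in the span of a small set.

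**The expected main obstacle.** The hard part is exactly the extremal bound: showing that if $S$ is irreducible (distinct rows, full column rank) and every equation has at most $r$ variables, then $m \le f(k)$ for a quadratic $f$ forces the existence of a $2k$-element $M$-sum-free set. The natural route is a pigeonhole/counting argument: with $n$ variables and row-weight $\le r$, consider a maximal $M$-sum-free set $K$ of size $s < 2k$; since $S$ is irreducible, $n = \mathrm{rank}\,A$, and one argues that $n$ (hence $m$, via another inequality like $m \le 2n$ or similar coming from irreducibility) is bounded in terms of $s$ and $r$. Getting the precise quadratic dependence — the paper's $O(k^2)$, as opposed to a worse polynomial — is where care is needed; I expect one needs both that $K$ being maximal forces every other row to be a sum of few elements of $K$ and elements already "used", and that the low row-weight $r$ plus irreducibility (no repeated rows, full rank) prevents too many such sums from being realizable. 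Once the extremal inequality is in hand, the kernel is immediate: run the reductions, check the size; if it exceeds the bound, answer \textsc{Yes}; otherwise output the reduced system, whose variable and equation counts are $O(k^2)$.
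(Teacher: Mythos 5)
Your outer structure (normalize via Rules~\ref{rule1} and~\ref{rulerank}, then argue that either $m=O(k^2)$ or the answer is \textsc{Yes}) matches the paper, but the engine you propose for the large-$m$ case is the wrong one, and the extremal lemma you defer to is in fact false. You would need: ``if $m$ exceeds some quadratic function of $k$, then the rows $M$ of $A$ contain an $M$-sum-free set of size $2k$.'' But any $M$-sum-free set is linearly independent, so its size is at most $n$, while an irreducible system with row-weight at most $r$ can have up to $\sum_{i\le r}\binom{n}{i}=\Theta(n^r)$ distinct equations. Taking, for example, all $\binom{n}{3}$ weight-$3$ rows ($r=3$) and $2k=n+1$ gives $m=\Theta(k^3)\gg k^2$ with no $M$-sum-free set of size $2k$ at all. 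So the sum-free-set route can only certify \textsc{Yes} once $m\gtrsim k^r$; it is precisely the tool behind Theorem~\ref{thm:linkernel4lin2} (a kernel with $(2k-1)r$ \emph{variables}, hence only $O(k^r)$ equations), not behind the quadratic bound on the number of \emph{equations} claimed here. Your proposal therefore proves a polynomial kernel whose equation count has degree $r$, not the stated $O(k^2)$.

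The paper's actual argument for the large-$m$ case is probabilistic, and nothing in your sketch points toward it. Set $X=\varepsilon(x)$ for a uniformly random $\pm1$ assignment; then $\mathbb{E}[X]=0$, and since after Rule~\ref{rule1} the monomials are distinct with integral weights $w_I\ge 1$, Lemma~\ref{lem:Pars} gives $\mathbb{E}[X^2]=\sum_I w_I^2\ge m$. Because $\varepsilon$ has degree at most $r$, the Hypercontractive Inequality (Lemma~\ref{lem41}) gives $\mathbb{E}[X^4]\le 9^r\,\mathbb{E}[X^2]^2$, and Lemma~\ref{lem32} then yields $\Prob\bigl(X\ge \sqrt{m}/(2\cdot 3^r)\bigr)>0$. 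Hence if $\sqrt{m}/(2\cdot 3^r)\ge 2k$ the answer is \textsc{Yes} by Remark~\ref{MaxLinExcess}; otherwise $m<16\cdot 9^r k^2=O(k^2)$ and $n\le m$ by Rule~\ref{rulerank}. This second-moment/fourth-moment step is the missing idea in your proposal, and without it (or some substitute that bounds $m$ rather than $n$) the quadratic bound on the number of equations does not follow.
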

\begin{proof}
Let the system $S$ be irreducible. Consider the excess
\begin{equation}\label{eq:excess} \varepsilon(x)=\sum_{I\in \cal F}w_I(-1)^{b_I}\prod_{i\in I}x_i. \end{equation}
Let us assign value $-1$ or $1$ to each $x_i$ with probability $1/2$ independently of the other variables.
Then $X=\varepsilon(x)$ becomes a random variable.
By Lemma \ref{lem:Pars}, we have $\mathbb{E}(X^2)=\sum_{I\in \cal F}w^2_I$.
Therefore, by Lemmas \ref{lem32} and \ref{lem41},  $$\Prob[\ X\ge \sqrt{m}/(2\cdot 3^r)\ ]\ge
\Prob\left[\ X\ge \sqrt{\sum_{I\in \cal F}w^2_I}/(2\cdot 3^r)\ \right]>0.$$ Hence by Remark \ref{MaxLinExcess}, if
$\sqrt{m}/(2\cdot 3^r)\ge 2k$, then
the answer to \textsc{Max-$r$-Lin2-AA} is {\sc
Yes}. Otherwise, $m=O(k^2)$ and, by  Rule \ref{rulerank}, we have $n\le m=O(k^2)$.\qed
\end{proof}

The bound on the number of variables can be improved and it was done by Crowston et al. \cite{CrowstonSWAT} and Kim and Williams
\cite{KimWil}. The best known improvement is by Crowston, Fellows et al. \cite{CroFelGut}:

\begin{theorem}\label{thm:linkernel4lin2}
The problem {\sc Max-$r$-Lin2-AA} admits a kernel with at most $(2k-1)r$ variables.
\end{theorem}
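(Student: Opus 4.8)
The plan is to combine Algorithm~$\cal H$ with the $M$-sum-free machinery of Lemma~\ref{lem:M-freeapplic}, so that the whole argument reduces to a purely combinatorial statement about vectors in $\mathbb{F}_2^n$: if an irreducible system has more than $(2k-1)r$ variables, then the row set $M$ of its matrix contains an $M$-sum-free subset $K$ of size $k$. First I would bring the input system $S$ to irreducible form using Reduction Rules~\ref{rule1} and~\ref{rulerank} (Lemma~\ref{lem:SS'} guarantees this preserves the maximum excess and runs in polynomial time). After Rule~\ref{rulerank} the number of variables $n$ equals $\operatorname{rank} A$, hence $n \le m$, and every variable occurs in at least one equation; after Rule~\ref{rule1} no two equations have the same variable set. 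Since each equation in a {\sc Max-$r$-Lin2} instance has at most $r$ variables, each row of $A$ has weight (Hamming weight, i.e. number of ones) at most $r$.

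Now the heart of the argument: I claim that an irreducible {\sc Max-$r$-Lin2-AA} system with $n > (2k-1)r$ variables admits an $M$-sum-free set of size $k$, where $M$ is the set of rows of $A$. The construction is greedy. Maintain a set $K = \{v_1,\dots,v_j\}$ of rows chosen so far together with the set $U$ of variables (coordinates) that appear in the support of some sum $\sum_{u \in K'} u$ over $K' \subseteq K$; since $K$ will be kept linearly independent, $U$ is simply the union of the supports of $v_1,\dots,v_j$, so $|U| \le jr$. While $|U| \le (2k-1)r$ — in particular while $j < k$ — I want to find a new row $v_{j+1}$ that (a) has a variable outside $U$ in its support, so that linear independence and the $M$-sum-free property are both kept, or alternatively (b) argue that such a row must exist. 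The key point is that if \emph{every} remaining row had its entire support inside $U$, then, since the chosen rows are independent and span a $j$-dimensional space, Rule~\ref{rulerank} would already have forced $n \le |U| \le jr \le (k-1)r < (2k-1)r$, a contradiction. So as long as $n > (2k-1)r$ there is always a row with a coordinate outside the span's support, and adding it keeps $K$ linearly independent; the $M$-sum-freeness (no sum of two or more vectors of $K$ lies in $M$) follows because any such sum either has a support point that no single row of $M$ can match after an appropriate accounting, or would contradict irreducibility under Rule~\ref{rule1}. Once $|K| = k$, Lemma~\ref{lem:M-freeapplic} gives an assignment with excess at least $w_{\min}\cdot k \ge k \ge 2k$… — wait, $w_{\min}\ge 1$ gives excess $\ge k$, which is not yet $2k$; so in fact one needs $|K| = 2k$ to invoke Remark~\ref{MaxLinExcess}, and correspondingly the variable threshold $(2k-1)r$ is exactly what forces $|K|$ to reach the required size. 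Thus if $n > (2k-1)r$ the answer is {\sc Yes}, and otherwise $n \le (2k-1)r$ and $m \le 2^n$ is bounded (or, better, $m$ is polynomially bounded after Rule~\ref{rule1} on $(2k-1)r$ variables), giving the kernel.

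The step I expect to be the main obstacle is making the greedy argument tight enough to land on the constant $(2k-1)r$ rather than something like $2kr$ or $kr$: one must carefully track not the supports of the individual chosen rows but the support of the whole \emph{linear span} they generate, and show that a new row sharing a fresh coordinate with this span can always be selected while simultaneously preserving that no partial sum of the enlarged $K$ lands back in $M$. Controlling the interaction between ``has a new coordinate'' (needed for independence) and ``no partial sum is in $M$'' (needed for $M$-sum-freeness) is the delicate combinatorial core; it is precisely here that irreducibility under Rule~\ref{rule1} is used, since a partial sum falling in $M\setminus K$ would correspond to two equations on the same variable set, which Rule~\ref{rule1} has eliminated. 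Everything else — the reductions, the bound on row weights, and the final invocation of Lemma~\ref{lem:M-freeapplic} and Remark~\ref{MaxLinExcess} — is routine given the results already established in the excerpt.
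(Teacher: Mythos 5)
Your high-level route is the same as the paper's: bring $S$ to irreducible form, observe that after Rule~\ref{rulerank} the rows of $A$ contain a basis of $\mathbb{F}_2^n$ and each has at most $r$ non-zero coordinates, extract an $M$-sum-free set, feed it to Lemma~\ref{lem:M-freeapplic}, and invoke Remark~\ref{MaxLinExcess}. Your mid-proof self-correction is also right: one needs an $M$-sum-free set of size $2k$ (not $k$) to certify excess at least $2k$, and the threshold $n>(2k-1)r$ is exactly $n\ge r(2k-1)+1$, which is what the unlabelled lemma of Crowston, Fellows et al.\ stated after Theorem~\ref{thm:linkernel4lin2} requires (applied with $2k$ in place of $k$). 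The paper simply cites that lemma; the derivation of the theorem from it is the ``routine'' part you describe correctly.

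The genuine gap is in your attempted proof of that combinatorial lemma. Your greedy rule --- add a row whose support contains a coordinate outside the union $U$ of the supports chosen so far --- guarantees linear independence of $K$, but it does \emph{not} guarantee $M$-sum-freeness, and the sentence claiming that any violating subset sum ``has a support point that no single row of $M$ can match'' or ``would contradict irreducibility under Rule~\ref{rule1}'' is false. Concretely, take $r=2$, $n=3$ and $M=\{e_1,\,e_2,\,e_3,\,e_1+e_2\}$: this is Rule~\ref{rule1}-irreducible (all four rows have distinct variable sets), yet the fresh-coordinate greedy may select $e_1$ and then $e_2$, whose sum $e_1+e_2$ lies in $M$; a sum of chosen rows landing in $M$ need not have the same variable set as any single chosen row, so Rule~\ref{rule1} is of no help. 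An $M$-sum-free pair does exist here (e.g.\ $\{e_1,e_3\}$), which illustrates that the lemma is true but that its proof needs a more careful selection or exchange argument than ``pick anything with a new coordinate''; this is precisely the step you flagged as the delicate core, and as written it is not established.
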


This theorem can be easily proved using Formula (\ref{eq:excess}), Lemma \ref{lem:M-freeapplic} and the following result by Crowston, Fellows et al. \cite{CroFelGut}.

\begin{lemma}
Let $M$ be a set of vectors in $\mathbb{F}^n_2$ such that $M$ contains a basis of $\mathbb{F}^n_2.$
Suppose that each vector of $M$ contains at most $r$ non-zero coordinates. If $k\ge 1$ is an integer and $n \ge r(k-1)+1$,
then in time $|M|^{O(1)}$, we can find a subset $K$ of $M$ of $k$ vectors
such that $K$ is $M$-sum-free.
\end{lemma}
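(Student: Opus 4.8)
The plan is to prove the following combinatorial lemma: given a set $M$ of vectors in $\mathbb{F}^n_2$ that contains a basis of $\mathbb{F}^n_2$, each vector of weight at most $r$, and $n\ge r(k-1)+1$, one can efficiently find an $M$-sum-free subset $K\subseteq M$ of size $k$. Recall (from the discussion before Lemma~\ref{lem:M-freeapplic}) that $K$ is $M$-sum-free precisely when $K$ is linearly independent and no sum of vectors in $K$ equals a vector of $M\setminus K$. So we must build an independent set that is, in addition, ``internally closed'' with respect to $M$: the only elements of $M$ in the span of $K$ are the elements of $K$ itself.

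First I would set up a greedy process that picks vectors $v_1,v_2,\ldots$ from $M$ one at a time, maintaining the invariant that $\{v_1,\ldots,v_j\}$ is $M$-sum-free. To choose $v_{j+1}$, let $U_j=\mathrm{span}(v_1,\ldots,v_j)$, a subspace of dimension $j$; we need a vector $v\in M\setminus U_j$ such that adding $v$ keeps the set $M$-sum-free, i.e. no vector of $M$ lies in $\mathrm{span}(v_1,\ldots,v_j,v)\setminus\{v_1,\ldots,v_j,v\}$. The key observation is a counting/dimension argument: the ``bad'' vectors $v$ — those in $U_j$, or those whose addition would create a forbidden sum — all lie in a bounded union of affine subspaces determined by $U_j$ together with the (at most) $|M|$ vectors of $M$ not yet in the span; but more usefully, one shows that the bad set is contained in a subspace, or a small union of cosets, whose dimension (or whose coordinate-support) is too small to cover all of $\mathbb{F}^n_2$ as long as $n$ is large enough. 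Here is where the weight bound $r$ enters: since every vector of $M$ has at most $r$ nonzero coordinates, each already-chosen $v_i$ ``uses up'' at most $r$ coordinates, so after $j<k$ steps at most $r(k-1)<n$ coordinates are touched; any vector of $M$ supported on the remaining $\ge 1$ coordinates (one exists because $M$ spans $\mathbb{F}^n_2$, hence $M$ must contain vectors whose supports collectively cover all $n$ coordinates) is automatically outside $U_j$ and, with a little care, can be shown not to create a forbidden relation — or if it does, that relation can be used to substitute and shrink the support.

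The cleaner way to organize this, which I would actually carry out, is to maintain not just $K=\{v_1,\ldots,v_j\}$ but also a set $D$ of ``dependent'' coordinates, $|D|\le rj$, with the property that every $v_i$ is supported inside $D$ and, crucially, every vector of $M$ that lies in $\mathrm{span}(K)$ is supported inside $D$. As long as $j<k$ we have $|D|\le r(k-1)<n$, so some coordinate $c\notin D$ exists, and since $M$ spans $\mathbb{F}^n_2$ there is $v\in M$ with $v_c=1$; this $v$ is not in $\mathrm{span}(K)$. Setting $K':=K\cup\{v\}$, I must check $K'$ is still $M$-sum-free: a forbidden sum would give $u\in M$ with $u = v+\sum_{i\in S}v_i$ for some $S\subseteq[j]$, and $u\notin K'$; then $u+v=\sum_{i\in S}v_i\in\mathrm{span}(K)$, and if $u+v$ happens to be in $M$ we contradict the invariant (since $u+v$ would need support in $D$ while $v$ has $v_c=1$, $c\notin D$), so the only loophole is $u+v\notin M$. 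Handling that loophole is the main obstacle — one needs either to choose $v$ more carefully among the many candidates supported off $D$, or to argue a replacement step. I expect the resolution is: among all $v\in M$ with support not entirely inside $D$, pick one minimizing $|\mathrm{supp}(v)\setminus D|$; then show a forbidden relation $u=v+\sum_{i\in S}v_i$ would let us replace $v$ by the strictly-smaller-support vector $u$ (after re-accounting $D$), contradicting minimality, unless no such relation exists. Then update $D\leftarrow D\cup\mathrm{supp}(v)$, which grows by at most $r$, preserving $|D|\le rj'$. Iterating until $j=k$ yields the desired $K$, and every step is a rank computation and a search over $M$, so the total time is $|M|^{O(1)}$.

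Combining this lemma with the proof of Theorem~\ref{thm:linkernel4lin2} is then immediate: given an instance of {\sc Max-$r$-Lin2-AA}, apply Rules~\ref{rule1} and~\ref{rulerank} to make the system $S$ irreducible, so that (by Rule~\ref{rulerank}) the matrix $A$ has full column rank and hence its set of rows $M\subseteq\mathbb{F}^n_2$ spans $\mathbb{F}^n_2$; each row has at most $r$ nonzero entries since each equation has at most $r$ variables. If $n\ge r(2k-1)+1 = r\big((2k)-1\big)+1$, the lemma (with parameter $2k$ in place of $k$) produces an $M$-sum-free set $K$ of $2k$ vectors, and then Lemma~\ref{lem:M-freeapplic} gives, in polynomial time, an assignment with excess at least $w_{\min}\cdot 2k\ge 2k$, so by Remark~\ref{MaxLinExcess} the answer is {\sc Yes}. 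Hence a nontrivial instance has $n\le r(2k-1) = (2k-1)r$, and by Rule~\ref{rulerank} the number of equations (after Rule~\ref{rule1} removes duplicates) is at most $2^n$, bounded in terms of $k$; this is the promised kernel with at most $(2k-1)r$ variables.
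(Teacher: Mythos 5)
The survey states this lemma without proof (it is quoted from \cite{CroFelGut}), so your argument has to stand on its own; unfortunately the crucial step is not established, and the specific fix you propose is provably wrong. You correctly reduce the problem to the following claim: if $K=\{v_1,\dots,v_j\}$ is $M$-sum-free with $j<k$ and $D\supseteq\bigcup_i\mathrm{supp}(v_i)$, $|D|\le rj<n$, then one can select $v\in M$ with $\mathrm{supp}(v)\not\subseteq D$ such that no vector $u=v+\sum_{i\in S}v_i$ (with $\emptyset\ne S\subseteq[j]$) lies in $M$. This existence claim is the entire content of the lemma, and you do not prove it. Your proposed selection rule --- take $v$ minimizing $|\mathrm{supp}(v)\setminus D|$ and argue that a violating $u$ would have strictly smaller support outside $D$ --- cannot work, because $u+v=\sum_{i\in S}v_i$ is supported inside $D$, so $\mathrm{supp}(u)\setminus D=\mathrm{supp}(v)\setminus D$ exactly: the quantity you want to decrease is invariant under the substitution $v\mapsto u$, and minimality yields no contradiction. (Your side remark that $u+v\in M$ would ``contradict the invariant'' via the coordinate $c$ is also confused: $u+v$ lies in $\mathrm{span}(K)$ and is supported in $D$ no matter what; the only genuine contradiction available there is with the $M$-sum-freeness of $K$ when $|S|\ge 2$, and the case $|S|=1$ is untouched.)

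Moreover the rule itself fails, not merely its justification. Take $n=5$, $r=2$, $k=3$ (so $n=r(k-1)+1$) and $M=\{e_1\}\cup\{e_i+e_j:\ 1\le i<j\le 5\}$, which has all weights at most $2$ and contains the basis $e_1,\ e_1+e_2,\ e_2+e_3,\ e_3+e_4,\ e_4+e_5$. Your rule first picks $e_1$ (the unique minimizer), then some $e_1+e_j$, say $e_1+e_2$ (all such choices are symmetric and valid). Now $D=\{1,2\}$ and the minimizers of $|\mathrm{supp}(\cdot)\setminus D|$ are exactly the vectors $e_a+e_j$ with $a\in\{1,2\}$ and $j\ge 3$; every one of them is bad, since $(e_1+e_j)+(e_1+e_2)=e_2+e_j\in M$ and symmetrically. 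A valid third vector does exist, e.g.\ $e_3+e_4$ (each of $e_1+e_3+e_4$, $e_2+e_3+e_4$, $e_1+e_2+e_3+e_4$ has weight at least $3$ and so avoids $M$), but it is not a minimizer. So a local greedy criterion based on support sizes does not suffice, and proving that a good $v$ exists at every step requires a genuinely different idea --- this is precisely why the lemma is nontrivial. The surrounding material in your write-up (the reformulation of $M$-sum-freeness, and the derivation of the $(2k-1)r$-variable kernel from this lemma together with Lemma \ref{lem:M-freeapplic} and Remark \ref{MaxLinExcess}) is fine, but it is not what the statement asks you to prove.
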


Both Theorem \ref{thm:linkernel4lin2} and a slightly weaker analogous result of \cite{KimWil} imply the following:

\begin{corollary} There is an algorithm of runtime $2^{O(k)}+m^{O(1)}$ for {\sc Max-$r$-Lin2-AA}.
\end{corollary}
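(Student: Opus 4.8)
The plan is to combine the linear-variable kernelization of Theorem~\ref{thm:linkernel4lin2} (or the slightly weaker one of \cite{KimWil}) with exhaustive search on the resulting kernel. Given an instance $(S,k)$ of \textsc{Max-$r$-Lin2-AA} with $n$ variables and $m$ equations, first apply Reduction Rules~\ref{rule1} and \ref{rulerank}; by Lemma~\ref{lem:SS'} this preserves the maximum excess and runs in time $(nm)^{O(1)}$, and after Rule~\ref{rulerank} we have $n\le m$ (the number of remaining variables equals ${\rm rank}\,A\le m$). Then apply the kernelization of Theorem~\ref{thm:linkernel4lin2}, which again runs in time polynomial in $n+m$, hence $m^{O(1)}$, producing an equivalent instance $(S',k)$ with at most $(2k-1)r$ variables. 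Since $r$ is a fixed constant, the kernel has $n'=O(k)$ variables.

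Next I would bound the number of equations of the kernel. Applying Reduction Rule~\ref{rule1} exhaustively to $S'$ (again polynomial-time and excess-preserving by Lemma~\ref{lem:SS'}) leaves at most one equation $\prod_{i\in I}x_i=b_I$ for each support $I\subseteq[n']$, so the kernel has $m'\le 2^{n'}\le 2^{(2k-1)r}=2^{O(k)}$ equations; the weights do not grow beyond the original total weight $W$, so the combinatorial size of the kernel is $2^{O(k)}$. Finally, solve the kernel by brute force: enumerate all $2^{n'}=2^{O(k)}$ assignments $x\in\{-1,1\}^{n'}$, for each compute the excess $\varepsilon(x)=\sum_{I\in\mathcal F}w_I(-1)^{b_I}\prod_{i\in I}x_i$ in time $O(m'n')=2^{O(k)}$, and take the maximum; by Remark~\ref{MaxLinExcess} the answer is \textsc{Yes} iff this maximum excess is at least $2k$. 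The total running time is $m^{O(1)}$ for the kernelization plus $2^{O(k)}\cdot 2^{O(k)}=2^{O(k)}$ for the search, i.e.\ $2^{O(k)}+m^{O(1)}$.

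The only point requiring care — and thus the ``main obstacle'' — is that Theorem~\ref{thm:linkernel4lin2} as stated bounds only the number of \emph{variables} of the kernel, whereas the brute-force step needs the number of \emph{equations} to be bounded by a function of $k$ alone; this is supplied precisely by Reduction Rule~\ref{rule1}, which collapses all equations sharing a support and hence forces $m'\le 2^{n'}$. It is also worth observing that the hypothesis that $r$ is a constant is exactly what converts $2^{(2k-1)r}$ into $2^{O(k)}$: if $r$ were part of the input, the identical argument would only give runtime $2^{O(rk)}+m^{O(1)}$.
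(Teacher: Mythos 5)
Your proof is correct and is essentially the argument the paper intends: the corollary is stated as an immediate consequence of Theorem~\ref{thm:linkernel4lin2}, namely kernelize to $(2k-1)r=O(k)$ variables in time $m^{O(1)}$, use Rule~\ref{rule1} to cap the number of equations at $2^{O(k)}$, and then brute-force all $2^{O(k)}$ assignments, testing the excess against $2k$ via Remark~\ref{MaxLinExcess}. Your closing remarks correctly identify the two points (bounding the number of equations, and $r$ being constant) on which the additive $2^{O(k)}+m^{O(1)}$ bound hinges.
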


 Kim and Williams \cite{KimWil} proved that the last result is best possible, in a sense, if the Exponential Time Hypothesis holds.

\begin{theorem}\cite{KimWil}
If {\sc Max-3-Lin2-AA} can be solved in $O(2^{\epsilon k}2^{\epsilon m})$ time for every $\epsilon > 0,$ then
3-SAT can be solved in $O(2^{\delta n})$ time for every $\delta > 0,$ where $n$ is the number of variables.
\end{theorem}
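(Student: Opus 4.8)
The statement to prove is a conditional lower-bound transfer: a subexponential-in-$(k+m)$ algorithm for {\sc Max-3-Lin2-AA} would yield a subexponential-in-$n$ algorithm for 3-SAT, contradicting the Exponential Time Hypothesis. The plan is a reduction from 3-SAT to {\sc Max-3-Lin2-AA} in which both the parameter $k$ and the number of equations $m$ are linear in the number of clauses (equivalently, linear in $n$ after sparsification). First I would invoke the Sparsification Lemma of Impagliazzo, Paturi and Zane: it suffices to refute, for every $\delta'>0$, an $O(2^{\delta' n})$ algorithm on 3-SAT instances with $m_{\rm cl} = O(n)$ clauses. So from now on the 3-CNF formula $\phi$ has $n$ variables and $c n$ clauses for a fixed constant $c$.

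Next I would encode each clause as a small gadget of linear equations over $\mathbb{F}_2$, each of arity at most $3$. A clause $(\ell_1\vee\ell_2\vee\ell_3)$ on literals over variables $x_a,x_b,x_c$ is satisfied by an assignment exactly when the three equations $x_a = 0/1$, $x_b=0/1$, $x_c=0/1$ (with right-hand sides dictated by the literals' polarities) are \emph{not} all three falsified; among the $2^3$ assignments to those three variables, the worst case falsifies at most one of these three unit equations iff the clause is satisfied, and falsifies all three iff the clause is violated. Concretely, put the three unit equations for each clause into the system, all with weight $1$. Then under a given assignment $\tau$, the total weight of satisfied equations is $\sum_{\text{clauses}} (\#\text{literals of that clause true under }\tau)$; one checks that this equals $3 m_{\rm cl}$ minus (number of false literal-occurrences), and the minimum number of false literal-occurrences over a clause is $0$ if $\phi$ is satisfiable and $\geq 1$ per-clause-not-satisfiable otherwise — so I instead need a gadget whose \emph{gap} between "$\phi$ satisfiable" and "$\phi$ unsatisfiable" is proportional to $m_{\rm cl}$. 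A clean way: for each clause add the seven linear equations corresponding to the seven satisfying assignments' "signatures" — i.e. use the standard fact that a single 3-clause, as a Boolean constraint, is a $\{-1,1\}$-function whose Fourier expansion has all $2^3$ characters on $\{x_a,x_b,x_c\}$, and the Max-Lin encoding of a constraint $f$ replaces $f$ by the weighted set of equations given by its Fourier coefficients (this is exactly the {\sc Max-$r$-CSP-AA}-to-{\sc Max-$r$-Lin2-AA} reduction alluded to in Subsection~\ref{subsec:max-r-csp}). Applying that reduction to the {\sc Max-3-Sat} instance $\phi$ produces a {\sc Max-3-Lin2} system $S$ with $m = O(m_{\rm cl}) = O(n)$ equations and $n$ variables, such that ${\rm sat}(S)$ determines ${\rm sat}(\phi)$ exactly.

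Finally I would translate to the above-average form. Since $m=O(n)$, the above-average threshold $W/2$ and the target ${\rm sat}(S)\ge W/2+k$ are related to ${\rm sat}(\phi)$ by $k = {\rm sat}(S) - W/2$, and because $W = O(n)$ we get that "$\phi$ is satisfiable" corresponds to a {\sc Max-3-Lin2-AA} instance with a specific parameter value $k^\star = \Theta(n)$ (one sets $k$ to the value forced by a fully-satisfied $\phi$). Thus we have an fpt-reduction with $k = O(n)$ and $m = O(n)$. Now suppose {\sc Max-3-Lin2-AA} has an $O(2^{\epsilon k}2^{\epsilon m})$ algorithm for every $\epsilon>0$. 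Run it on the produced instance: the running time is $O(2^{\epsilon k^\star}2^{\epsilon m}) = O(2^{\epsilon' n})$ where $\epsilon'\to 0$ as $\epsilon\to 0$, since both $k^\star$ and $m$ are $O(n)$. This decides 3-SAT on the sparsified instance, hence (via sparsification) general 3-SAT, in $O(2^{\delta n})$ time for every $\delta>0$, which is the conclusion.

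The main obstacle, and the step needing real care, is engineering the gadget so that both $m$ \emph{and} $k$ grow only linearly in $n$ \emph{simultaneously}: a naive encoding blows up $k$ to $\Theta(m)$ which is fine, but one must ensure that the "$\phi$ satisfiable" case corresponds to $k$ as large as $\Theta(n)$ (so that $2^{\epsilon k}$ captures the hardness) while keeping $W = O(n)$ so that $2^{\epsilon m}$ and $2^{\epsilon k}$ are both subexponential in $n$. Getting the constant-size per-clause gadget right — in particular verifying that its excess at an optimal assignment is a fixed positive constant times the number of satisfied clauses, with an honest gap when $\phi$ is unsatisfiable — is the combinatorial heart of the argument; everything else (sparsification, the Fourier-style {\sc CSP}-to-{\sc Lin2} passage, and the final arithmetic of $\epsilon$ versus $\delta$) is routine once that is in place.
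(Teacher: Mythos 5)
The survey states this theorem only as a citation to Kim and Williams and gives no proof of it, so there is nothing in the paper to compare against line by line; judged on its own, your argument is essentially the standard (and, to the best of my knowledge, the actual Kim--Williams) route: sparsify the 3-CNF so that $m_{\rm cl}=O(n)$, replace each clause by the seven weighted parity equations of its Fourier expansion (exactly the {\sc Max-$r$-CSP-AA}$\to${\sc Max-$r$-Lin2-AA} reduction of Subsection 5.3), observe that the number of satisfied clauses equals $\frac{7}{8}m_{\rm cl}+\frac{1}{8}\varepsilon(x)$ so that satisfiability of $\phi$ is equivalent to excess at least $m_{\rm cl}$, i.e.\ to the AA-instance with $k^\star=m_{\rm cl}/2=\Theta(n)$ and $m=7m_{\rm cl}=O(n)$, and then let $\epsilon\to 0$. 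One small wording slip: the seven equations correspond to the seven nonempty character sets $\emptyset\neq S\subseteq\{a,b,c\}$ in the Fourier expansion, not to the seven satisfying assignments of the clause; since you immediately identify the gadget with the Fourier expansion, this does not affect correctness. Your self-identified ``main obstacle'' is in fact already resolved by that gadget, since the gap between the satisfiable and unsatisfiable cases is exactly one unit of excess per unsatisfied clause.
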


\subsection{Max-$r$-CSPs AA}\label{subsec:max-r-csp}

Consider first a detailed formulation of {\sc Max-$r$-CSP-AA}.
Let $V=\{v_1,\ldots ,v_n\}$ be a set of variables, each taking
values $-1$ ({\sc True}) and $1$ ({\sc False}). We are given a set $\Phi$ of Boolean
functions, each involving at most $r$ variables, and a collection ${\cal F}$ of $m$ Boolean functions, each $f \in \cal F$ being a
member of $\Phi$, each with a positive integral weight and each acting on some subset of
$V$. We are to decide whether there is a
truth assignment to the $n$ variables such that the total weight of satisfied functions is at least $A+k$,
where $A$ is the average weight (over all truth assignments) of satisfied functions and $k$ is the parameter.

Note that $A$ is a tight lower
bound for the problem, whenever the family ${\Phi}$ is closed under
replacing each variable by its complement, since if we apply any
Boolean function to all $2^r$ choices of literals whose underlying
variables are any fixed set of $r$ variables, then any truth
assignment  to the variables satisfies exactly the same number of
these $2^r$ functions.

Note that if $\Phi$ consists of clauses, we get {\sc Max-$r$-Sat-AA}. In {\sc Max-$r$-Sat-AA}, $A=\sum_{j=1}^mw_j(1-2^{-r_j}),$
where $w_j$ and $r_j$ are the weight and the number of variables of Clause $j$, respectively.
Clearly, $A$ is a tight lower bound for {\sc Max-$r$-Sat}.

Following \cite{AGK04}, for a Boolean
function $f$ of weight $w(f)$ and on $r(f)\le r$ Boolean variables
$x_{i_1},  \ldots , x_{i_{r(f)}},$ we introduce
a polynomial $h_f(x),\ x=(x_1,  \ldots ,x_n)$ as follows. Let $S_f \subset \{-1,1\}^{r(f)}$
denote the set of all
satisfying assignments of $f$. Then
$$
h_f(x) = w(f)2^{r-r(f)}\sum_{(v_1, \ldots ,v_{r(f)}) \in S_f}
[\prod_{j=1}^{r(f)} (1+x_{i_j} v_j) - 1].
$$

Let $h(x)=\sum_{f \in \cal F} h_f(x).$
It is easy to see (cf. \cite{AlonEtAl2009a}) that the value of $h(x)$ at some $x^0$ is precisely $2^r(U-A)$,
where $U$ is the total weight of
the functions satisfied by the truth assignment $x^0$.  Thus, the answer to {\sc Max-$r$-CSP-AA} is {\sc Yes}
if and only if there is a truth assignment $x^0$ such that $h(x^0)\ge k2^r.$

Algebraic simplification of $h(x)$ will lead us the following
(Fourier expansion of $h(x)$, cf. \cite{odonn}):

\begin{equation}\label{foureq} h(x)=\sum_{S\in {\cal F}}c_S\prod_{i\in S}x_i,\end{equation}
where ${\cal F}=\{\emptyset\neq S\subseteq [n]:\ c_S\neq 0, |S|\le r \}$. Thus, $|{\cal F}|\le n^r$.
The sum $\sum_{S\in {\cal F}}c_S\prod_{i\in S}x_i$ can be viewed as the excess of an instance of
{\sc Max-$r$-Lin2-AA} and, thus, we can reduce {\sc Max-$r$-CSP-AA}
into {\sc Max-$r$-Lin2-AA} in polynomial time (since $r$ is fixed, the algebraic simplification can be done in polynomial
time and it does not matter whether the parameter of {\sc Max-$r$-Lin2-AA} is $k$ or $k'=k2^r$). By Theorem \ref{thm:Alon}, {\sc Max-$r$-Lin2-AA} has a kernel with $O(k^2)$ variables and equations. This kernel is a bikernel from {\sc Max-$r$-CSP-AA} to {\sc Max-$r$-Lin2-AA}. Thus, by Lemma \ref{lem:pk}, we obtain the following theorem of Alon et al. \cite{AlonEtAl2009a}.

\begin{theorem}
{\sc Max-$r$-CSP-AA} admits a polynomial-size kernel.
\end{theorem}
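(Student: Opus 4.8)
The plan is to reduce {\sc Max-$r$-CSP-AA} to {\sc Max-$r$-Lin2-AA}, for which a polynomial kernel is already available by Theorem~\ref{thm:Max-r-Lin2fpt}, and then to invoke Lemma~\ref{lem:pk} to transfer this back to a polynomial kernel for {\sc Max-$r$-CSP-AA} itself. The first step is to write down, for an instance with variable set $V=\{v_1,\ldots,v_n\}$ and weighted family $\cal F$ of Boolean functions each depending on at most $r$ variables, the polynomial $h(x)=\sum_{f\in\cal F}h_f(x)$ defined above, and to expand it into its Fourier form $h(x)=\sum_{S}c_S\prod_{i\in S}x_i$, where each $S$ has size at most $r$. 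Since $r$ is a fixed constant, this algebraic simplification takes polynomial time, produces at most $n^r$ monomials, and has the key property that $h(x^0)=2^r(U-A)$, where $U$ is the total weight satisfied by the assignment $x^0$; hence the answer is {\sc Yes} if and only if some assignment gives $h(x^0)\ge k2^r$.

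Second, I would read $\sum_S c_S\prod_{i\in S}x_i$ as the excess function of a {\sc Max-$r$-Lin2-AA} instance, with one weighted equation $\prod_{i\in S}x_i=\operatorname{sign}(c_S)$ of weight $|c_S|$ per nonzero monomial and new parameter $k'=2^r k$; by Remark~\ref{MaxLinExcess} the two instances are equivalent. Applying the quadratic-size kernelization of Theorem~\ref{thm:Max-r-Lin2fpt} to this {\sc Max-$r$-Lin2-AA} instance then yields, in polynomial time, an equivalent instance with $O(k'^2)=O(k^2)$ variables and equations, i.e.\ of size polynomial in $k$. Composed with the reduction, this is precisely a bikernelization from {\sc Max-$r$-CSP-AA} to {\sc Max-$r$-Lin2-AA} producing a bikernel of polynomial size. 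Finally, since {\sc Max-$r$-CSP-AA} is NP-complete in its non-parameterized form while the non-parameterized version of {\sc Max-$r$-Lin2-AA} is in NP, Lemma~\ref{lem:pk} converts this polynomial bikernel into a polynomial-size kernel for {\sc Max-$r$-CSP-AA}.

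I expect the genuinely delicate points here to be bookkeeping rather than deep: verifying that the Fourier expansion is computable in polynomial time for fixed $r$, and that the identity $h(x^0)=2^r(U-A)$ holds — this relies on $\Phi$ being closed under replacing literals by their complements, so that $A$ is a genuine average and the constant terms cancel correctly — together with checking the hypotheses of Lemma~\ref{lem:pk}. One honest limitation worth flagging is that this route does not control the degree of the resulting polynomial: Lemma~\ref{lem:pk} guarantees a kernel of size $k^{O(1)}$, but the exponent depends on the (unspecified) polynomial reduction witnessing NP-completeness of {\sc Max-$r$-CSP-AA}. Thus, unlike the {\sc Max-$r$-Sat-AA} special case, where an explicit reduction from {\sc Max-$r$-Lin2-AA} back to {\sc Max-$r$-Sat-AA} yields a quadratic kernel, here we obtain only an unspecified polynomial bound.
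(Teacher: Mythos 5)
Your proposal is correct and follows essentially the same route as the paper: form $h(x)$, compute its Fourier expansion, read it as the excess of a {\sc Max-$r$-Lin2-AA} instance, apply the quadratic kernel of Theorem~\ref{thm:Max-r-Lin2fpt} to get a polynomial-size bikernel, and invoke Lemma~\ref{lem:pk}. Your closing remarks about the uncontrolled degree of the polynomial and the contrast with {\sc Max-$r$-Sat-AA} match the paper's own caveats.
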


Applying a reduction from {\sc Max-$r$-Lin2-AA} to {\sc Max-$r$-Sat-AA} in which each monomial in (\ref{foureq}) is replaced by $2^{r-1}$ clauses, Alon et al. \cite{AlonEtAl2009a} obtained the following:

\begin{theorem}
{\sc Max-$r$-Sat-AA} admits a kernel with $O(k^2)$ clauses and variables.
\end{theorem}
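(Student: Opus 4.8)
The plan is to sandwich the quadratic kernelization of {\sc Max-$r$-Lin2-AA} from Theorem~\ref{thm:Max-r-Lin2fpt} between the reduction from {\sc Max-$r$-Sat-AA} to {\sc Max-$r$-Lin2-AA} described above and a reverse reduction from {\sc Max-$r$-Lin2-AA} back to {\sc Max-$r$-Sat-AA}, so that the composition is a \emph{kernelization} (a self-reduction) with an explicit size bound, rather than merely the bikernelization that Lemma~\ref{lem:pk} turns into a polynomial kernel of unspecified degree. Given an instance of {\sc Max-$r$-Sat-AA}, I would first compute the Fourier expansion $h(x)=\sum_{S\in{\cal F}}c_S\prod_{i\in S}x_i$ of~(\ref{foureq}) --- possible in polynomial time since $r$ is fixed, with integer coefficients $c_S$ and $|{\cal F}|\le n^r$ --- and read it as the excess of a {\sc Max-$r$-Lin2-AA} system $S_1$: the monomial $c_S\prod_{i\in S}x_i$ becomes the equation $\prod_{i\in S}x_i=\operatorname{sgn}(c_S)$ of weight $|c_S|$, so $\varepsilon_{S_1}(x)=h(x)$ and every equation has at most $r$ variables. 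Since the answer to {\sc Max-$r$-Sat-AA} is {\sc Yes} iff some $x^0$ has $h(x^0)\ge k2^r$, Remark~\ref{MaxLinExcess} says that $(S_1,k_1)$ with $k_1:=k2^{r-1}$ is an equivalent instance of {\sc Max-$r$-Lin2-AA}. Applying Theorem~\ref{thm:Max-r-Lin2fpt} produces an equivalent system $S_2$ with $O(k_1^2)=O(k^2)$ variables and equations, each still on at most $r$ variables because Rules~\ref{rule1} and~\ref{rulerank} never enlarge an equation's support; if that algorithm instead decides the instance outright, I output a fixed tiny {\sc Yes}- or {\sc No}-instance of {\sc Max-$r$-Sat-AA} and stop.

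Next I would turn $S_2$ back into a {\sc Max-$r$-Sat-AA} instance $S_3$ via a gadget replacing each equation by $2^{r-1}$ clauses. For an equation $\prod_{i\in S}x_i=(-1)^{b}$ of weight $w$ with $|S|=s\le r$, take the $2^{s-1}$ clauses on the variables of $S$ whose unique falsifying assignment $y$ has $\prod_{i\in S}y_i=-(-1)^{b}$, each of weight $2w$; a short computation using $[q=v]=(1+qv)/2$ for $q,v\in\{-1,1\}$ shows that, under a uniform random assignment, (weight of satisfied clauses in this block) minus its expectation equals $w(-1)^{b}\prod_{i\in S}x_i$, i.e.\ exactly this equation's contribution to $\varepsilon_{S_2}$. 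To make the block consist of $2^{r-1}$ clauses each on $r$ literals, pad it with $r-s$ auxiliary variables in all $2^{r-s}$ polarities; replacing a clause $C$ by the pair $C\vee z,\ C\vee\bar z$ leaves (satisfied weight) $-$ (its expectation) unchanged at every assignment, so the padding does not disturb the identity. Summing over the $O(k^2)$ equations, $S_3$ has $O(k^2)\cdot 2^{r-1}=O(k^2)$ clauses, at most $O(k^2)$ variables (those of $S_2$ together with a constant number of auxiliary ones), and $U_{S_3}(x)-A_{S_3}=\varepsilon_{S_2}(x)=h(x)$ for all $x$; hence, setting $k_3:=2k_1=k2^r$, the pair $(S_3,k_3)$ is an instance of {\sc Max-$r$-Sat-AA} equivalent to the original, and $k_3=O(k)$ since $r$ is constant.

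The only real obstacle is the bookkeeping of the gadget identity, since {\sc Max-$r$-Lin2-AA} measures excess as (weight satisfied) $-$ (weight falsified) against the bound $W/2$, whereas {\sc Max-$r$-Sat-AA} works against $A=\sum_j w_j(1-2^{-r_j})$; the weights ($2w$) and the chosen parity of the falsifying assignment must be calibrated so that, even after padding, each block contributes precisely $w(-1)^{b}\prod_{i\in S}x_i$ and the two ``above-average'' thresholds line up. One should also check that the composed transformation is a genuine kernelization and not merely an fpt-reduction, which holds because the parameter grows only from $k$ to $k2^r$ with $r$ fixed and every step runs in polynomial time; the remaining points (polynomial-time computation of the Fourier expansion for fixed $r$, invariance of arity under Rules~\ref{rule1}--\ref{rulerank}, and the trivial treatment of already-decided instances) are routine.
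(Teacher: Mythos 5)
Your proposal takes essentially the same route as the paper: reduce {\sc Max-$r$-Sat-AA} to {\sc Max-$r$-Lin2-AA} via the Fourier expansion~(\ref{foureq}), apply the quadratic kernel of Theorem~\ref{thm:Max-r-Lin2fpt}, and translate each surviving equation back into $2^{r-1}$ clauses. The paper states this reverse reduction in a single sentence, and your weight/parity calibration of the clause gadget correctly supplies the details it omits.
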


Using also Theorem \ref{thm:linkernel4lin2}, it is easy to improve this theorem with respect to the number of variables in the kernel.
This result was first obtained by Kim and Williams \cite{KimWil}.

\begin{theorem}
{\sc Max-$r$-Sat-AA} admits a kernel with $O(k)$  variables.
\end{theorem}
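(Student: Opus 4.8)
The plan is to combine the quadratic-clause kernel with the linear-variable kernel for {\sc Max-$r$-Lin2-AA}. We already know from the previous theorem that {\sc Max-$r$-Sat-AA} admits a kernel with $O(k^2)$ clauses and variables; call this instance $I_0$. The issue is purely the number of variables, so first I would pass through the linear-algebraic machinery one more time. Apply the Fourier-expansion reduction of Subsection \ref{subsec:max-r-csp} to $I_0$: since every clause has at most $r$ variables, the polynomial $h(x)$ in (\ref{foureq}) has degree at most $r$, and we obtain an equivalent instance of {\sc Max-$r$-Lin2-AA} with parameter $k'=k2^r$ (a constant multiple of $k$, as $r$ is fixed) whose excess equals $h(x)$ up to the scaling factor $2^r$.

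Next I would invoke Theorem \ref{thm:linkernel4lin2}: {\sc Max-$r$-Lin2-AA} admits a kernel with at most $(2k'-1)r$ variables. Since $k'=k2^r$ and $r$ is a constant, this is a kernel on $O(k)$ variables; together with Rule \ref{rulerank} the number of equations is at most $2^{O(k)}$, but crucially the variable count is linear in $k$. Call this reduced {\sc Max-$r$-Lin2-AA} instance $I_1$; it has $n_1 = O(k)$ variables.

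Finally I would map $I_1$ back to {\sc Max-$r$-Sat-AA} using the same reduction Alon et al. use in the preceding theorem: replace each monomial $c_S\prod_{i\in S}x_i$ of $I_1$ by $2^{r-1}$ clauses on exactly the variables indexed by $S$, with appropriate signs and weights, so that the average-satisfied bound is preserved and the maximum excess is unchanged. This reintroduces no new variables — every clause uses only variables already present in $I_1$ — so the resulting {\sc Max-$r$-Sat-AA} instance still has $n_1 = O(k)$ variables. The parameter stays within a constant factor of $k$ throughout (a factor $2^r$ here, $2^r$ there, all absorbed since $r$ is fixed), and each step is polynomial-time, so composing the three reductions gives the claimed kernel with $O(k)$ variables.

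The only point needing care — the ``main obstacle'' — is bookkeeping: checking that the two reductions (Max-$r$-Sat-AA $\to$ Max-$r$-Lin2-AA via Fourier expansion, and Max-$r$-Lin2-AA $\to$ Max-$r$-Sat-AA via monomial-to-clauses) genuinely preserve \textsc{Yes}/\textsc{No} answers with only a constant blow-up in the parameter, so that the overall transformation is a bona fide kernelization rather than merely an fpt-reduction, and that applying Rule \ref{rulerank} to $I_1$ does not inflate the variable count beyond the $(2k'-1)r$ guarantee. Both facts are already established in the excerpt (the reductions are exactly those used for the two preceding theorems, and Lemma \ref{lem:SS'} guarantees Rule \ref{rulerank} preserves the maximum excess), so the argument is short once these are cited correctly.
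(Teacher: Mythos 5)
Your proposal is correct and follows essentially the same route the paper indicates: reduce to {\sc Max-$r$-Lin2-AA} via the Fourier expansion, apply Theorem \ref{thm:linkernel4lin2} to get $O(k)$ variables, and translate monomials back into $2^{r-1}$ clauses each, introducing no new variables. (One small remark: after Reduction Rule \ref{rule1} the number of equations on $O(k)$ variables with at most $r$ variables each is in fact $O(k^r)$, i.e.\ polynomial, rather than the loose $2^{O(k)}$ bound you state, but this does not affect the claim.)
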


\section{MaxLin2-AA and MaxSat-AA} \label{sec:maxlin}

Recall that {\sc MaxLin2-AA} is the same problem as {\sc Max-$r$-Lin2-AA}, but the number of variables in an equation is not bounded. Thus, {\sc MaxLin2-AA} is a generalization of {\sc Max-$r$-Lin2-AA}. In this section we present a scheme of a recent proof by Crowston, Fellows et al. \cite{CroFelGut} that {\sc MaxLin2-AA} is fpt and has a kernel with polynomial number of variables. This result finally solved an open question of Mahajan, Raman and Sikdar \cite{MahajanRamanSikdar09}. Still, we do not know whether {\sc MaxLin2-AA} has a kernel of polynomial size and we are able to give only partial results on the topic.

\begin{theorem}\label{thm:main}\cite{CroFelGut}
The problem {\sc MaxLin2-AA} has a kernel with at most $O(k^2\log k)$ variables.
\end{theorem}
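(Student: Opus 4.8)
The strategy is to combine the two tools the paper has already set up for \textsc{MaxLin2-AA}: the algorithm $\cal H$ together with the notion of an $M$-sum-free set (Lemmas \ref{lemExcess} and \ref{lem:M-freeapplic}), and the reduction rules that make a system irreducible (Lemma \ref{lem:SS'}). By Remark \ref{MaxLinExcess} it suffices to decide whether the maximum excess of the (irreducible) system $S$ is at least $2k$, and by Lemma \ref{lem:M-freeapplic} this is guaranteed whenever the set $M$ of rows of the matrix $A$ contains an $M$-sum-free subset of size $2k$ (here all weights are positive integers, so $w_{\min}\ge 1$). Hence the entire task reduces to the following dichotomy: either we can exhibit a large $M$-sum-free set — in which case we answer \textsc{Yes} — or the \emph{absence} of such a set must force a strong structural restriction on $M$, and therefore on the number of variables, giving the kernel.

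**Key steps.** First I would preprocess: apply Rules \ref{rule1} and \ref{rulerank} to make $S$ irreducible, so that $A$ has full column rank and $M$ spans $\mathbb{F}_2^n$ (in particular $n=\operatorname{rank}A=|M\text{'s span dimension}|$, and $m=|{\cal F}|$ is at most, roughly, $2^n$ but that is not what we need). Second, I would prove the combinatorial heart of the argument: if $M\subseteq \mathbb{F}_2^n$ spans $\mathbb{F}_2^n$ and $M$ has no $M$-sum-free subset of size $k'$, then $n = O(k'^2\log k')$ — equivalently, every spanning set that is "sum-free-poor" lives in a space of bounded dimension. The natural way to attack this is a greedy/probabilistic construction: start building $K$ greedily by adding basis vectors of $M$; if at some stage $K$ has size $j<k'$ and cannot be extended, every vector of $M\setminus\langle K\rangle$ must, when added, create a forbidden sum, which pins each such vector down to lie in one of the $2^j$ affine combinations determined by subsets of $K$ shifted by elements of $M$. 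Counting how many genuinely new directions this can contribute, iterated over the $k'$ rounds, yields a bound of the form $n\le \sum_{j<k'} (\text{something like } 2^{O(j)})$ — which is too weak. To get $O(k^2\log k)$ one instead argues more cleverly, splitting vectors by support or using a random restriction / deletion argument on the coordinates so that a random subset of coordinates of size $O(k\log k)$ already "sees" enough independence, and then repeating $O(k)$ times; I would set $k'=2k$. Third, with that structural lemma in hand: run the greedy search for a $(2k)$-element $M$-sum-free set; if found, output \textsc{Yes}; if not, the lemma gives $n=O(k^2\log k)$, and after Rule \ref{rulerank} also $m$ is bounded (a system in $O(k^2\log k)$ variables over $\mathbb{F}_2$, once Rule \ref{rule1} is re-applied, has at most $2^{O(k^2\log k)}$ distinct equations, but in fact one shows $m$ is polynomially bounded in $n$ after irreducibility is maintained — or one simply keeps only $n$ many linearly independent rows plus book-keeping), so $(S,k)$ is its own kernel on $O(k^2\log k)$ variables.

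**Main obstacle.** The delicate point is precisely the structural lemma bounding $n$ by $O(k^2\log k)$ when no $M$-sum-free set of size $2k$ exists. A naive greedy count only gives an exponential-in-$k$ bound, so the quadratic-times-log bound requires a genuinely non-trivial argument — this is where the $\log k$ factor (and the gap from the ideal linear bound) comes from. I expect the proof in \cite{CroFelGut} to handle this via a two-level scheme: first reduce to the case where all vectors have small support (bounded in terms of $k$), perhaps by observing heavy vectors can be "split" or removed, and then within the small-support regime use a sunflower-type or pigeonhole argument on the coordinate sets to extract the sum-free set. Everything else — the reduction rules, Algorithm $\cal H$, Remark \ref{MaxLinExcess}, and Lemma \ref{lem:M-freeapplic} — is already available from the excerpt and slots in mechanically. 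One should also note (and this is why only "variables", not "size", is claimed) that we do not get control on the bit-size of the coefficients/weights, so this is a polynomial-\emph{variable} kernel rather than a polynomial-\emph{size} kernel, consistent with the remark that polynomial-size kernelizability of \textsc{MaxLin2-AA} remains open.
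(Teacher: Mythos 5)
There is a genuine gap, and it sits exactly where you located the difficulty. Your ``combinatorial heart'' --- \emph{if $M$ spans $\mathbb{F}_2^n$ and has no $M$-sum-free subset of size $k'$, then $n=O(k'^2\log k')$} --- is false, and no amount of sunflower or random-restriction machinery will rescue it. Take $M=\mathbb{F}_2^n\setminus\{0\}$ (realizable as an irreducible system: one weight-one equation $\prod_{i\in I}x_i=1$ for every nonempty $I\subseteq[n]$). Any two distinct nonzero vectors sum to a nonzero vector, i.e.\ to an element of $M$, so the largest $M$-sum-free set has size $1$ while $n$ is arbitrary. What is actually true (the lemma following Theorem \ref{thm:ing2} in the paper) is a bound involving $m$: if $2k\le m\le 2^{n/(2k-1)}-2$ then an $M$-sum-free set of size $2k$ exists, so the \emph{absence} of such a set only tells you $m> 2^{n/(2k-1)}-2$, i.e.\ $n<(2k-1)\log_2(m+2)$ --- no bound on $n$ in terms of $k$ alone.

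The missing ingredient is a second, algorithmic tool for the dense regime: Theorem \ref{thm:ing1}, a depth-bounded search tree deciding whether the excess reaches $2k$ in time $n^{2k}(nm)^{O(1)}$. When $m\ge n^{2k}$ this is polynomial in the input size, so such instances are solved outright (and replaced by a trivial kernel). The $O(k^2\log k)$ bound then falls out of balancing the two regimes: the only surviving case is $2^{n/(2k-1)}-1\le m\le n^{2k}-1$, whence $n^{2k}\ge 2^{n/(2k-1)}$, i.e.\ $n\le 4k^2\log n$, which gives $n\le(2k)^4$ and then $n=O(k^2\log k)$. Your sum-free-set half (via Lemma \ref{lem:M-freeapplic} and Remark \ref{MaxLinExcess}) is the right first half of the dichotomy, but the second half must be ``solve dense instances by brute force in FPT-turned-polynomial time,'' not a structural bound on $n$. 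One further slip: after irreducibility $m$ is \emph{not} polynomially bounded in $n$ (it can be $2^n-1$, as above); that is precisely why the theorem bounds only the number of variables and why a polynomial-\emph{size} kernel remains open, as you correctly observe at the end.
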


The proof of this theorem in \cite{CroFelGut} which we give later is based on Theorems \ref{thm:ing2} and \ref{thm:ing1}.

\begin{theorem}\label{thm:ing2}\cite{CrowstonSWAT}
Let $S$ be an irreducible system of {\sc MaxLin2-AA} and let $k\ge 2.$ If $k\le m\le 2^{n/(k-1)}-2$, then the maximum excess of $S$ is at least $k$. Moreover, we can find an assignment with excess of at least $k$ in time $m^{O(1)}$.
\end{theorem}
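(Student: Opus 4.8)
The plan is to run Algorithm $\cal H$ on the irreducible system $S$ and to show that, for a suitable choice of equations in Step~1, it marks equations of total weight at least $k$; by Lemma~\ref{lemExcess} this maximum $\cal H$-excess coincides with the maximum excess of $S$, which gives the result. Since all weights are positive integers, it suffices to mark $k$ equations, so we may forget the weights and argue purely combinatorially. Let $A$ be the matrix of $S$ and let $M\subseteq \mathbb{F}_2^n$ be its set of rows. By Lemma~\ref{lem:M-freeapplic}, it is enough to exhibit an $M$-sum-free subset $K\subseteq M$ with $|K|\ge k$: feeding the corresponding $k$ equations into Step~1 of $\cal H$ produces excess at least $w_{\min}\cdot k\ge k$, and the assignment is found in polynomial time.

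So the whole theorem reduces to the following counting statement: if $k\le |M|=m\le 2^{n/(k-1)}-2$, then $M$ contains an $M$-sum-free set of size $k$. I would build $K=\{v_1,\dots,v_k\}$ greedily. Note first that, because $S$ is irreducible (Rule~\ref{rulerank} has been applied), $\mathrm{rank}\,A=n$, so $M$ spans $\mathbb{F}_2^n$ and in particular $m\ge n$; also Rule~\ref{rule1} guarantees the rows of $A$ are distinct and nonzero. Pick $v_1\in M$ arbitrarily. Having chosen a linearly independent $M$-sum-free set $\{v_1,\dots,v_j\}$ with $j<k$, consider the subspace (or rather the set of subset-sums) generated by these vectors together with the partial sums needed to detect a violation. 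The point is that a vector $v\in M$ is a "bad" choice for $v_{j+1}$ only if some sum of a subset of $\{v_1,\dots,v_j,v\}$ lands in $M$ — equivalently $v$ lies in a set of at most $2^{j}\cdot|M|$ vectors (for each of the $\le 2^{j}$ subset-sums $s$ of $\{v_1,\dots,v_j\}$ and each $u\in M$, the single value $s+u$ is forbidden, plus the $\le 2^j$ subset-sums themselves to keep independence). Hence as long as
\[
2^{j}\cdot m + 2^{j} < m,
\]
wait — that is the wrong direction, so the bookkeeping must be done more carefully: the forbidden set has size at most $2^{j-1}m$ after using that sums come in pairs, and one shows $2^{k-2}m < m$ is implied by $m\le 2^{n/(k-1)}-2$ together with $m\ge n$. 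This is exactly the inequality $(m+2)^{k-1}\le 2^{n}$ rearranged, which is where the hypothesis $m\le 2^{n/(k-1)}-2$ is consumed.

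The main obstacle, and the part that needs genuine care rather than hand-waving, is getting the greedy counting to close: the naive bound "$2^j$ subset-sums times $m$ rows" is too weak by roughly a factor of $m$, and one must instead observe that extending $\{v_1,\dots,v_j\}$ fails only if $v$ equals $s+u$ for a subset-sum $s$ of the already-chosen vectors and some $u\in M$ with $u\notin\{v_1,\dots,v_j\}$, and that this is governed by the size of the span of $\{v_1,\dots,v_j\}$ (namely $2^{j}$), so the count of forbidden vectors is at most $2^{j}(m - j) + 2^{j}$; requiring this to be $<m$ for all $j\le k-2$ reduces to $2^{k-2}m < m - (k-2)$, which in turn follows from $m\le 2^{n/(k-1)}-2$ after the substitution $n\ge \log_2((m+2)^{k-1}) = (k-1)\log_2(m+2)$. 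I would isolate this inequality as a short sublemma and verify it by elementary estimates, using $k\ge 2$ and $m\ge k$. Once the sublemma is in place, the greedy construction yields the desired $K$, Lemma~\ref{lem:M-freeapplic} converts it to an assignment of excess $\ge k$ in time $(nm)^{O(1)}\le m^{O(1)}$, and Lemma~\ref{lemExcess} finishes the proof.
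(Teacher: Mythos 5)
Your reduction of the theorem to exhibiting an $M$-sum-free subset of size $k$ and then invoking Lemma~\ref{lem:M-freeapplic} is exactly the route the paper takes (the paper then quotes from \cite{CrowstonSWAT} a lemma asserting that if $M\subseteq\mathbb{F}_2^n$ contains a basis and the zero vector and $k+1\le|M|\le 2^{n/k}$, then an $M$-sum-free subset with $k+1$ vectors can be found in time $|M|^{O(1)}$). The gap is in the combinatorial core, and it is a real one: the greedy construction you sketch cannot be made to work by better bookkeeping. A vector $v$ is a bad extension of $\{v_1,\dots,v_j\}$ whenever $v\in M+\bigl(\Sigma\setminus\{0\}\bigr)$, where $\Sigma$ is the set of subset sums of $\{v_1,\dots,v_j\}$; this forbidden set can have size up to $(2^j-1)m$, which already exceeds $|M|=m$ at $j=1$, so a union bound over $M$ can never certify a good choice of $v_{j+1}$. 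Your two attempted repairs require $2^{k-2}m<m$ (respectively $2^{k-2}m<m-(k-2)$), inequalities that are false for every $k\ge 3$ and every $m\ge 1$; the hypothesis $m\le 2^{n/(k-1)}-2$ cannot rescue an inequality that fails on its face.

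The argument that actually proves the quoted lemma is not a one-vector-at-a-time selection but an iterated-sumset argument. Let $M_0$ be the set of rows of $A$ together with the zero vector, and let $A_i$ be the set of sums of at most $i$ elements of $M_0$. Irreducibility under Rule~\ref{rulerank} forces the rows to span $\mathbb{F}_2^n$, so the chain $A_1\subseteq A_2\subseteq\cdots$ eventually equals $\mathbb{F}_2^n$; since $|A_i|\le|M_0|^i$ and $|M_0|=m+1\le 2^{n/(k-1)}-1<2^{n/(k-1)}$, the first index $\ell$ with $A_\ell=\mathbb{F}_2^n$ satisfies $2^n\le|M_0|^\ell<2^{\ell n/(k-1)}$, hence $\ell\ge k$. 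Take any $w\notin A_{\ell-1}$ and a shortest representation $w=u_1+\cdots+u_\ell$ with each $u_i$ a row; the set $\{u_1,\dots,u_\ell\}$ is automatically $M$-sum-free, because if a sum of two or more of the $u_i$ equalled a row (or equalled $0$), replacing those summands by that single element (or deleting them) would place $w$ in $A_{\ell-1}$. This ``shortest representation'' idea is the missing ingredient, and it is qualitatively different from avoiding forbidden vectors; it is where the hypothesis $m\le 2^{n/(k-1)}-2$ is genuinely consumed. Note also that a naive implementation of this argument enumerates sets of size up to $2^n$, so obtaining the claimed $m^{O(1)}$ running time requires an additional algorithmic idea that your proposal does not address.
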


This theorem can easily be proved using Lemma \ref{lem:M-freeapplic} and the following lemma.

\begin{lemma}\cite{CrowstonSWAT}
Let $M$ be a set in $\mathbb{F}^n_2$ such that $M$ contains a basis of $\mathbb{F}^n_2$, the
zero vector is in $M$ and $|M|<2^n$. If $k$ is a positive integer and $k+1\le |M|\le 2^{n/k}$
then, in time $|M|^{O(1)}$, we can find an $M$-sum-free subset $K$ of $M$ with at least $k+1$ vectors.
\end{lemma}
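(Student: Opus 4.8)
The plan is to prove the lemma by induction on $n$, interleaving a polynomial-time greedy phase with a ``quotient'' recursion. The key elementary fact (an easy consequence of the reformulation quoted just before the lemma) is: if $K\subseteq M$ is $M$-sum-free and $W=\mathrm{span}(K)$, then for $v\in M$ the set $K\cup\{v\}$ is $M$-sum-free iff $v\notin W$ and $(v+W)\cap M=\{v\}$. Consequently, if $K$ is \emph{maximal} $M$-sum-free, then every $W$-coset meeting $M$ contains at least two points of $M$, and $W$ itself contains at least $|K|+1$ of them.

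The algorithm first builds greedily a maximal $M$-sum-free set $K$: starting from $\emptyset$, while some $v\in M$ is appendable, append it; by the fact above ``appendable'' means $v\notin\mathrm{span}(K)$ and $|(v+\mathrm{span}(K))\cap M|=1$, which is tested in $|M|^{O(1)}$ time by linear algebra over $\mathbb{F}_2$ while scanning $M$. If $|K|\ge k+1$ we output a $(k+1)$-subset of $K$. Otherwise put $j=|K|$; then $2\le j\le k$, the bound $j\ge2$ holding because a set admitting no $M$-sum-free pair is, using $0\in M$ and characteristic two, closed under addition, hence a subspace, hence all of $\mathbb{F}^n_2$, contradicting $|M|<2^n$.

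Now let $W=\mathrm{span}(K)$, let $\pi\colon\mathbb{F}^n_2\to\mathbb{F}^n_2/W$ be the quotient map (target dimension $n-j<n$), and set $\bar M=\pi(M)$, which spans $\mathbb{F}^n_2/W$, contains $0$, and has size $c:=|\bar M|$. Counting points of $M$ coset by coset gives $|M|\ge 2c+j-1$. If $c=2^{n-j}$ then $|M|\ge 2^{\,n-j+1}+j-1$, and since $2\le j\le n$ we have $n-j+1\ge n/j\ge n/k$, so $|M|>2^{n/k}$, contradicting the hypothesis; hence $c<2^{n-j}$. From $|M|\ge 2c+j-1>2c$ and $|M|\le 2^{n/k}$ we get $c<2^{\,n/k-1}\le 2^{(n-j)/k}$ (last step using $j\le k$), and, since also $c\ge2$, this forces $n>2k$, whence $c\ge(n-j)+1\ge n-k+1>k+1$. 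Thus $\bar M$ satisfies all hypotheses of the lemma in dimension $n-j<n$ with the same parameter $k$, so the induction hypothesis gives, in time $|\bar M|^{O(1)}$, an $\bar M$-sum-free $\bar L\subseteq\bar M$ with $|\bar L|\ge k+1$. We then lift: pick for each element of $\bar L$ a $\pi$-preimage in $M$, forming $L\subseteq M$ with $|L|=|\bar L|$; for every $L'\subseteq L$ with $|L'|\ge2$, the vector $\pi(\sum_{u\in L'}u)$ is a sum of $\ge2$ distinct vectors of $\bar L$, hence lies outside $\bar M=\pi(M)$, so $\sum_{u\in L'}u\notin M$; therefore $L$ is $M$-sum-free with $|L|\ge k+1$, and everything above is polynomial time. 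The recursion terminates because $n$ strictly decreases; the degenerate small-$n$ cases ($n\le k$) do not arise, since then $|M|\le 2^{n/k}\le 2$ is incompatible with $M$ spanning $\mathbb{F}^n_2$ and $|M|<2^n$.

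The step I expect to be the crux is the bookkeeping that certifies $(\bar M,k)$ as a legal recursive instance, i.e.\ establishing \emph{at once} that $c\le 2^{(n-j)/k}$ and $c\ge k+1$. The first rests on the ``doubling'' estimate $|M|\ge 2c$, which is available only because maximality of $K$ forces every occupied $W$-coset to carry at least two points of $M$ — the bare spanning bound $c\ge n-j+1$ is far too weak for this. The second is not evident a priori; it emerges only once one observes that the surviving case $c<2^{n-j}$ already forces $n>2k$. Eliminating the case $c=2^{n-j}$ using the constraints $2\le j\le k\le n$ is a short but essential computation, and it is exactly what pins down $n/k$ as the right exponent.
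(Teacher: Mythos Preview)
The survey paper does not actually supply a proof of this lemma; it merely states it with the citation to Crowston et al.\ (SWAT 2010). So there is no in-paper proof to compare against, and I will assess your argument on its own.

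Your overall strategy is sound and natural: greedily grow a maximal $M$-sum-free set $K$, and if it falls short of size $k+1$, pass to the quotient $\mathbb{F}^n_2/\mathrm{span}(K)$ and recurse. The key counting step is correct: maximality forces every occupied $W$-coset other than $W$ itself to contain at least two points of $M$, while $W\cap M\supseteq K\cup\{0\}$ gives $j+1$ points there, so $|M|\ge 2c+j-1$. From this you correctly derive $c<2^{n-j}$, $c\le 2^{(n-j)/k}$, and $c\ge k+1$, verifying all recursion hypotheses. The lifting argument is also fine: sums of $\ge 2$ lifted vectors project to sums of $\ge 2$ distinct vectors of $\bar L$, which lie outside $\bar M=\pi(M)$, hence the sums lie outside $M$.

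There is, however, a genuine (if easily repaired) gap in your claim that $j\ge 2$. You argue that if $M$ admits no $M$-sum-free \emph{pair} then $M$ is closed under addition and hence equals $\mathbb{F}^n_2$; this is correct and shows some $M$-sum-free pair exists. But your greedy, as described, appends an \emph{arbitrary} appendable vector at each step, and a bad first choice can leave it stuck at $j=1$. Concretely, take $n=3$ and $M=\{0,e_1,e_2,e_1{+}e_2,e_3,e_1{+}e_3\}$: here $M+e_1=M$, so $\{e_1\}$ is already maximal, yet $\{e_2,e_3\}$ is an $M$-sum-free pair. Two easy fixes are available: (i) start the greedy by scanning all $O(|M|^2)$ pairs to find an $M$-sum-free pair (guaranteed to exist), then extend; or (ii) simply allow $j=1$ in your analysis. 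Option (ii) works with only cosmetic changes: for $j=1$ one gets $|M|\ge 2c$ (in fact $=2c$), hence $c\le 2^{n/k-1}\le 2^{(n-1)/k}$; from $c\ge 2$ one obtains $n\ge 2k$, whence $c\ge (n-1)+1=n\ge 2k\ge k+1$. The case $c=2^{n-1}$ is excluded directly by $|M|<2^n$. With either fix, your proof is complete and runs in time $|M|^{O(1)}$ since the recursion depth is at most $n\le |M|-1$.
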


\begin{theorem}\label{thm:ing1}\cite{CroFelGut}
There exists an $n^{2k}(nm)^{O(1)}$-time algorithm for {\sc MaxLin2-AA} that returns an
assignment of excess of at least $2k$ if one exists, and returns {\sc No} otherwise.
\end{theorem}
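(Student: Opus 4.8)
The plan is to organise a bounded-depth exhaustive search over the choices made by Algorithm~$\cal H$. First, apply Reduction Rules~\ref{rule1} and~\ref{rulerank} to turn $S$ into an irreducible system; by Lemma~\ref{lem:SS'} this takes polynomial time and does not change the maximum excess, and by Remark~\ref{MaxLinExcess} the task is exactly to decide whether this maximum excess is at least $2k$. (If the system has fewer than $2k$ equations then, after Reduction Rule~\ref{rulerank}, it has fewer than $2k$ variables, and we may simply try all at most $2^{2k}$ assignments; so assume it has at least $2k$ equations.) By Lemma~\ref{lemExcess} the maximum excess equals the maximum ${\cal H}$-excess, i.e.\ the largest total weight of equations markable by a run of~$\cal H$. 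The elementary but crucial point is that all weights stay positive integers throughout~$\cal H$ (Reduction Rule~\ref{rule1} keeps weights integral and positive, and Steps~1--3 of $\cal H$ leave weights untouched), so every iteration of the while-loop adds at least $1$ to the weight of the marked equations. Hence the maximum ${\cal H}$-excess is at least $2k$ if and only if some run of $\cal H$ attains marked weight at least $2k$ already within its first $2k$ iterations, and it suffices to search all partial runs of $\cal H$ of length at most $2k$.

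I would implement this as a search tree of depth at most $2k$, whose nodes are the states reachable by partial $\cal H$-runs (a state being the current system together with the accumulated weight of the equations marked so far). At a node we declare success if the accumulated weight is at least $2k$; we abandon the branch if the current system is empty; otherwise we branch over the admissible choices in Step~1 of~$\cal H$ and recurse. If a leaf reports success, an assignment of excess at least $2k$ is recovered in polynomial time exactly as in the proof of Lemma~\ref{lem:M-freeapplic}: the marked variables and the (transformed) marked equations form a triangular system, so we set the unmarked variables arbitrarily and back-substitute to satisfy every marked equation, which yields excess at least the accumulated weight.

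The main obstacle is controlling the branching factor. Branching over all admissible pairs $(e,x_l)$ in Step~1 of~$\cal H$ --- an equation $e$ together with a variable $x_l$ occurring in it --- costs up to $nm$ per level, hence $(nm)^{O(k)}$ overall, which is too expensive when $m$ is large (and after Reduction Rule~\ref{rule1} one only knows $m\le 2^n$). The heart of the argument is therefore a lemma showing that it is enough to branch over the at most $n$ choices of \emph{which variable} $x_l$ is marked next, the equation being pinned down by a fixed rule. To see why this is plausible, note that, for two equations $e,e'$ both containing $x_l$, the systems produced by the two versions of the $\cal H$-step differ only by adding one fixed equation (built from the symmetric difference of the supports of $e$ and $e'$) to a subset of the remaining equations; one then has to argue that such a perturbation does not change the maximum excess that can still be reached within the remaining budget of iterations --- equivalently, that the quantity $w(e)+(\text{maximum }{\cal H}\text{-excess of the system obtained by the step})$ behaves correctly under the change of equation. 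Establishing this invariance is where I expect the real work to lie. Granting it, the search tree has branching factor at most $n$ and depth at most $2k$, the processing at each node (one $\cal H$-step, the two tests, and the bookkeeping) is $(nm)^{O(1)}$, and the total running time is $n^{2k}(nm)^{O(1)}$, with a satisfying assignment produced whenever one of excess at least $2k$ exists.
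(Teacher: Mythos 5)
Your setup is sound and matches the only thing the survey discloses about the argument in \cite{CroFelGut}, namely that it is a depth-bounded search tree: irreducibility via Rules \ref{rule1} and \ref{rulerank}, Remark \ref{MaxLinExcess} and Lemma \ref{lemExcess} to reduce the question to the maximum ${\cal H}$-excess, the observation that positive integral weights bound the relevant depth by $2k$, and back-substitution through the triangular marked system to recover an assignment. The trouble is exactly the step you flag as ``where the real work lies'': the invariance you hope for is false. For a fixed variable $x_l$ and an equation $e\ni x_l$, performing the ${\cal H}$-step with $(e,x_l)$ amounts to restricting to the hyperplane of assignments satisfying $e$, so that $w(e)+(\mbox{maximum excess of the resulting system})=\max\{\varepsilon_S(x):\ x \mbox{ satisfies } e\}$, and this quantity genuinely depends on $e$, not only on $x_l$. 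Concretely, take the irreducible system $x_1=1$ of weight $2$, $x_1x_2=1$ of weight $1$, $x_2=-1$ of weight $5$ (multiplicative form): the maximum excess is $6$, attained at $x=(1,-1)$; marking $x_1$ via the first equation yields $2+4=6$, but marking $x_1$ via the second yields only $1+3=4$, and the analogous loss occurs for $x_2$ if its lighter equation is chosen. Hence a tree that branches only on the next marked variable and pins the equation down by a fixed polynomial-time rule can answer {\sc No} on a {\sc Yes}-instance (with the ``minimum-weight equation'' rule, every canonical run in this example tops out at $4$): picking the right equation through $x_l$ is tantamount to knowing which equation through $x_l$ the optimal assignment satisfies.

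The invariance that does hold is the dual one: for a \emph{fixed equation} $e$, the choice of which of its variables to mark is immaterial, since the resulting systems are related by an invertible linear change of coordinates on $\{x:\ e(x)=1\}$ and therefore have the same maximum (${\cal H}$-)excess. But that only reduces the branching factor from $nm$ to the number of equations of the current system, i.e.\ to $m$, which after Rule \ref{rulerank} may still be as large as $2^n$, giving $m^{2k}$ rather than $n^{2k}$. Driving the branching factor down to $n$ is precisely the content of the theorem, and your proposal does not supply it; since the survey itself says nothing beyond ``a special depth-bounded search tree,'' that ingredient has to come from \cite{CroFelGut}. The remaining components of your argument (the depth bound, the recovery of an assignment from a successful leaf, and the polynomial work per node) are correct.
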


The proof of this theorem in \cite{CroFelGut} is based on constructing a special depth-bounded search tree.

Now we will present the proof of Theorem \ref{thm:ing2} from  \cite{CroFelGut}.

\noindent{\bf Proof of Theorem \ref{thm:ing2}:} Let $\cal L$ be an instance of {\sc MaxLin2-AA} and let $S$ be
the system of $\cal L$ with $m$ equations and $n$ variables.
We may assume that $S$ is irreducible. Let the parameter $k$ be an
arbitrary positive integer.

If $m<2k$ then $n<2k=O(k^2\log k)$. If $2k\le m\le 2^{n/(2k-1)}-2$
then, by Theorem \ref{thm:ing2} and Remark \ref{MaxLinExcess}, the answer to $\cal L$ is {\sc Yes}
and the corresponding assignment can be found in polynomial time. If $m\ge
n^{2k}$ then, by Theorem \ref{thm:ing1}, we can solve $\cal L$ in
polynomial time.

Finally we consider the case $2^{n/(2k-1)}-1 \le m \le n^{2k}-1$.
Hence, $n^{2k}\ge 2^{n/(2k-1)}.$
Therefore, $4k^2\ge 2k+n/\log n \ge \sqrt{n}$ and $n\le (2k)^4$. Hence,
$n\le 4k^2\log n\le 4k^2\log(16k^4)=O(k^2\log k).$

Since $S$ is irreducible, $m<2^n$ and thus we have obtained the desired kernel.
\qed

\vspace{3mm}

Now let us consider some cases where we can prove that {\sc MaxLin2-AA} has a polynomial-size kernel.
Consider first the case when each equation in $S$ has odd number of variables. Then we have the following theorem proved by Gutin et al. \cite{GutinKimSzeiderYeo09a}.

\begin{theorem}\label{thm:odd}
 The special case of {\sc MaxLin2-AA} when each equation in $S$ has odd number of variables, admits a kernel with at most $4k^2$ variables and equations.
\end{theorem}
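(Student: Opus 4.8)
The plan is to rerun the probabilistic argument behind Theorem~\ref{thm:Max-r-Lin2fpt}, but to replace the hypercontractive estimate (Lemmas~\ref{lem32} and~\ref{lem41}) by a symmetry argument that is available precisely because every equation is odd. First I would apply Reduction Rule~\ref{rule1} exhaustively. This only ever merges two equations with the same variable set, so it keeps every equation of odd length, and it makes the monomials $\prod_{i\in I}x_i$, $I\in{\cal F}$, pairwise distinct, which is what Lemma~\ref{lem:Pars} needs; by Lemma~\ref{lem:SS'} it does not change the maximum excess.

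Next comes the key step. Assign each $x_i$ the value $+1$ or $-1$ uniformly and independently at random and put $X=\varepsilon(x)=\sum_{I\in{\cal F}}w_I(-1)^{b_I}\prod_{i\in I}x_i$. Since $|I|$ is odd for each $I\in{\cal F}$ we have $\prod_{i\in I}(-x_i)=-\prod_{i\in I}x_i$, hence $\varepsilon(-x)=-\varepsilon(x)$; as $-x$ has the same distribution as $x$, the random variable $X$ is symmetric, and $\mathbb{E}[X^2]$ is finite. By Lemma~\ref{lem:Pars}, $\mathbb{E}[X^2]=\sum_{I\in{\cal F}}w_I^2\ge|{\cal F}|=m$, since each weight $w_I$ is a positive integer. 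Hence Lemma~\ref{eqsym} yields an assignment with $\varepsilon(x)\ge\sqrt{\mathbb{E}[X^2]}\ge\sqrt m$, i.e.\ the maximum excess of $S$ is at least $\sqrt m$. By Remark~\ref{MaxLinExcess} the answer is {\sc Yes} whenever the maximum excess is at least $2k$, so if $m\ge 4k^2$ we output a fixed trivial {\sc Yes}-instance; otherwise $m<4k^2$, and since ${\rm rank}\,A\le m$ the variable-reduction step described below leaves at most $m<4k^2$ variables, which gives the claimed kernel with at most $4k^2$ variables and equations.

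The calculation above is short, so there is no genuinely hard step; the one point that needs care --- and the thing I would treat as the main obstacle --- is keeping the instance inside the stated special case while cutting the number of variables down to ${\rm rank}\,A=:t$. Rule~\ref{rulerank} as stated deletes variables and can turn an odd equation into an even one, so instead I would use a parity-preserving change of variables: choose a basis $I_1,\dots,I_t$ of the $\mathbb{F}_2$-span of the equation-supports, taking the $I_j$ among the supports themselves, put $z'_j=\sum_{i\in I_j}z_i$ for $j\in[t]$, and complete this arbitrarily to an invertible substitution $z\mapsto z'$ of all $n$ variables. Writing a support $I$ as $\sum_{j\in J}I_j$, the equation $\sum_{i\in I}z_i=b_I$ becomes $\sum_{j\in J}z'_j=b_I$: it keeps the same right-hand side and weight, it involves only $z'_1,\dots,z'_t$, and $|J|$ has the same parity as $|I|$ because each $I_j$ is odd, so it is still an odd equation. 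Dropping the unused variables $z'_{t+1},\dots,z'_n$ then produces an equivalent odd system on $t<4k^2$ variables. The single real idea of the proof is the identity $\varepsilon(-x)=-\varepsilon(x)$, which makes $X$ symmetric and removes the factor $1/(2\cdot 3^r)$ present in the Max-$r$-Lin2-AA bound.
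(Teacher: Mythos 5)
Your proof is correct and follows essentially the same route as the paper's: make $S$ irreducible under Rule~\ref{rule1}, observe $\varepsilon(-x)=-\varepsilon(x)$ from oddness so that $X$ is symmetric, use Lemma~\ref{lem:Pars} to get $\mathbb{E}[X^2]=\sum_{I}w_I^2\ge m$, apply Lemma~\ref{eqsym} and Remark~\ref{MaxLinExcess} to answer {\sc Yes} when $m\ge 4k^2$, and otherwise bound $n\le{\rm rank}\,A\le m<4k^2$. Your parity-preserving change of variables is a legitimate refinement of a point the paper glosses over (it simply invokes Rule~\ref{rulerank}, which can indeed turn odd equations into even ones and thus leave the stated special case), but it does not alter the substance of the argument.
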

\begin{proof}
Let the system $S$ be irreducible by Rule \ref{rule1}. Consider the excess $\epsilon(x)=\sum_{I\in \cal F}w_I(-1)^{b_I}\prod_{i\in I}x_i$. Let us assign value $-1$ or $1$ to each $x_i$ with probability $1/2$ independently of the other variables. Then $\epsilon(x)$ becomes a random variable.
Since $\epsilon(-x)=-\epsilon(x)$, $\epsilon(x)$ is a symmetric random variable. Let $X=\epsilon(x)$. By Lemma \ref{lem:Pars}, we have $\mathbb{E}(X^2)=\sum_{i\in I}w^2_I$. Therefore, by Lemma \ref{eqsym},  $\Prob(\ X\ge \sqrt{m}\ )\ge \Prob(\ X\ge \sqrt{\sum_{j=1}^m
w^2_j}\ )>0.$ Hence, if $\sqrt{m}\ge 2k$, the answer to \textsc{MaxLin2-AA} is {\sc
Yes}. Otherwise, $m<4k^2$ and, after applying  Rule \ref{rulerank}, we have $n\le m\le 4k^2$.\qed
\end{proof}

In fact, Gutin et al. \cite{GutinKimSzeiderYeo09a} proved the following more general result.

\begin{theorem}
The following special case of {\sc MaxLin2-AA} admits a kernel with at most $4k^2$ variables and equations:
there exists a subset $U$ of variables such that
each equation in $Ax=b$ has odd number of variables from $U$.
\end{theorem}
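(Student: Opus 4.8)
The plan is to follow the proof of Theorem~\ref{thm:odd} essentially verbatim, replacing the global sign flip $x\mapsto -x$ by the partial sign flip that negates exactly the coordinates indexed by $U$. First I would make the system $S$ irreducible under Rule~\ref{rule1}; this rule never changes the set of variables occurring in an equation, so the hypothesis that every equation contains an odd number of variables from $U$ is preserved, and afterwards all weights are positive integers and distinct equations have distinct variable sets.

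Next I would form the excess $\varepsilon(x)=\sum_{I\in\mathcal F}w_I(-1)^{b_I}\prod_{i\in I}x_i$ and turn it into a random variable $X=\varepsilon(x)$ by setting each $x_i=\pm 1$ uniformly and independently. Let $s\in\{-1,1\}^n$ have $s_i=-1$ for $i\in U$ and $s_i=1$ otherwise, and let $T$ map $x$ to the vector with $i$-th coordinate $s_ix_i$. Then $T$ is a bijection of $\{-1,1\}^n$ that preserves the uniform distribution, and since $\prod_{i\in I}s_i=(-1)^{|I\cap U|}=-1$ for every equation $I$, we get $\varepsilon(Tx)=-\varepsilon(x)$; hence $X$ and $-X$ are identically distributed, so $X$ is a symmetric discrete random variable. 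By Lemma~\ref{lem:Pars}, $\mathbb{E}[X^2]=\sum_{I\in\mathcal F}w_I^2\ge m$, and Lemma~\ref{eqsym} then yields $\Prob(X\ge\sqrt m)>0$, so some assignment has excess at least $\sqrt m$. By Remark~\ref{MaxLinExcess}, if $\sqrt m\ge 2k$ the instance is a {\sc Yes}-instance, and I would output a fixed trivial {\sc Yes}-instance of the special case. Otherwise $m<4k^2$, and applying Rule~\ref{rulerank} (which preserves the answer by Lemma~\ref{lem:SS'}) leaves at most $\mathrm{rank}(A)\le m<4k^2$ variables and fewer than $4k^2$ equations.

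The only step requiring genuine care is verifying that the reduced instance still lies in the special case, i.e.\ still admits a suitable subset of ``$U$-variables'' — this is the main (minor) obstacle. The clean way around it is to note that the hypothesis is equivalent to the all-ones vector $\mathbf 1\in\mathbb F_2^m$ lying in the column space of the coefficient matrix $A$: indeed $|I_j\cap U|$ is odd for every $j$ precisely when $A\chi=\mathbf 1$ with $\chi$ the indicator of $U$. Rule~\ref{rule1} only merges equations with identical variable sets (and deletes some), which keeps $A\chi=\mathbf 1$ solvable, and Rule~\ref{rulerank} retains a spanning set of columns of $A$, so it does not change the column space; hence $\mathbf 1$ remains in it, and the support of any solution of the reduced system $A'\chi=\mathbf 1$ provides the new subset $U'$. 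Modulo this bookkeeping, the argument is an immediate adaptation of Theorem~\ref{thm:odd}.
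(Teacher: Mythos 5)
Your proposal is correct and is exactly the natural generalization of the paper's proof of Theorem~\ref{thm:odd} (which the survey itself only gestures at for this more general statement): the partial sign flip on the coordinates of $U$ plays the role of $x\mapsto -x$ and yields the symmetry $\varepsilon(Tx)=-\varepsilon(x)$, after which Lemmas~\ref{lem:Pars} and~\ref{eqsym} and Rule~\ref{rulerank} give the $4k^2$ bound as before. Your extra observation that the hypothesis amounts to $\mathbf 1$ lying in the column space of $A$ over $\mathbb{F}_2$, and is therefore preserved by Rules~\ref{rule1} and~\ref{rulerank}, correctly settles the one bookkeeping point the paper leaves implicit.
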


Let us turn to results on {\sc MaxLin2-AA} that do not require any parity conditions. One such result is Theorem \ref{thm:Max-r-Lin2fpt}.
Gutin et al. \cite{GutinKimSzeiderYeo09a} also proved the following `dual' theorem.

\begin{theorem}\label{thm:Alon}
Let $\rho\ge 1$ be a fixed integer. Then {\sc MaxLin2-AA} restricted to instances where no variable appears in more than $\rho$ equations,
admits a kernel with $O(k^2)$ variables and equations.
\end{theorem}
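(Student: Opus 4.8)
The plan is to follow the proof of Theorem~\ref{thm:Max-r-Lin2fpt} almost verbatim, replacing the Hypercontractive Inequality (Lemma~\ref{lem41}), which exploits a bound on the number of variables \emph{per equation}, by its dual form (Lemma~\ref{lem:HI2}), which exploits exactly the hypothesis we are given here, namely a bound $\rho$ on the number of equations \emph{per variable}.

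First I would preprocess the system $S$ by applying Reduction Rules~\ref{rule1} and~\ref{rulerank} until it is irreducible; by Lemma~\ref{lem:SS'} this takes polynomial time and preserves the maximum excess. The one point that needs checking is that irreducibility preserves the hypothesis of the theorem: Rule~\ref{rule1} only merges equations, and Rule~\ref{rulerank} only deletes variables from equations, so neither operation can increase the number of equations in which a fixed variable occurs; hence in the irreducible system each variable still occurs in at most $\rho$ equations. Moreover, after Rule~\ref{rule1} distinct index sets $I$ give distinct equations, so $|{\cal F}|=m$.

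Next I would consider the excess $\varepsilon(x)=\sum_{I\in{\cal F}}w_I(-1)^{b_I}\prod_{i\in I}x_i$ and let $X=\varepsilon(x)$ be the random variable obtained by setting each $x_i$ to $-1$ or $1$ uniformly and independently. Since the system is irreducible, $\varepsilon$ is a polynomial with $m$ distinct monomials in which no variable appears in more than $\rho$ monomials, so Lemma~\ref{lem:HI2} gives $\mathbb{E}[X^4]\le(2\rho+1)\mathbb{E}[X^2]^2$, while Lemma~\ref{lem:Pars} gives $\mathbb{E}[X^2]=\sum_{I\in{\cal F}}w_I^2\ge m$ because the weights are positive integers. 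As $\mathbb{E}[X]=0$, Lemma~\ref{lem32} applied with $c=2\rho+1$ and $\sigma^2=\mathbb{E}[X^2]$ produces an assignment with excess greater than $\sqrt{\sum_{I}w_I^2}/(2\sqrt{2\rho+1})\ge\sqrt{m}/(2\sqrt{2\rho+1})$.

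Finally I would split into two cases. If $\sqrt{m}/(2\sqrt{2\rho+1})\ge 2k$, then by Remark~\ref{MaxLinExcess} the answer is {\sc Yes}, and we output a fixed trivial {\sc Yes}-instance. Otherwise $m<16(2\rho+1)k^2=O(k^2)$ since $\rho$ is a constant, and Rule~\ref{rulerank} guarantees $n\le{\rm rank}\,A\le m=O(k^2)$, so the current irreducible system is itself the claimed kernel with $O(k^2)$ variables and equations. There is no genuine obstacle in this argument: the only things requiring care are the verification that the bounded-occurrence property survives the reduction rules and the constant bookkeeping; the substantive ingredient, the dual hypercontractive bound of Lemma~\ref{lem:HI2}, is used as a black box.
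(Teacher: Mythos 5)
Your proposal is correct and matches the paper's (one-sentence) proof exactly: the authors likewise say the argument is that of Theorem~\ref{thm:Max-r-Lin2fpt} with Lemma~\ref{lem:HI2} substituted for Lemma~\ref{lem41}. Your additional check that Rules~\ref{rule1} and~\ref{rulerank} preserve the bounded-occurrence hypothesis is a worthwhile detail the paper leaves implicit.
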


The proof is similar to that of Theorem \ref{thm:Max-r-Lin2fpt}, but Lemma \ref{lem:HI2} (in fact, its weaker version obtained in \cite{GutinKimSzeiderYeo09a}) is used instead of Lemma \ref{lem41}.

\vspace{2mm}

Recall that {\sc MaxSat-AA} is the same problem as {\sc Max-$r$-Sat-AA}, but the number of variables in a clause is not bounded. Crowston et al. \cite{CroGutJonRamSau} proved that {\sc MaxSat-AA} is para-NP-complete and, thus, {\sc MaxSat-AA} is not fpt unless P$=$NP.
This is in sharp contrast to {\sc MaxLin2-AA}. This result is a corollary of the following:

\begin{theorem}\cite{CroGutJonRamSau}\label{thm:MaxSatlog}
{\sc Max-$r(n)$-Sat-AA} is para-NP-complete for $r(n)=\lceil \log n\rceil$.
\end{theorem}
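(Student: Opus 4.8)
\textbf{Proof proposal for Theorem \ref{thm:MaxSatlog}.}
The plan is to establish para-NP-completeness of {\sc Max-$r(n)$-Sat-AA} with $r(n)=\lceil\log n\rceil$ by exhibiting a polynomial-time many-one reduction from an NP-complete problem to the \emph{parameterized} problem with the parameter $k$ fixed to some constant, say $k=1$. Recall that a parameterized problem is para-NP-complete if it lies in para-NP and some slice (the problem restricted to a fixed value of the parameter) is NP-complete. Membership in para-NP is immediate, since for a fixed $k$ one can guess a truth assignment and check in polynomial time whether it satisfies at least $A+k$ clauses (note that $A$ is computable in polynomial time because each clause has $r(n)=O(\log n)$ variables, so $2^{-r_j}$ has polynomially many bits). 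The real content is NP-hardness of a fixed slice.

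The natural source problem is ordinary {\sc MaxSat} (or {\sc Max-$3$-Sat}), which is NP-complete in the decision form: given a CNF formula $G$ and an integer $t$, decide whether ${\rm sat}(G)\ge t$. Given such an instance on $N$ variables and $M$ clauses, I would pad it in two ways. First, I would add a large but polynomially bounded number of dummy variables so that the ambient number of variables $n$ satisfies $\lceil\log n\rceil\ge$ (the maximum clause width we need); this is harmless since $r(n)$ only grows. The crucial trick is that clauses of width up to $\lceil\log n\rceil$ are available, and over a block of $\lceil\log n\rceil$ fresh variables one can write a single clause that is falsified by exactly one of the $2^{\lceil\log n\rceil}\ge n$ assignments to that block. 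Such ``selector'' gadgets let us encode, with width-$O(\log n)$ clauses, constraints that force the excess-above-average of a satisfying assignment to behave in a controlled, essentially all-or-nothing way: we design the padded formula $F$ so that the average $A$ is some explicit value, and so that ${\rm sat}(F)\ge A+1$ holds if and only if ${\rm sat}(G)\ge t$. Concretely, one attaches to $G$ a gadget of $\Theta(\log n)$-width clauses whose contribution to ${\rm sat}(F)$, when optimized, exceeds its average contribution by an amount that exactly cancels the deficit $A_{\text{old}}-$ (achievable value) coming from the $G$-part unless $G$ has a good assignment; the parameter $k=1$ then detects precisely the threshold $t$.

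The key steps, in order: (1) verify para-NP membership as above; (2) fix the slice $k=1$ and set up the padding so that $n$ is polynomial in $N,M$ and $\lceil\log n\rceil$ is large enough to host the selector clauses; (3) build the selector/threshold gadget out of width-$\lceil\log n\rceil$ clauses and compute the average $A$ of the combined formula $F$ exactly; (4) prove the equivalence ${\rm sat}(G)\ge t \iff {\rm sat}(F)\ge A+1$ by analyzing, on one side, how an assignment of value $\ge t$ for $G$ extends to one of excess $\ge 1$ for $F$, and on the other side, how any assignment of excess $\ge 1$ for $F$ must, because of the gadget's rigidity, restrict to a value-$\ge t$ assignment for $G$; (5) conclude that the $k=1$ slice is NP-hard, hence {\sc Max-$r(n)$-Sat-AA} is para-NP-complete, and therefore not fpt unless P$=$NP.

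The main obstacle is step (4), and within it the design of the gadget in step (3): one needs the average contribution of the gadget clauses and their maximum contribution to differ by an amount that is a \emph{fixed} integer (so that it interacts cleanly with $k=1$) while the gadget simultaneously has the rigidity property that achieving that maximum forces a prescribed pattern on the shared variables. Balancing these requirements — controlling $A$ to the needed precision using only $O(\log n)$-width clauses, and ensuring the ``excess budget'' is neither wasted nor overshot by stray assignments — is the delicate part; the rest is bookkeeping. (One may alternatively reduce from a parameterized-above-average problem already known to be hard, but the self-contained {\sc MaxSat} route above is the cleanest.)
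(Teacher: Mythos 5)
Note first that the survey states Theorem~\ref{thm:MaxSatlog} without proof, citing \cite{CroGutJonRamSau}, so your proposal can only be measured against the known argument for that result. Your overall strategy is the right one: para-NP-membership is indeed routine (guessing an assignment and computing $A$ exactly is polynomial since $2^{-r_j}$ has $O(\log n)$ bits), and para-NP-hardness does follow from NP-hardness of a single slice such as $k=1$. The gap is in steps (3)--(4): the gadget that carries the entire content of the theorem is never constructed, and the route you choose --- reducing from the threshold version of {\sc MaxSat} and building a gadget whose above-average contribution ``exactly cancels'' the deficit of the $G$-part --- is harder than necessary and, as described, does not obviously work: the excess of the $G$-part ranges over many values depending on how many of its clauses are satisfied, and a single static gadget cannot adaptively cancel an unknown integer. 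The quantitative fact you need but do not isolate is that for any CNF formula the maximum achievable excess satisfies ${\rm sat}(F)-A\le\sum_i 2^{-r_i}$. Hence the $k=1$ slice has a {\sc Yes}-instance only if $\sum_i 2^{-r_i}\ge 1$, and if you additionally arrange $\sum_i 2^{-r_i}<2$, then ``${\rm sat}(F)\ge A+1$'' is \emph{exactly} ``all clauses satisfied,'' so the natural source problem is plain {\sc SAT} (i.e.\ your $t=M$), with no threshold bookkeeping at all.

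Achieving $\sum_i 2^{-r_i}\in[1,2)$ is where the real work lies, and it is what forces the clauses to have width close to $\lceil\log n\rceil$: a 3-CNF with $M$ clauses has deficit $M/8$, far too large, and the satisfiability-preserving split $C\equiv(C\vee y)\wedge(C\vee\bar y)$ leaves $\sum_i 2^{-r_i}$ unchanged, so widening must be done by appending fresh literals that are \emph{forced} to be false. A concrete way to do this within budget: on $\ell\approx\log M$ fresh variables $y_1,\ldots,y_\ell$, take the $2^\ell-1$ clauses of width $\ell$ that each exclude one nonzero pattern of the $y$'s; they are all satisfied iff $y\equiv 0$ and contribute total deficit $(2^\ell-1)2^{-\ell}=1-2^{-\ell}<1$. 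Appending $y_1\vee\cdots\vee y_\ell$ to each source clause then drops the source contribution to $M2^{-3-\ell}\le 1/8$ while preserving satisfiability, and a handful of long clauses on otherwise unused variables inflates $n$ (at negligible deficit) so that $\lceil\log n\rceil$ accommodates the widened clauses. Your ``selector'' remark gestures at the right primitive (a width-$w$ clause is falsified by exactly one of $2^w$ local assignments), but without the forcing gadget, the deficit accounting placing $\sum_i 2^{-r_i}$ in $[1,2)$, and the resolution of the self-referential constraint between clause width and $n$, the equivalence in your step (4) is not established.
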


The Exponential Time Hypothesis (ETH) claims that {\sc 3-SAT} cannot be solved in time $2^{o(n)}$, where $n$ is the number of variables (see, e.g., \cite{FlumGrohe06,Niedermeier06}). Using ETH, we can improve Theorem \ref{thm:MaxSatlog}.

\begin{theorem}\cite{CroGutJonRamSau}\label{thm:MaxSatloglog}
Assuming ETH, {\sc Max-$r(n)$-Sat-AA} is not fpt for any $r(n) \ge \log\log n+\phi(n)$, where $\phi(n)$ is any unbounded strictly increasing function of $n$.
\end{theorem}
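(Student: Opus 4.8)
Assuming ETH, {\sc Max-$r(n)$-Sat-AA} is not fpt for any $r(n) \ge \log\log n + \phi(n)$, where $\phi$ is unbounded and strictly increasing.

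The plan is to derive this as a padding-style corollary of Theorem \ref{thm:MaxSatlog}, which already gives para-NP-completeness (hence no fpt algorithm unless P$=$NP) for the arity function $\lceil \log N \rceil$ on $N$ variables, combined with a sharper accounting of the reduction's blow-up under ETH. The key observation is that the hardness construction behind Theorem \ref{thm:MaxSatlog} takes a {\sc 3-SAT} instance on $n$ variables and produces a {\sc Max-$r$-Sat-AA} instance whose number of variables $N$ is exponential in the parameter $k$ — roughly $N \approx 2^k$ or so, i.e. $k \approx \log N$ and the clause arity is $r = \lceil \log N \rceil \approx k$. First I would re-examine that reduction and record precisely how $N$, $k$ and the running time of {\sc 3-SAT} relate; the essential fact needed is that an fpt algorithm for {\sc Max-$r(N)$-Sat-AA} of running time $g(k) N^{O(1)}$ would, via the reduction, solve {\sc 3-SAT} on $n$ variables in time $g(\mathrm{poly}(n)) 2^{O(n)}$-ish — but we must be more careful, since $g$ is arbitrary, so we need $k$ to be genuinely small (polylogarithmic, not linear) in the size of the {\sc 3-SAT} instance for the ETH contradiction to bite.

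The heart of the argument is therefore to control the arity $r$ as a function of $N$ rather than of $k$. Given a target arity function $r(n) \ge \log\log n + \phi(n)$, I would pad: starting from a {\sc 3-SAT} instance on $n$ variables, run (a variant of) the reduction producing a {\sc Max-$r$-Sat-AA} instance with parameter $k$ on some number $N_0$ of variables with arity about $\log N_0$; then add $N - N_0$ dummy isolated variables so that the total variable count $N$ is large enough that $r(N) \ge \log N_0 \approx k$, while keeping $k$ unchanged. Solving $\log\log N + \phi(N) \ge k$ for $N$ gives $N$ doubly-exponential in $k$, i.e. $\log\log N \approx k$, so $N \approx 2^{2^k}$. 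An fpt algorithm $g(k)N^{O(1)}$ for {\sc Max-$r(n)$-Sat-AA} then solves the original {\sc 3-SAT} instance in time $g(k) \cdot 2^{O(2^k)}$ where $k = k(n)$ is the parameter coming out of the reduction on $n$ variables. The final step is to choose the parameters in the reduction so that $k(n) = O(\log n)$ (which is exactly what the $\lceil \log N \rceil$-arity construction of Theorem \ref{thm:MaxSatlog} delivers, modulo bookkeeping), whence $2^{O(2^k)} = 2^{O(2^{O(\log n)})} = 2^{n^{O(1)}}$ — wait, that is too weak; so in fact one must arrange $k(n) = O(\log\log n)$, giving $2^{O(2^k)} = 2^{n^{O(1)}}$ still, which forces us instead to push $k(n)$ down to $\log\log n - \omega(1)$ range via the function $\phi$, ultimately yielding running time $g(\log\log n)\cdot 2^{o(n)}$ for {\sc 3-SAT}, contradicting ETH.

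The main obstacle, and the step I expect to be delicate, is exactly this quantitative matching: threading the arbitrary function $g$ of the hypothetical fpt algorithm through the double-exponential padding so that the resulting {\sc 3-SAT} running time is genuinely $2^{o(n)}$, not merely subexponential-in-some-polynomial. This is why the statement needs the slack $\phi(n)$ — it lets $k(n)$ be chosen as a slowly growing function (e.g. tied to $\phi^{-1}$) so that $g(k(n))$ is absorbed into the $2^{o(n)}$ bound no matter what $g$ is. Concretely, I would: (i) fix the reduction from Theorem \ref{thm:MaxSatlog} and express $(N_0, k)$ in terms of $n$; (ii) given $r(n) \ge \log\log n + \phi(n)$, define the padding size $N = N(n)$ as the least integer with $r(N) \ge \log N_0$; (iii) check $\log N = 2^{\Theta(k - \phi(N))}$ and hence, choosing the internal reduction parameter so $k$ grows like $\phi^{-1}$ or slower, verify $g(k)\cdot N^{O(1)} = 2^{o(n)}$; (iv) conclude that an fpt algorithm contradicts ETH. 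The routine verifications — that the padded instance is still a yes-instance iff the original is, that dummy variables do not change the ``above average'' threshold structure, and that everything is polynomial-time computable — I would only sketch, citing \cite{CroGutJonRamSau} for the base reduction.
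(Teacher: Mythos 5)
Your overall strategy---pad the instance with dummy variables until the slowly growing arity bound $r(N)\ge \log\log N+\phi(N)$ catches up with the arity $a\approx\log(\mbox{instance size})$ that the hardness construction requires, and then observe that the least such $N$ satisfies $\log N\approx 2^{\,a-\phi(N)}$, so that the $-\phi(N)$ in the double exponent makes $N^{O(1)}=2^{o(n)}$---is indeed the skeleton of the argument in \cite{CroGutJonRamSau} (the survey itself only cites the result). But your handling of the hypothetical fpt running time $g(k)N^{O(1)}$ has a genuine gap. You assume the parameter produced by the reduction grows with the input ($k\approx\log N_0$, later $k$ in the ``$\log\log n$ range''), and you end with a running time of the form $g(\log\log n)\cdot 2^{o(n)}$, claiming this contradicts ETH ``no matter what $g$ is.'' That is false: for $g(k)=2^{2^{2^k}}$ one gets $g(\log\log n)=2^{2^{\log n}}=2^{n}$, and no contradiction with ETH follows. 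Nor can choosing $k(n)$ ``tied to $\phi^{-1}$'' absorb an arbitrary $g$: for any unbounded $k(n)$ there is a computable $g$ with $g(k(n))\ge 2^{n}$, and the reduction is not set up to depend on $g$. The idea you are missing is that the parameter in the hardness construction is a \emph{constant}: para-NP-completeness in Theorem~\ref{thm:MaxSatlog} means precisely that some fixed slice is NP-hard (with clauses of uniform width $a$ one has $k=m_0 2^{-a}$, so taking $a=\lceil\log m_0\rceil$ forces $k\le 1$), so your premise ``$k\approx\log N_0$'' is actually incompatible with the theorem you are invoking. Once $k$ is a fixed constant, $g(k)=O(1)$ and the difficulty you spend most of the proposal wrestling with simply disappears; the only quantitative work left is the computation of $N$.

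A second, smaller omission: for $N^{O(1)}=2^{O(m_0)/2^{\phi(N)}}$ to be $2^{o(n)}$ one needs the intermediate instance to have $m_0=O(n)$ clauses. If $m_0=n^{c}$ with $c>1$, the exponent $n^{c}/2^{\phi(N)}$ is not $o(n)$ for slowly growing $\phi$ (e.g.\ $\phi=\log^{*}$), since then $\phi(N)\le\phi(2^{n^c})\ll(c-1)\log n$. So the argument must start from 3-SAT instances with a linear number of clauses, i.e.\ invoke the Sparsification Lemma (equivalently, the clause form of ETH); your sketch never addresses this.
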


The following theorem shows that Theorem \ref{thm:MaxSatloglog} provides a bound on $r(n)$ which is not far from optimal.

\begin{theorem}\label{thm:maxsataafpt}\cite{CroGutJonRamSau}
{\sc Max-$r(n)$-Sat-AA} is fpt for $r(n)\le \log\log n-\log\log\log n-\phi(n)$, for any unbounded strictly increasing function $\phi(n)$.
\end{theorem}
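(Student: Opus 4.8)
The plan is to show that when $r(n) \le \log\log n - \log\log\log n - \phi(n)$, an instance of {\sc Max-$r(n)$-Sat-AA} with $n$ variables and $m$ clauses either has so few clauses that the average-value bound is exceeded automatically (so we answer {\sc Yes}), or $m$ is polynomial in $k$ and we obtain a kernel. First I would recall from the $r$-bounded case (Subsection~\ref{subsec:max-r-csp}, the Fourier/excess machinery leading to Theorem~\ref{thm:Max-r-Lin2fpt} and Lemma~\ref{lem41}) that for a single clause $C$ on $r_C$ variables, writing the excess polynomial $h_C(x)$, the contribution to $\mathbb{E}[X^2]$ is bounded below by a positive constant (in fact $\Theta(w_C^2)$, since a clause has exactly one falsifying assignment among its $2^{r_C}$), while the Hypercontractive Inequality (Lemma~\ref{lem41}) gives $\mathbb{E}[X^4] \le 9^{r} \mathbb{E}[X^2]^2$ where here $r = r(n)$ grows with $n$. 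Combining with Lemma~\ref{lem32} yields $\Prob\left(X > \tfrac{\sqrt{\mathbb{E}[X^2]}}{2\cdot 3^{r(n)}}\right) > 0$, and since $\mathbb{E}[X^2] = \Omega(m)$, there is an assignment with excess at least $c\sqrt{m}/3^{r(n)}$ for an absolute constant $c$; by the analogue of Remark~\ref{MaxLinExcess} for {\sc Max-$r$-Sat-AA}, if $c\sqrt{m}/3^{r(n)} \ge k$ we are done and answer {\sc Yes}.

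Next I would handle the complementary case $\sqrt{m} < k\cdot 3^{r(n)}/c$, i.e. $m < (k/c)^2 \cdot 9^{r(n)}$. The point of the hypothesis $r(n) \le \log\log n - \log\log\log n - \phi(n)$ is exactly to make $9^{r(n)}$ polynomially bounded \emph{in terms of $n$ itself}: indeed $r(n) \le \log\log n - \log\log\log n$ gives $2^{r(n)} \le \log n/\log\log n$, hence $9^{r(n)} = 2^{r(n)\log 9} \le (\log n/\log\log n)^{\log 9} \le (\log n)^4$ say (using $\log 9 < 4$), which is $n^{o(1)}$; the extra $-\phi(n)$ term provides whatever slack is needed so that $9^{r(n)} = o(n^{\delta})$ for every $\delta > 0$, or more precisely so that a clean closed-form bound relating $m$, $n$ and $k$ can be extracted. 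So in this case $m = O(k^2 (\log n)^4)$. I would then combine this with a bound in the other direction: any nontrivial instance satisfies $m \ge $ something growing with $n$ after standard reductions (removing duplicate variables, variables appearing in no clause, etc.), or alternatively note that if $n$ is large relative to $m$ then many variables are irrelevant and can be projected out, reducing $n$ to at most (roughly) the total number of variable-occurrences, which is $O(m \cdot r(n))$. Plugging $m = O(k^2(\log n)^4)$ into $n = O(m\, r(n)) = O(k^2 (\log n)^4 \log\log n)$ gives a self-bounding inequality $n = O(k^2 (\log n)^5)$, which forces $n = O(k^2 \cdot \mathrm{poly}(\log k))$, a kernel bound. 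Since $r(n)$ is bounded on this kernel, the kernelized instance is an honest {\sc Max-$r'$-Sat-AA} instance for a constant $r'$, and fixed-parameter tractability follows (e.g. by brute force on the kernel, or by invoking the $r$-bounded results).

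The main obstacle, I expect, is making the self-referential estimate $m < (k/c)^2 9^{r(n)}$ together with $n = O(m\,r(n))$ collapse cleanly: one has to verify that the composition $n \mapsto 9^{r(n)}$ really is sub-polynomial for the stated range of $r(n)$, and that the $-\log\log\log n - \phi(n)$ correction is precisely what is needed to absorb the $\mathrm{poly}(\log)$ factors introduced by the occurrence-counting bound on $n$ (and by the constant $c$ and the $\log 9$ exponent). A secondary subtlety is justifying the reduction that bounds $n$ by the number of variable occurrences in the AA setting — one must check that deleting a variable not appearing in any clause, or merging/eliminating variables, preserves both the answer and the parameter $k$ relative to the shifted average $A = \sum_j w_j(1 - 2^{-r_j})$; this is routine since $A$ is computed clause-by-clause, but it needs to be stated. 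Everything else (the hypercontractive estimate, Lemma~\ref{lem32}, the excess reformulation) is already available in the excerpt. I would therefore organize the proof as: (i) excess polynomial and its second/fourth moments; (ii) the {\sc Yes} case via Lemma~\ref{lem32}; (iii) the bound on $m$ in the remaining case and the occurrence-counting bound on $n$; (iv) solving the resulting inequality to get the kernel, and concluding fpt-ness.
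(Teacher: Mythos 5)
The survey states this theorem without proof (it only cites \cite{CroGutJonRamSau}), so I am judging your argument on its own terms, and it has two genuine gaps. The first is that $\Exp[X^2]=\Omega(m)$ is false: Fourier coefficients of different clauses cancel. Take $n/2$ disjoint pairs of variables and, for each pair, all four $2$-clauses on it; then $m=2n$, every assignment satisfies exactly $A=3m/4$ clauses, the excess polynomial is identically zero, and the instance is a {\sc No}-instance for every $k\ge 1$ with $m$ and $n$ both unbounded. So ``{\sc No} implies $m=O(k^2 9^{r})$'' is simply wrong, and so is any bound on $n$ derived from it via $n\le mr$. The second-moment argument has to be run on the \emph{reduced} system obtained from the Fourier expansion (\ref{foureq}) after merging and cancelling monomials, as in Theorem \ref{thm:Max-r-Lin2fpt}; it then bounds the number of surviving monomials, hence only the number of variables that survive cancellation --- not $n$ and not $m$. (A smaller slip: since $h(x^0)=2^r(U-A)$, the excess threshold for {\sc Yes} is $k2^r$, not $k$, so the derived {\sc Max-$r$-Lin2-AA} instance has parameter $\Theta(k2^r)$.)

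The second gap is fatal for the quadratic route you chose. Even after the repairs above, the bound you obtain on the number of \emph{relevant} variables is $O(k^2c^{r}r)$ for a constant $c$, which under the hypothesis is $O(k^2(\log n)^{O(1)})$; brute force in time $2^{k^2\mathrm{polylog}(n)}$ is \emph{not} fpt (try $\phi(n)=\log\log\log n$). Your self-bounding collapse to $n=O(k^2\,\mathrm{polylog}(k))$ would need the bound to apply to all $n$ variables, but the counterexample shows variables can vanish entirely from the Fourier expansion, so it applies only to the surviving ones. The way out is to use the \emph{linear}-variable kernel for {\sc Max-$r$-Lin2-AA} (Theorem \ref{thm:linkernel4lin2}, via sum-free sets and Lemma \ref{lem:M-freeapplic}) instead of the quadratic second-moment kernel: with parameter $k'=\Theta(k2^r)$ this leaves $O(k'r)=O(k2^{r}r)\le O(k\log n/2^{\phi(n)})$ variables, and exhaustive search then costs $2^{O(k\log n/2^{\phi(n)})}=n^{O(k/2^{\phi(n)})}$, which \emph{is} fpt: if $2^{\phi(n)}\ge k$ this is $n^{O(1)}$, and otherwise $n\le\phi^{-1}(\log k)$ is itself bounded by a function of $k$. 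This also corrects your reading of the hypothesis: $9^{r(n)}\le(\log n)^{O(1)}$ holds already without the $-\phi(n)$ term; the unbounded $\phi$ is there precisely to turn the residual $n^{O(k)}$ into an fpt factor, and what the condition $r(n)\le\log\log n-\log\log\log n-\phi(n)$ really buys is $2^{r}r\le\log n/2^{\phi(n)}$.
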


\section{Ordering CSPs}\label{sec:perm}

In this section we will discuss recent results in the area of {\em Ordering Constraint
Satisfaction Problems (Ordering CSPs)} parameterized above average. Ordering CSPs include several well-known
problems such as {\sc Betweenness}, {\sc Circular Ordering} and  {\sc Acyclic
Subdigraph} (which is equivalent to \textsc{2-Linear Ordering}). These three problems have applications
in circuit design and computational biology \cite{CS98,Opatrny1979}, in qualitative spatial reasoning \cite{IC00}, and in economics \cite{Rei85}, respectively.

Let us define Ordering CSPs of arity 3. The reader can easily generalize it to any arity $r\ge 2$ and we will do it below for \textsc{Linear Ordering} of arity $r.$
Let $V$ be a set of $n$ variables and let $$\Pi\subseteq \mathcal S_3 = \{(123),(132),(213),(231),(312),(321)\}$$ be arbitrary.
A {\em constraint set over $V$} is a multiset $\mathcal C$ of {\em constraints}, which are permutations of three distinct elements of $V$.
A bijection $\alpha:\ V\rightarrow [n]$ is called an {\em ordering} of $V.$
For an ordering $\alpha:\ V\rightarrow [n]$, a constraint $(v_1,v_2,v_3)\in\mathcal C$ is
{\em $\Pi$-satisfied by $\alpha$} if
there is a permutation $\pi\in \Pi$ such that $\alpha(v_{\pi(1)})<\alpha(v_{\pi(2)})<\alpha(v_{\pi(3)})$.
Thus, given $\Pi$ the problem $\Pi$-CSP, is the problem of deciding if there exists an ordering of $V$  that $\Pi$-satisfies all the constraints.
Every such problem is called an Ordering CSP of arity 3. We will consider the maximization version
of these problems, denoted by {\sc Max-$\Pi$-CSP}, parameterized above the average number
of constraints satisfied by a random ordering of $V$ (which can be shown to be a tight bound).

Guttmann and Maucher \cite{GuttmannMaucher2006} showed that there are in fact only $13$ distinct $\Pi$-CSP's of arity 3 up to symmetry, of which $11$ are nontrivial. They
are listed in Table \ref{tab:allproblems} together with their complexity.
Note that if  $\Pi=\{(123),(321)\}$ then we obtain the {\sc Betweenness} problem and if
$\Pi=\{(123)\}$ then we obtain {\sc 3-Linear Ordering}.

\begin{table}
\centering
  \begin{tabular}{llc}
    $\Pi\subseteq \mathcal S_3$~~                  & ~~Name~~   & ~~Complexity\\
    \noalign{\smallskip}
    \hline
    \noalign{\smallskip}
    $\Pi_0=\{(123)\}$       & ~~{\sc Linear Ordering-3}                       & polynomial\\[0.1cm]
    $\Pi_1 =\{(123), (132)\}$       &  ~~                  & polynomial\\[0.1cm]
    $\Pi_{2} =\{(123), (213), (231)\}$       &  ~~                  & polynomial\\[0.1cm]
    $\Pi_{3} =\{(132),  (231), (312), (321)\}$       &  ~~                  & polynomial\\[0.1cm]
    $\Pi_4 = \{(123),(231)\}$                      & ~~                        & $\mathsf{NP}$-comp.\\[0.1cm]
    $\Pi_5 = \{(123),(321)\}$                      & ~~{\sc Betweenness}       & $\mathsf{NP}$-comp.\\[0.1cm]
    $\Pi_6 = \{(123),(132),(231)\}$                & ~~                        & $\mathsf{NP}$-comp.\\[0.1cm]
    $\Pi_7 = \{(123),(231),(312)\}$                & ~~{\sc Circular Ordering} & $\mathsf{NP}$-comp.\\[0.1cm]
    $\Pi_8 = \mathcal S_3\setminus\{(123),(231)\}$ & ~~                        & $\mathsf{NP}$-comp.\\[0.1cm]
    $\Pi_9 = \mathcal S_3\setminus\{(123),(321)\}$ & ~~{\sc Non-Betweenness}   & $\mathsf{NP}$-comp.\\[0.1cm]
    $\Pi_{10} = \mathcal S_3\setminus\{(123)\}$       & ~~                        & $\mathsf{NP}$-comp.\\[0.1cm]
  \end{tabular}
  \vspace{0.2cm}
  \caption{Ordering CSPs of arity 3 (after symmetry considerations)}
\label{tab:allproblems}
\end{table}

Gutin et al. \cite{GutinIerselMnichYeo} proved that all $11$ nontrivial {\sc Max-$\Pi$-CSP} problems are NP-hard
(even though four of the {\sc $\Pi$-CSP} are polynomial).

Now observe that given a variable set $V$ and a constraint multiset $\mathcal C$ over $V$, for a random ordering $\alpha$ of $V$, the probability of a constraint in $\mathcal C$ being $\Pi$-satisfied by $\alpha$ equals $\frac{|\Pi|}{6}$.
Hence, the expected number of satisfied constraints from $\mathcal C$ is $\frac{|\Pi|}{6}|\mathcal C|$, and thus there is an ordering $\alpha$ of $V$ satisfying at least $\frac{|\Pi|}{6}|\mathcal C|$ constraints (and this bound is tight). A derandomization argument leads to $\frac{|\Pi_i|}{6}$-approximation algorithms for the problems {\sc Max-$\Pi_i$-CSP} \cite{ChaGurMan}. No better constant factor approximation is possible assuming the Unique Games Conjecture~\cite{ChaGurMan}.

We will study the parameterization of {\sc Max-$\Pi_i$-CSP} above tight lower bound:

\medskip
\noindent \begin{tabular}{lp{0.85\textwidth}}
\multicolumn{2}{l}{{\sc $\Pi$-Above Average ($\Pi$-AA)}}   \\
\textit{Input:}     & A finite set $V$ of variables, a multiset $\mathcal C$ of ordered trip\-les  of distinct variables from $V$ and an integer $\kappa\geq 0$.                   \\
\textit{Parameter:} & $\kappa$.\\
\textit{Question:}  & Is there an ordering $\alpha$ of $V$ such that at least $\frac{|\Pi|}{6}|\mathcal C| + \kappa$ constraints of $\mathcal C$ are $\Pi$-satisfied by $\alpha$?\\
\end{tabular}
\medskip

In \cite{GutinIerselMnichYeo} it is shown that all $11$ nontrivial {\sc $\Pi$-CSP-AA} problems admit kernels with $O(\kap{}^2)$ variables.
This is shown by first reducing them to {\sc 3-Linear Ordering-AA} (or {\sc 2-Linear Ordering-AA}), and then finding a kernel for this problem,
which is transformed back to the original problem. The first transformation is easy due to the following:

\begin{proposition}\cite{GutinIerselMnichYeo}
\label{thm:allreducetoone}
  Let $\Pi$ be a subset of $\mathcal S_3$ such that $\Pi\notin\{\emptyset,\mathcal S_3\}$.
  There is a polynomial time transformation $f$ from {\sc $\Pi$-AA} to {\sc 3-Linear Ordering-AA} such that an instance $(V,{\mathcal C},k)$ of {\sc $\Pi$-AA} is a {\sc Yes}-instance if and only if $(V,{\mathcal C}_0,k)=f(V,{\mathcal C},k)$ is a {\sc Yes}-instance of {\sc 3-Linear Ordering-AA}.
\end{proposition}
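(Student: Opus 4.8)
The plan is to build $f$ locally, processing one constraint at a time so that satisfied constraints are matched one-to-one. First I would record the following elementary fact: for any three distinct variables $v_1,v_2,v_3$ and any ordering $\alpha$ of $V$, the values $\alpha(v_1),\alpha(v_2),\alpha(v_3)$ are distinct, so there is exactly one permutation $\sigma\in\mathcal S_3$ with $\alpha(v_{\sigma(1)})<\alpha(v_{\sigma(2)})<\alpha(v_{\sigma(3)})$. By the definition of $\Pi$-satisfaction, the constraint $(v_1,v_2,v_3)$ is $\Pi$-satisfied by $\alpha$ exactly when this $\sigma$ lies in $\Pi$, whereas a {\sc 3-Linear Ordering} constraint $(u_1,u_2,u_3)$ is satisfied by $\alpha$ exactly when the analogous permutation is the identity, i.e. $\alpha(u_1)<\alpha(u_2)<\alpha(u_3)$.

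Guided by this, I would let $f$ map $(V,\mathcal C,k)$ to $(V,\mathcal C_0,k)$, where $\mathcal C_0$ is the multiset obtained by replacing each constraint $c=(v_1,v_2,v_3)\in\mathcal C$ by the $|\Pi|$ {\sc 3-Linear Ordering} constraints $c_\pi:=(v_{\pi(1)},v_{\pi(2)},v_{\pi(3)})$, one for each $\pi\in\Pi$. Since $|\Pi|\le 6$, this is computable in linear time; each $c_\pi$ is again an ordered triple of distinct variables of $V$; the parameter is left unchanged; and $|\mathcal C_0|=|\Pi|\cdot|\mathcal C|$.

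The only point that needs attention is to check that for every ordering $\alpha$ of $V$, the number of constraints of $\mathcal C_0$ satisfied by $\alpha$ equals the number of constraints of $\mathcal C$ that are $\Pi$-satisfied by $\alpha$. Fix $c=(v_1,v_2,v_3)\in\mathcal C$ and let $\sigma$ be its sorting permutation under $\alpha$ as above. Then $c_\pi$ is satisfied by $\alpha$ iff $\alpha(v_{\pi(1)})<\alpha(v_{\pi(2)})<\alpha(v_{\pi(3)})$, which is precisely the condition that $\pi$ witnesses the $\Pi$-satisfaction of $c$, and by uniqueness of $\sigma$ this happens iff $\pi=\sigma$. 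Hence the block $\{c_\pi:\pi\in\Pi\}$ contains exactly one satisfied constraint when $\sigma\in\Pi$ (that is, when $c$ is $\Pi$-satisfied by $\alpha$) and none otherwise; summing over all $c\in\mathcal C$ yields the claimed identity.

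To conclude, note that $\frac{|\mathcal C_0|}{3!}+k=\frac{|\Pi|}{6}|\mathcal C|+k$ since $|\mathcal C_0|=|\Pi|\cdot|\mathcal C|$, so an ordering $\Pi$-satisfies at least $\frac{|\Pi|}{6}|\mathcal C|+k$ constraints of $\mathcal C$ if and only if it satisfies at least $\frac{|\mathcal C_0|}{3!}+k$ constraints of $\mathcal C_0$; thus $(V,\mathcal C,k)$ is a {\sc Yes}-instance of {\sc $\Pi$-AA} iff $f(V,\mathcal C,k)=(V,\mathcal C_0,k)$ is a {\sc Yes}-instance of {\sc 3-Linear Ordering-AA}. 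I do not expect a genuine obstacle here: the whole content is the per-constraint bookkeeping above, and the hypothesis $\Pi\notin\{\emptyset,\mathcal S_3\}$ is not used by this argument — it only excludes the degenerate cases in which {\sc $\Pi$-AA} is trivial.
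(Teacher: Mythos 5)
Your proposal is correct and follows essentially the same route as the paper: the same constraint-by-constraint replacement of each $(v_1,v_2,v_3)$ by the $|\Pi|$ triples $(v_{\pi(1)},v_{\pi(2)},v_{\pi(3)})$, the same observation that exactly one constraint per block is satisfied precisely when the original is $\Pi$-satisfied, and the same identity $\frac{|\Pi|}{6}|\mathcal C|+k=\frac{1}{6}|\mathcal C_0|+k$. Your extra remarks (the uniqueness of the sorting permutation $\sigma$, and that the hypothesis $\Pi\notin\{\emptyset,\mathcal S_3\}$ is not actually needed for the reduction) are accurate elaborations of what the paper leaves implicit.
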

\begin{proof}
  From an instance $(V,\mathcal C,k)$ of~{\sc $\Pi$-AA}, construct an instance $(V,\mathcal C_0,k)$ of {\sc 3-Linear Ordering-AA} as follows.
  For each triple $(v_1,v_2,v_3)\in\mathcal C$, add $|\Pi|$ triples $(v_{\pi(1)},v_{\pi(2)},v_{\pi(3)})$, $\pi\in \Pi$, to~$\mathcal C_0$.

  Observe that a triple $(v_1,v_2,v_3)\in\mathcal C$ is $\Pi$-satisfied if and only if exactly one of the triples $(v_{\pi(1)},v_{\pi(2)},v_{\pi(3)})$, $\pi\in \Pi$, is satisfied by {\sc 3-Linear Ordering}.
  Thus, $\frac{|\Pi|}{6}|\mathcal C| + k$ constraints from $\mathcal C$ are $\Pi$-satisfied if and only if the same number of constraints from $\mathcal C_0$ are satisfied by {\sc 3-Linear Ordering}.
  It remains to observe that $\frac{|\Pi|}{6}|\mathcal C| + k=\frac{1}{6}|\mathcal C_0| + k$ as $|\mathcal C_0|=|\Pi|\cdot |\mathcal C|$.\qed
\end{proof}

Recall that the maximization version of \textsc{$r$-Linear Ordering} ($r \geq 2$) can be defined as follows.
An instance of such a problem
consists of a set of variables $V$ and a multiset of constraints, which are
ordered $r$-tuples of distinct variables of $V$ (note that the same set of $r$ variables
may appear in several different constraints). The objective is
to find an ordering $\alpha$ of $V$ that maximizes
the number of constraints whose order in $\alpha$
follows that of the constraint (we say that these constraints are satisfied). It is well-known that \textsc{2-Linear Ordering} is NP-hard (it follows immediately from the fact proved by Karp \cite{karp72} that the feedback arc set problem is NP-hard). It is easy to extend this hardness result to all \textsc{$r$-Linear Ordering} problems (for each fixed $r\ge 2$). Note that in  \textsc{$r$-Linear Ordering Above Average} (\textsc{$r$-Linear Ordering-AA}), given a multiset $\mathcal C$ of constraints over $V$ we are to decide whether there is an ordering of $V$ that satisfies at least $|{\mathcal C}|/r!+\kappa$ constraints.

\textsc{(2,3)-Linear Ordering} is a mixture of \textsc{2-Linear Ordering} and \textsc{3-Linear Ordering}, where constraints can be of both arity 2 and 3.

We proceed by first considering \textsc{2-Linear Ordering} (Subsection \ref{subsec:2LO}), {\sc Betweenness} (Subsection \ref{subsec:btw}), and
\textsc{3-Linear Ordering} (Subsection \ref{subsec:3LO}) separately and proving the existence of a kernel with a quadratic number of variables and constraints for their parameterizations above average. We will conclude the section by briefly overviewing the result of Kim and Williams \cite{KimWil} that \textsc{(2,3)-Linear Ordering} has a kernel with a linear number of variables (Subsection \ref{subsec:23LO}). By considering \textsc{(2,3)-Linear Ordering} rather than just \textsc{3-Linear Ordering} separately, Kim and Williams managed to obtain a finite set of reduction rules which appear to be impossible to obtain for \textsc{3-Linear Ordering} only (see Subsection \ref{subsec:3LO}).

\subsection{2-Linear Ordering}\label{subsec:2LO}
\newcommand{\dom}{\rightarrow}

Let $D=(V,A)$ be a digraph on $n$ vertices with no loops or parallel arcs in which every arc
$ij$ has a positive integral weight $w_{ij}$. Consider an
ordering $\alpha: V \dom [n]$ and the subdigraph $D_{\alpha}=(V,\{
ij\in A:\ \alpha(i)<\alpha(j)\})$ of $D$. Note that $D_{\alpha}$ is acyclic.
The problem of finding a subdigraph $D_{\alpha}$ of $D$ of maximum weight is equivalent to \textsc{2-Linear Ordering}
(where the arcs correspond to constraints and weights correspond to the number of occurrences of each constraint).

It is easy to see that, in the language of digraphs, \textsc{2-Linear Ordering-AA} can be formulated as follows.

\begin{quote}
  \textsc{2-Linear Ordering Above Average} (\textsc{2-Linear Ordering-AA})\nopagebreak

  \emph{Instance:} A digraph $D=(V,A)$, each arc $ij$ has an integral positive
  weight $w_{ij}$, and a positive integer~$\kap{}$.\nopagebreak

  \emph{Parameter:} The integer $\kap{}$.\nopagebreak

  \emph{Question:} Is there a subdigraph $D_{\alpha}$ of $D$ of weight at least
  $W/2+\kap{}$, where $W=\sum_{ij\in A}w_{ij}$ ?
\end{quote}

Mahajan, Raman, and Sikdar~\cite{MahajanRamanSikdar09} asked whether
\textsc{2-Linear Ordering-AA} is fpt for the special case when all arcs
are of weight 1. Gutin et al. \cite{GutinKimSzeiderYeo09a} solved the problem by obtaining a
quadratic kernel for the problem. In fact, the problem can be solved using the following result of Alon \cite{Alon2002}:
there exists an ordering $\alpha$ such that $D_{\alpha}$ has weight at least $(\frac{1}{2}+\frac{1}{16|V|})W.$
However, the proof in \cite{Alon2002} uses a probabilistic approach for which a derandomization is not known yet and, thus, we cannot
find the appropriate $\alpha$ deterministically. Moreover, the probabilistic approach in \cite{Alon2002} is quite specialized. Thus, we
briefly describe a solution from Gutin et al. \cite{GutinKimSzeiderYeo09a} based on Strictly-Above-Below-Expectation Method (introduced in \cite{GutinKimSzeiderYeo09a}).

Consider the following reduction rule:
\begin{krule}\label{LO1}
Assume $D$ has a directed
\hbox{2-cycle $iji$};
if $w_{ij}=w_{ji}$ delete the cycle,
if $w_{ij}>w_{ji}$ delete the arc $ji$ and replace $w_{ij}$ by $w_{ij}-w_{ji}$,
and if $w_{ji}>w_{ij}$ delete the arc $ij$ and replace $w_{ji}$ by $w_{ji}-w_{ij}$.
\end{krule}
It is easy to check that the answer to {2-Linear Ordering-AA} for a digraph $D$ is {\sc Yes} if
and only if the answer to {2-Linear Ordering-AA} is {\sc Yes} for a digraph obtained from $D$
using the reduction rule as long as possible. A digraph is called an {\em oriented graph} if it has no directed 2-cycle.
Note that applying Rule \ref{LO1} as long as possible results in an oriented graph.

Consider a random ordering: $\alpha: V \dom [n]$ and a
random variable $X(\alpha)=\frac{1}{2}\sum_{ij\in A} x_{ij}(\alpha)$, where
$x_{ij}(\alpha)=w_{ij}$ if $\alpha(i)<\alpha(j)$ and
$x_{ij}(\alpha)=-w_{ij}$, otherwise. It is easy to see that $X(\alpha)=\sum\{
w_{ij}:\ ij\in A, \alpha(i)<\alpha(j)\} - W/2$. Thus, the answer to {2-Linear Ordering-AA} is
{\sc Yes} if and only if there is an ordering $\alpha: V \dom [n]$
such that $X(\alpha)\ge \kap{}$. Since $\mathbb{E}(x_{ij})=0$, we have
$\mathbb{E}(X)=0$.

Let $W^{(2)}=\sum_{ij\in A}w_{ij}^2$. Gutin et al. \cite{GutinKimSzeiderYeo09a} proved the following:
\begin{lemma}\label{lemEX2}
If $D$ is an oriented graph, then $\mathbb{E}(X^2)\ge W^{(2)}/12$.
\end{lemma}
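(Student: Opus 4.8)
The plan is to compute $\mathbb{E}(X^2)$ exactly by expanding the square and invoking linearity of expectation, and then to throw away one manifestly nonnegative term at the end. Writing $X(\alpha)=\tfrac12\sum_{ij\in A}x_{ij}(\alpha)$, we get
$$\mathbb{E}(X^2)=\frac14\sum_{a\in A}\sum_{b\in A}\mathbb{E}(x_a x_b),$$
and I would split the ordered pairs $(a,b)$ of arcs according to how many endpoints they share. If $a=b$ then $x_a^2=w_a^2$ identically, so these terms contribute $\tfrac14\sum_{a}w_a^2=\tfrac14 W^{(2)}$. If $a$ and $b$ are vertex-disjoint, then $x_a$ and $x_b$ are functions of the relative orders inside two disjoint pairs of vertices, hence independent, and since $\mathbb{E}(x_a)=\mathbb{E}(x_b)=0$ these terms vanish. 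This is the point at which the hypothesis that $D$ is an oriented graph enters: with no digons and no parallel arcs, no two distinct arcs can share \emph{both} endpoints, so we never meet a pair $ij,ji$, whose contribution $-\tfrac12 w_{ij}w_{ji}$ would be negative and would break the bound. Hence only pairs of distinct arcs meeting in exactly one vertex survive.

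For such a pair I would fix their common vertex $v$ and encode the orientation locally: for every arc $a$ incident with $v$ set $\sigma_a=+1$ if $a$ leaves $v$ and $\sigma_a=-1$ if $a$ enters $v$, write $t_a$ for the other endpoint, and put $y_a=+1$ iff $v$ precedes $t_a$ in $\alpha$; then $x_a=w_a\sigma_a y_a$. For two distinct arcs $a,b$ incident with $v$ the only relevant randomness is the order of the three vertices $v,t_a,t_b$, and a count over the $3!$ orders shows $\mathbb{E}(y_a y_b)=\tfrac13$ (the product is $+1$ exactly when $v$ is first or last), so $\mathbb{E}(x_a x_b)=\tfrac13 w_a w_b\sigma_a\sigma_b$. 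Summing over all ordered pairs of distinct arcs incident with $v$, and completing the diagonal via $\bigl(\sum_{a\ni v}w_a\sigma_a\bigr)^2-\sum_{a\ni v}w_a^2$, the whole cross-term contribution equals $\tfrac{1}{12}\sum_{v}(d_v^2-W_v)$, where $d_v=\sum_{a\ni v}w_a\sigma_a$ (weighted out-degree minus in-degree of $v$) and $W_v=\sum_{a\ni v}w_a^2$.

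It remains to assemble the pieces. Since every arc is incident with exactly two vertices, $\sum_v W_v=2W^{(2)}$, and therefore
$$\mathbb{E}(X^2)=\frac14 W^{(2)}+\frac1{12}\sum_v d_v^2-\frac1{12}\cdot 2W^{(2)}=\frac1{12}\Bigl(W^{(2)}+\sum_v d_v^2\Bigr)\ \ge\ \frac{W^{(2)}}{12},$$
using $d_v^2\ge 0$.

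I expect the only real work to be the bookkeeping in the second paragraph: correctly identifying which arc pairs contribute, extracting the factor $\tfrac13$ from the three-vertex order count, and handling the local signs $\sigma_a$ so that the per-vertex sum collapses into a square minus its diagonal. The genuinely subtle point — easy to overlook — is that ruling out two-vertex overlaps is precisely what the ``oriented graph'' hypothesis buys us; without it the proof (and the bound) fails.
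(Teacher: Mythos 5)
Your proof is correct: the second-moment expansion, the vanishing of disjoint-arc covariances, the value $\mathbb{E}(y_ay_b)=\tfrac13$ for a shared endpoint, and the collapse of the per-vertex sum into $d_v^2-W_v$ (with $\sum_v W_v=2W^{(2)}$) all check out, yielding $\mathbb{E}(X^2)=\tfrac1{12}\bigl(W^{(2)}+\sum_v d_v^2\bigr)$. The survey itself only cites this lemma without proof, and your argument is essentially the computation given in the cited source \cite{GutinKimSzeiderYeo09a}, including the correct identification of the oriented-graph hypothesis as what excludes the negative digon contributions.
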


Since $X(-\alpha)=-X(\alpha)$, where $-\alpha(i)=n+1-\alpha(i),$ $X$ is a symmetric random variable
and, thus, we use a proof similar to that of Theorem \ref{thm:odd} (but applying Lemma \ref{lemEX2} instead of Lemma \ref{lem:Pars}) to show
the following:

\begin{theorem}\cite{GutinKimSzeiderYeo09a}
{2-Linear Ordering-AA} has a kernel with $O(\kap{}^2)$ arcs.
\end{theorem}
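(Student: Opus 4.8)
The plan is to follow the Strictly-Above-Below-Expectation template already used for Theorem~\ref{thm:odd}, with Lemma~\ref{lemEX2} playing the role that Lemma~\ref{lem:Pars} plays there. First I would preprocess the input digraph $D$: delete all isolated vertices (they affect neither $W$ nor any $D_\alpha$) and apply Reduction Rule~\ref{LO1} exhaustively. As observed after the statement of the rule, each application is parameter-preserving — deleting a balanced $2$-cycle $iji$ decreases both $W/2$ and the weight of an optimal $D_\alpha$ by $w_{ij}$, and the two unbalanced cases behave the same way — and the process terminates in polynomial time with an \emph{oriented} graph, which is exactly the hypothesis required by Lemma~\ref{lemEX2}. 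Call the resulting instance $(D,\kap{})$ again, with $D=(V,A)$, $m=|A|$, $n=|V|$, and $W=\sum_{ij\in A}w_{ij}$.

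Next I would introduce a uniformly random ordering $\alpha\colon V\to[n]$ and the random variable $X(\alpha)=\sum\{w_{ij}:ij\in A,\ \alpha(i)<\alpha(j)\}-W/2$, so that by construction $(D,\kap{})$ is a \textsc{Yes}-instance exactly when some $\alpha$ achieves $X(\alpha)\ge\kap{}$. Since each term contributes $w_{ij}$ or $-w_{ij}$ with probability $1/2$, linearity of expectation gives $\mathbb{E}(X)=0$; and since the reversal $\alpha\mapsto-\alpha$, where $-\alpha(i)=n+1-\alpha(i)$, preserves the uniform distribution and sends $X$ to $-X$, the variable $X$ is symmetric. Now Lemma~\ref{lemEX2} gives $\mathbb{E}(X^2)\ge W^{(2)}/12$ with $W^{(2)}=\sum_{ij\in A}w_{ij}^2$, and because every $w_{ij}$ is a positive integer we have $W^{(2)}\ge m$, hence $\mathbb{E}(X^2)\ge m/12$. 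Applying Lemma~\ref{eqsym} to the symmetric $X$ yields $\Prob\bigl(X\ge\sqrt{m/12}\bigr)\ge\Prob\bigl(X\ge\sqrt{\mathbb{E}(X^2)}\bigr)>0$, so there is an ordering $\alpha$ with $X(\alpha)\ge\sqrt{m/12}$.

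Finally comes the dichotomy that yields the kernel. If $\sqrt{m/12}\ge\kap{}$, i.e.\ $m\ge 12\kap{}^2$, the previous paragraph exhibits an ordering with $X(\alpha)\ge\kap{}$, so the answer is \textsc{Yes} and we output a trivial constant-size \textsc{Yes}-instance. Otherwise $m<12\kap{}^2$; since $D$ has no isolated vertices, $n\le 2m<24\kap{}^2$, and therefore $(D,\kap{})$ is itself a kernel with $O(\kap{}^2)$ arcs (and variables), as claimed. The only genuinely non-routine ingredient is the variance estimate of Lemma~\ref{lemEX2}: one must show that the contributions of the vertex-sharing pairs of arcs to $\mathbb{E}(X^2)$ — which are negative precisely for the pairs forming a directed path $i\to j\to k$, and which would be disastrously negative for antiparallel pairs — cannot overwhelm the diagonal terms $\tfrac14 w_{ij}^2$; this is exactly where orientedness (no $2$-cycles) is essential, and it is the step I would expect to absorb most of the work were Lemma~\ref{lemEX2} not already available.
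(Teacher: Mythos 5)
Your proposal is correct and follows essentially the same route as the paper: reduce to an oriented graph with Rule~\ref{LO1}, observe that $X$ is symmetric via the reversal $\alpha\mapsto-\alpha$, combine Lemma~\ref{lemEX2} with Lemma~\ref{eqsym} and the integrality of the weights to get $\Prob(X\ge\sqrt{m/12})>0$, and conclude with the dichotomy $m\ge 12\kap{}^2$ versus $m<12\kap{}^2$. This is exactly the adaptation of the proof of Theorem~\ref{thm:odd} that the paper describes.
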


By deleting isolated vertices (if any), we can obtain a kernel with $O(\kap{}^2)$ arcs and vertices. Kim and Williams \cite{KimWil} proved that \textsc{2-Linear Ordering} has a kernel with a linear number of variables.

\subsection{Betweenness}\label{subsec:btw}

Let $V=\{v_1, \ldots,v_n\}$ be a set of variables and let ${\cal C}$ be a multiset of $m$ {\em betweenness} constraints of the form $(v_i,\{v_j, v_k\})$.
For an ordering $\alpha:\ V\rightarrow [n]$, a constraint $(v_i,\{v_j,v_k\})$ {\em is satisfied} if either $\alpha(v_j)< \alpha(v_i) < \alpha(v_k)$ or $\alpha(v_k)< \alpha(v_i) < \alpha(v_j)$.
In the {\sc Betweenness} problem, we are asked to find an ordering $\alpha$ satisfying the maximum number of constraints in ${\cal C}$.
{\sc Betweenness} is NP-hard as even the problem of deciding whether all betweenness constraints in ${\cal C}$ can be satisfied by an ordering $\alpha$ is NP-complete \cite{Opatrny1979}.

Let $\alpha:\ V\rightarrow [n]$ be a random ordering and observe that the probability of a constraint in $\cal C$ to be satisfied is $1/3.$ Thus,
the expected number of satisfied constraints is $m/3$. A triple of betweenness constraints of the form $(v,\{u, w\}),(u,\{v, w\}),(w,\{v, u\})$ is called
a {\em complete triple}. Instances of {\sc Betweenness} consisting of complete triples demonstrate that $m/3$ is a tight lower bound on the maximum number
of constraints satisfied by an ordering $\alpha$. Thus, the following parameterization is of interest:

\begin{quote}
  \textsc{Betweenness Above Average} (\textsc{Betweenness-AA})\nopagebreak

   \emph{Instance:} A multiset $\mathcal C$ of $m$ betweenness constraints over variables $V$ and an integer $\kap{} \geq 0$.\nopagebreak

  \emph{Parameter:} The integer $\kap{}$.\nopagebreak

  \emph{Question:} Is there an ordering $\alpha:V\rightarrow [n]$
   that satisfies at least $m/3+\kap{}$ constraints from $\mathcal C$?
\end{quote}

In order to simplify instances of {\sc Betweenness-AA} we introduce the following reduction rule.

\begin{krule}\label{Brule}
If $\cal C$ has a complete triple, delete it from $\cal C.$ Delete from $V$ all variables that appear only in the deleted triple.
\end{krule}

Benny Chor's question (see \cite[p. 43]{Niedermeier06}) to determine the parameterized complexity of \textsc{Betweenness-AA} was solved
by Gutin et al. \cite{GutinKimMnichYeo} who proved that \textsc{Betweenness-AA} admits a kernel with $O(\kap{}^2)$ variables and constraints (in fact,
\cite{GutinKimMnichYeo} considers only the case when $\cal C$ is a set, not a multiset, but the proof for the general case is the same \cite{GutinIerselMnichYeo}).
Below we briefly describe the proof in \cite{GutinKimMnichYeo}.

Suppose we define a random variable $X(\alpha)$ just as we did for \textsc{2-Linear Ordering}. However such a variable is not symmetric
and therefore we would need to use Lemma
\ref{lem:Pars} on $X(\alpha)$. The problem is that $\alpha$ is a permutation and in Lemma
\ref{lem:Pars} we are looking at polynomials, $f=f(x_1,x_2\ldots,x_n)$, over variables $x_1,\ldots,x_n$ each with domain $\{-1,1\}$.
In order to get around this problem the authors of \cite{GutinKimMnichYeo} considered a different random variable $g(Z)$, which they defined as follows.

Let $Z=(z_1,z_2,\ldots,z_{2n})$ be a set of $2n$ variables with domain $\{-1,1\}$. These $2n$ variables correspond to $n$ variables $z_1^*,z_2^*,\ldots,z_n^*$
such that $z_{2i-1}$ and $z_{2i}$ form the binary representation of $z_i^*$. That is, $z_{i}^*$ is $0$, $1$, $2$ or $3$ depending on the value of
$(z_{2i-1},z_{2i}) \in \{ (-1,-1), (-1,1), (1,-1), (1,1) \}$. An ordering: $\alpha: V \dom [n]$ complies with $Z$ if for every $\alpha(i)<\alpha(j)$
we have $z_i^* \leq z_j^*$. We now define the value of $g(Z)$ as the average number of constraints satisfied over all orderings which comply with $Z$.
Let $f(Z)=g(Z) - m/3$, and by Lemma \ref{lem:1} we can now use Lemma \ref{lem:Pars} on $f(Z)$ as it is a polynomial over variables whose domain is $\{-1,1\}$.
We consider variables $z_i$ as independent uniformly distributed random variables and then $f(Z)$ is also a random variable.
In \cite{GutinKimMnichYeo} it is shown that the following holds if Reduction Rule \ref{Brule} has been exhaustively applied.

\begin{lemma}\label{lem:1}
The random variable $f(Z)$ can be expressed as a polynomial of degree 6.
We have $\mathbb{E}[f(Z)]=0$. Finally, if $f(Z)\ge \kap{}$ for some $Z\in \{-1,1\}^{2n}$ then the corresponding instance of {\sc Betweenness-AA} is a {\sc Yes}-instance.
\end{lemma}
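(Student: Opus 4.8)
The plan is to settle the three assertions in order of increasing effort, working throughout with the two-stage random experiment implicit in the definition of $g$: first draw $Z\in\{-1,1\}^{2n}$ uniformly at random, then draw an ordering $\alpha$ of $V$ uniformly at random among those that comply with $Z$. The structural observation I would use repeatedly is that the orderings complying with a fixed $Z$ are precisely the linear extensions of the weak order in which the four blocks $B_v=\{i:\ z_i^*=v\}$, $v\in\{0,1,2,3\}$, are listed in increasing order of $v$ while the elements inside each block appear in an arbitrary relative order; equivalently, a uniform complying ordering is produced by listing the blocks in the order $B_0,B_1,B_2,B_3$ and permuting each block internally, independently and uniformly at random. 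In particular the set of complying orderings is never empty (sort $V$ by $z^*$-value, breaking ties arbitrarily), so $g(Z)$, and hence $f(Z)$, is well defined for every $Z$, and the third assertion follows at once: $g(Z)$ is the average over that nonempty set of the number of satisfied constraints, so by the averaging argument some complying ordering satisfies at least $g(Z)=f(Z)+m/3$ constraints; if $f(Z)\ge\kap{}$ this ordering certifies that the {\sc Betweenness-AA} instance is a {\sc Yes}-instance.

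For the degree bound I would write $g(Z)=\sum_{c\in\mathcal C}g_c(Z)$ by linearity over the complying orderings, where $g_c(Z)$ is the probability that the single constraint $c=(v_i,\{v_j,v_k\})$ is satisfied by a uniform complying ordering. A short case analysis on the pattern of the values $z_i^*,z_j^*,z_k^*$ (all three equal; exactly two equal; all three distinct), using the block description above, shows that the distribution of the relative order of $v_i,v_j,v_k$ induced by a uniform complying ordering — and hence $g_c(Z)$ — depends only on $z_i^*,z_j^*,z_k^*$, i.e.\ only on the six Boolean coordinates $z_{2i-1},z_{2i},z_{2j-1},z_{2j},z_{2k-1},z_{2k}$. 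Since every real-valued function on $\{-1,1\}^6$ equals its multilinear Fourier expansion, a polynomial of degree at most $6$, summing over all $c$ and subtracting the constant $m/3$ exhibits $f(Z)$ as a polynomial of degree at most $6$ in $z_1,\dots,z_{2n}$ — exactly the form required to later invoke Lemmas~\ref{lem:Pars}, \ref{lem32} and \ref{lem41}.

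For $\mathbb E[f(Z)]=0$ it suffices, by the same decomposition, to prove $\mathbb E_Z[g_c(Z)]=1/3$ for every constraint $c=(v_i,\{v_j,v_k\})$. Now $\mathbb E_Z[g_c(Z)]$ is exactly the probability, in the two-stage experiment, that $v_i$ ends up as the middle element of the triple $\{v_i,v_j,v_k\}$ in $\alpha$. Since $z_i^*,z_j^*,z_k^*$ are independent and uniform on $\{0,1,2,3\}$ and the within-block tie-breaking treats all members of a block symmetrically, the induced relative order of $(v_i,v_j,v_k)$ is invariant under permuting the labels $i,j,k$; because exactly one of the three elements is the middle one in any total order, the events ``$v_i$ is middle'', ``$v_j$ is middle'' and ``$v_k$ is middle'' partition the probability space into three equiprobable events, each of probability $1/3$. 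Hence $\mathbb E_Z[g_c(Z)]=1/3$, so $\mathbb E_Z[g(Z)]=m/3$ and $\mathbb E[f(Z)]=0$. (One may note that none of the three assertions uses Reduction Rule~\ref{Brule}; that rule becomes relevant only afterwards, when a positive lower bound on $\mathbb E[f(Z)^2]$ is needed.)

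I expect the only real work to be the case analysis establishing that $g_c$ depends solely on $z_i^*,z_j^*,z_k^*$ — with the bookkeeping of which sub-cases yield $g_c\in\{0,1\}$, $g_c\in\{0,\tfrac12\}$, or $g_c=\tfrac13$ — together with a careful statement of the symmetry argument for $\mathbb E_Z[g_c(Z)]$; both are routine, so the lemma is essentially a matter of organizing these observations. The point deserving the most care is making sure that the polynomial representation of $f$ is taken to be the multilinear one, so that its ``degree'' is unambiguous and bounded by $6$, since this degree is precisely what is fed into the hypercontractive inequality in the step that follows.
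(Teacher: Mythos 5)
Your proposal is correct, and it follows what is essentially the standard (and the cited paper's) argument: the survey itself only states Lemma~\ref{lem:1} with a pointer to \cite{GutinKimMnichYeo}, and the proof there rests on exactly the observations you make — that a uniform complying ordering permutes the four $z^*$-blocks' interiors independently, so each constraint's satisfaction probability is a function of only six $\pm1$-coordinates (hence multilinear of degree at most $6$), that exchangeability of $z_i^*,z_j^*,z_k^*$ gives $\mathbb{E}_Z[g_c(Z)]=1/3$, and that the averaging argument over the nonempty set of complying orderings yields the last assertion. Your parenthetical remark that none of the three claims actually needs Reduction Rule~\ref{Brule} (which is required only for the second-moment bound of Lemma~\ref{lem:2}) is also accurate.
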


\begin{lemma} \label{lem:2} \cite{GutinIerselMnichYeo}
For an irreducible (by Reduction Rule \ref{Brule}) instance we have $\mathbb{E}[f(Z)^2] \geq \frac{11}{768}m$.
\end{lemma}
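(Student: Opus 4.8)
The plan is to compute $\mathbb{E}[f(Z)^2]$ from the multilinear expansion of $f$ over the $2n$ variables $z_1,\dots,z_{2n}\in\{-1,1\}$ and apply Lemma~\ref{lem:Pars}: if $f=\sum_{I}c_I\prod_{i\in I}z_i$ then $\mathbb{E}[f(Z)^2]=\sum_I c_I^2$, so it suffices to collect enough squared coefficients and show their sum is at least $\tfrac{11}{768}m$. First I would write $f=\sum_{c\in\mathcal C}f_c$, where $f_c=g_c-\tfrac13$ and $g_c(Z)$ is the probability that constraint $c$ is satisfied by a uniformly random ordering complying with $Z$. Since satisfaction of $c$ depends only on the relative order in $\alpha$ of the three variables of $c$, the function $g_c$ depends only on the six variables $z_{2a-1},z_{2a},z_{2b-1},z_{2b},z_{2d-1},z_{2d}$ attached to the triple $T=\{a,b,d\}$ of $c$; in fact $g_c$ is a function of $(z^*_a,z^*_b,z^*_d)\in\{0,1,2,3\}^3$ that takes only the values $0,\tfrac13,\tfrac12,1$, with $g_c=\tfrac13$ exactly when $z^*_a=z^*_b=z^*_d$. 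By Lemma~\ref{lem:1} (and exchangeability of $Z$) we have $\mathbb{E}[f_c]=0$ for every $c$.

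The crucial step is a localisation. Call a monomial $\prod_{i\in I}z_i$ \emph{spread} if $I$ meets each of three distinct pairs $\{z_{2v-1},z_{2v}\}$. Because each $f_c$ involves only three original variables, every spread monomial occurring in $f$ is attached to a unique triple $T$ of original variables, and only the betweenness constraints whose triple is exactly $T$ can contribute to its $f$-coefficient. Hence, writing $m^T_a,m^T_b,m^T_d$ for the multiplicities in $\mathcal C$ of the three possible betweenness constraints on $T$, and $u_a,u_b,u_d$ for the vectors (identical up to relabelling) of spread-monomial coefficients of the corresponding single-constraint polynomials, Lemma~\ref{lem:Pars} gives
\[
\mathbb{E}[f(Z)^2]\ \ge\ \sum_{T}\bigl\|\,m^T_a u_a+m^T_b u_b+m^T_d u_d\,\bigr\|^2 ,
\]
the sum ranging over all triples $T$ of variables, the norm taken over spread monomials within $T$.

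Now the reduction rule enters, together with a simple identity. For any relative order of $\{a,b,d\}$ exactly one element lies between the other two, so $g_a+g_b+g_d\equiv 1$, hence $f_a+f_b+f_d\equiv 0$ and $u_a+u_b+u_d=0$; together with the evident symmetry among the three types this forces $\|u_a\|^2=\|u_b\|^2=\|u_d\|^2=:\beta$ and $\langle u_a,u_b\rangle=\langle u_a,u_d\rangle=\langle u_b,u_d\rangle=-\beta/2$. After Reduction Rule~\ref{Brule} has been applied exhaustively no complete triple remains, i.e.\ $\min\{m^T_a,m^T_b,m^T_d\}=0$ for every $T$; thus each $m^T_a u_a+m^T_b u_b+m^T_d u_d$ is a nonnegative integer combination $\mu u_x+\nu u_y$ of just two of the vectors, and
\[
\|\mu u_x+\nu u_y\|^2=\beta\,(\mu^2-\mu\nu+\nu^2)\ \ge\ \tfrac{\beta}{2}\,(\mu+\nu)
\]
for all nonnegative integers $(\mu,\nu)\ne(0,0)$, the ratio $(\mu^2-\mu\nu+\nu^2)/(\mu+\nu)$ being minimised over nonnegative integers at $(\mu,\nu)=(1,1)$, where it equals $\tfrac12$ (this is exactly the ``complete-triple-minus-one-type'' configuration, which is the worst case). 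Summing over $T$ and using $\sum_T(m^T_a+m^T_b+m^T_d)=m$ yields $\mathbb{E}[f(Z)^2]\ge\tfrac{\beta}{2}\,m$.

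It remains to evaluate $\beta$, the squared norm of the spread part of a single betweenness constraint's polynomial $f_c$. This is a finite, routine computation: from the table of values of $g_c$ one carries out its multilinear expansion (equivalently, projects out the parts depending on only one or two of the three variable pairs) and reads off the norm of the remaining trilinear component; one should obtain $\mathbb{E}[f_c^2]=\tfrac{11}{96}$ and $\beta=\tfrac{11}{384}$, whence $\mathbb{E}[f(Z)^2]\ge\tfrac{11}{768}m$. I expect this last numerical step — performing the expansion and extracting $\beta$ exactly — to be the only real work; the rest is bookkeeping resting on Lemma~\ref{lem:Pars}, the identity $f_a+f_b+f_d\equiv0$, and the absence of complete triples guaranteed by Reduction Rule~\ref{Brule}.
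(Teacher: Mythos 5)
The survey states this lemma without proof, deferring entirely to \cite{GutinIerselMnichYeo}, so there is no in-paper argument to compare against; judged on its own, your proof is correct and follows what is essentially the route of the original reference: restrict Parseval (Lemma \ref{lem:Pars}) to the monomials meeting all three variable-pairs of a triple, exploit $f_a+f_b+f_d\equiv 0$ together with the absence of complete triples guaranteed by Rule \ref{Brule}, and reduce everything to the trilinear norm of a single constraint. I verified the two facts you deferred: $\mathbb{E}[g_c^2]=65/288$, hence $\mathbb{E}[f_c^2]=65/288-1/9=11/96$; and the inclusion-exclusion over the three pairs gives the spread-part squared norm $\beta=66/2304=11/384$, so $\beta/2=11/768$ as required. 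The symmetry step (that the three pairwise inner products coincide, not merely sum to $-3\beta/2$) does need the transposition argument you gesture at, and it goes through because each $g_x$ is symmetric in the two non-middle variables.
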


\begin{theorem}\label{thmain} \cite{GutinIerselMnichYeo}
{\sc Betweenness-AA} has a kernel of size $O(\kap{}^2).$
\end{theorem}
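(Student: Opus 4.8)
The plan is to run the Strictly-Above-Below-Expectation machinery on the random variable $f(Z)$ furnished by Lemma~\ref{lem:1}, in the same spirit as the proof of Theorem~\ref{thm:odd} and of the kernel for \textsc{2-Linear Ordering-AA}, except that $f$ is not a symmetric variable, so Lemma~\ref{lem32} must be used in place of Lemma~\ref{eqsym}. First I would apply Reduction Rule~\ref{Brule} exhaustively; by the discussion preceding Lemma~\ref{lem:1} this is sound and parameter-preserving and runs in polynomial time, and I would additionally discard every variable that occurs in no constraint of $\mathcal C$. Henceforth I assume the instance is irreducible, and write $m=|\mathcal C|$ and $n=|V|$.

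Next I would pass to the probability space in which $Z=(z_1,\dots,z_{2n})$ is drawn uniformly from $\{-1,1\}^{2n}$, so that $f(Z)$ becomes a random variable. By Lemma~\ref{lem:1} it is a polynomial of degree at most $6$ in these $2n$ variables with $\mathbb E[f(Z)]=0$, and by Lemma~\ref{lem:2} its variance satisfies $\mathbb E[f(Z)^2]\ge \frac{11}{768}m=:\sigma^2$. Applying the Hypercontractive Inequality (Lemma~\ref{lem41}) with $r=6$ yields $\mathbb E[f(Z)^4]\le 9^{6}\,\mathbb E[f(Z)^2]^2$, so Lemma~\ref{lem32} applies with $c=9^{6}$ and produces a point $Z^{0}\in\{-1,1\}^{2n}$ with
$$f(Z^{0}) \;>\; \frac{\sigma}{2\sqrt c}\;=\;\frac{1}{2\cdot 9^{3}}\sqrt{\tfrac{11}{768}\,m}\;=:\;c_{0}\sqrt m ,$$
where $c_{0}>0$ is an absolute constant (and if $m=0$ the instance already has size $0$, so that case is trivial).

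Then I would branch on the size of $m$. If $c_{0}\sqrt m\ge \kappa$, the point $Z^{0}$ satisfies $f(Z^{0})\ge\kappa$, so by the last assertion of Lemma~\ref{lem:1} the instance is a {\sc Yes}-instance and the kernelization outputs a fixed trivial {\sc Yes}-instance. Otherwise $m<\kappa^{2}/c_{0}^{2}=O(\kappa^{2})$; since every constraint mentions exactly three variables and no variable is now isolated, $n\le 3m=O(\kappa^{2})$ as well. This is the desired kernel with $O(\kappa^{2})$ variables and constraints.

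I expect essentially all of the difficulty to sit in the two quoted lemmas rather than in the packaging above. Lemma~\ref{lem:1} requires checking that averaging the number of satisfied betweenness constraints over all orderings complying with $Z$ is a genuine polynomial of degree at most $6$ in the $2n$ bits (each constraint contributing a bounded-degree indicator built from the binary encodings of its three variables), that its mean is $m/3$, and that a good $Z$ can be rounded back to an ordering without loss via the averaging argument. The real obstacle is Lemma~\ref{lem:2}: by Parseval (Lemma~\ref{lem:Pars}) $\mathbb E[f(Z)^2]=\sum_S c_S^{2}$, and one must show this Fourier mass is $\Omega(m)$ once complete triples have been removed. This needs a careful accounting of how pairs of constraints sharing $0$, $1$, $2$ or $3$ variables can cancel one another's Fourier coefficients, and the constant $\frac{11}{768}$ emerges from that case analysis; irreducibility under Rule~\ref{Brule} is precisely what excludes the degenerate configurations that would otherwise drive the variance down.
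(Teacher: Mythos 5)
Your proposal is correct and follows essentially the same route as the paper's proof: reduce by Rule~\ref{Brule}, take $f(Z)$ from Lemma~\ref{lem:1}, combine the variance lower bound of Lemma~\ref{lem:2} with the Hypercontractive Inequality (Lemma~\ref{lem41}) and Lemma~\ref{lem32} to conclude that either the instance is a {\sc Yes}-instance or $m=O(\kappa^2)$, then delete isolated variables. The only differences are cosmetic (you use the sharper constant $c=9^6$ where the paper writes $2^{36}$), and you correctly locate the real work in the two quoted lemmas.
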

\begin{proof}
Let $(V,\mathcal C)$ be an instance of {\sc Betweenness-AA}. We can obtain an irreducible instance $(V',\mathcal C')$ such that
$(V,\mathcal C)$ is a {\sc Yes}-instance if and only if $(V',{\mathcal C}')$ is a {\sc Yes}-instance in polynomial time.
Let $m'=|{\mathcal C}'|$ and let  $f(Z)$ be the random variable defined above. Then $f(Z)$ is expressible as a polynomial of degree 6 by Lemma \ref{lem:1};
hence it follows from Lemma~\ref{lem41}  that $\mathbb{E}[f(Z)^4] \leq 2^{36} \mathbb{E}[f(Z)^2]^2$.
Consequently, $f(Z)$ satisfies the conditions of Lemma~\ref{lem32},
from which we conclude that $\mathbb P\left(f(Z) > \frac{1}{4\cdot 2^{18}} \sqrt{\frac{11}{768}m'}\right) > 0$, by Lemma \ref{lem:2}.
Therefore, by Lemma \ref{lem:1}, if $\frac{1}{4\cdot 2^{18}} \sqrt{\frac{11}{768}m'}\ge \kap{}$ then $(V',{\mathcal C}')$ is a {\sc Yes}-instance for {\sc Betweenness-AA}.
Otherwise, we have $m'= O(\kap{}^2)$.
This concludes the proof of the theorem.
\qed
\end{proof}

By deleting variables not appearing in any constraint, we obtain a kernel with $O(\kap{}^2)$ constraints and variables.

\subsection{3-Linear Ordering}\label{subsec:3LO}

In this subsection, we will give a short overview of the proof in \cite{GutinIerselMnichYeo} that \textsc{3-Linear Ordering} has a kernel with at most $O(\kap{}^2)$ variables and constraints.

Unfortunately, approaches which we used for {\sc 2-Linear Ordering-AA} and {\sc Betweenness-AA} do not work for this problem. In fact, if we wanted to remove subsets of constraints where only the average number of constraints can be satisfied such that after these removals we are guaranteed to have more than the average number of constraints satisfied, then, in general case, an {\em infinite} number
of reduction rules would be needed. The proof of this is quite long and therefore omitted from this survey, see \cite{GutinIerselMnichYeo}
for more information.

However, we can reduce an instance of {\sc 3-Linear Ordering-AA} to instances of {\sc Betweenness-AA} {\em and}
{\sc 2-Linear Ordering-AA} as follows.
With an instance $(V,{\mathcal C})$ of {\sc 3-Linear Ordering-AA}, we associate an instance $(V,\mathcal B)$ of {\sc Betweenness-AA} and two
instances $(V,A')$ and $(V,A'')$ of {\sc 2-Linear Ordering-AA} such that if $C_p=(u,v,w)\in {\mathcal C}$, then add
$B_p=(v,\{u,w\})$ to $\mathcal B$, $a'_p=(u,v)$ to $A'$, and $a''_p=(v,w)$ to $A''$.

Let $\alpha$ be an ordering of $V$ and let $\mathsf{dev}(V,{\mathcal C},\alpha)$ denote the number of constraints satisfied by $\alpha$ minus the average number of satisfied constraints in $(V,{\mathcal C})$, where $(V,{\mathcal C})$ is an instance of {\sc 3-Linear Ordering-AA}, {\sc Betweenness-AA}  or {\sc 2-Linear Ordering-AA}.

\begin{lemma}\label{lem:dev} \cite{GutinIerselMnichYeo}
  Let $(V,C,\kap{})$ be an instance of {\sc 3-Linear Ordering-AA} and let $\alpha$ be an ordering of $V$.
  Then
  \begin{equation*}
    \mathsf{dev}(V,{\mathcal C},\alpha)
      = \frac{1}{2}\left[   \mathsf{dev}(V,A',\alpha)
                          + \mathsf{dev}(V,A'',\alpha)
                          + \mathsf{dev}(V,{\mathcal B},\alpha)
                  \right].
  \end{equation*}
\end{lemma}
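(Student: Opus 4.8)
The plan is to reduce the identity to a single per-constraint fact and then sum. First I would record that the four instances $(V,\mathcal{C})$, $(V,A')$, $(V,A'')$ and $(V,\mathcal{B})$ all have exactly $m:=|\mathcal{C}|$ constraints, since each triple $C_p=(u_p,v_p,w_p)\in\mathcal{C}$ gives rise to exactly one constraint $a'_p=(u_p,v_p)$ in $A'$, one constraint $a''_p=(v_p,w_p)$ in $A''$, and one constraint $B_p=(v_p,\{u_p,w_p\})$ in $\mathcal{B}$. A uniformly random ordering satisfies a $3$-Linear-Ordering constraint with probability $\tfrac16$, a $2$-Linear-Ordering constraint with probability $\tfrac12$, and a Betweenness constraint with probability $\tfrac13$, so the averages subtracted in $\mathsf{dev}(V,\mathcal{C},\alpha)$, $\mathsf{dev}(V,A',\alpha)$, $\mathsf{dev}(V,A'',\alpha)$, $\mathsf{dev}(V,\mathcal{B},\alpha)$ are $m/6$, $m/2$, $m/2$, $m/3$ respectively. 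Writing $N(\cdot,\alpha)$ for the number of constraints of an instance satisfied by $\alpha$, expanding the right-hand side of the claimed equation and cancelling the constant $\tfrac12\!\left(\tfrac{m}{2}+\tfrac{m}{2}+\tfrac{m}{3}\right)=\tfrac{2m}{3}$ against $\tfrac{m}{6}$, one sees that the lemma is equivalent to
\begin{equation*}
  N(A',\alpha)+N(A'',\alpha)+N(\mathcal{B},\alpha)=2\,N(\mathcal{C},\alpha)+m.
\end{equation*}

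Since each side is a sum over the $m$ indices $p$, it now suffices to prove a pointwise statement: for every $p$, the number of satisfied constraints among $a'_p,a''_p,B_p$ equals $2$ if $C_p$ is satisfied by $\alpha$ and equals $1$ otherwise. This depends only on the linear order that $\alpha$ induces on the three variables $u_p,v_p,w_p$, so I would simply check the $3!=6$ cases. If $\alpha(u_p)<\alpha(v_p)<\alpha(w_p)$, then $C_p$ is satisfied and so are all of $a'_p$, $a''_p$ and $B_p$ (the last because $v_p$ lies between $u_p$ and $w_p$), giving $3=2\cdot1+1$. In each of the remaining five orders $C_p$ is falsified, and a one-line inspection shows that exactly one of $a'_p,a''_p,B_p$ holds --- for example only $B_p$ when $\alpha(w_p)<\alpha(v_p)<\alpha(u_p)$, only $a'_p$ when $\alpha(u_p)<\alpha(w_p)<\alpha(v_p)$, only $a''_p$ when $\alpha(v_p)<\alpha(u_p)<\alpha(w_p)$, and so on --- giving $1=2\cdot0+1$. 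Summing over $p$ and re-adding the constants gives the lemma.

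The only thing that needs care is the constant bookkeeping (matching $\tfrac12(\tfrac m2+\tfrac m2+\tfrac m3)$ against $\tfrac m6$) and reading off the six-row case table without sign errors --- in particular recalling that $a'_p=(u_p,v_p)$ counts as satisfied exactly when $\alpha(u_p)<\alpha(v_p)$, and $B_p$ exactly when $v_p$ is the median of $\{u_p,v_p,w_p\}$ under $\alpha$. There is no real obstacle: the lemma is precisely this elementary per-triple accounting, which is why the cited source states it with only a sketch.
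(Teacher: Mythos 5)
Your proof is correct: the per-triple case check (the three derived constraints $a'_p$, $a''_p$, $B_p$ contribute $3$ satisfied constraints when $C_p$ is satisfied and exactly $1$ otherwise) and the constant bookkeeping $\tfrac12\left(\tfrac m2+\tfrac m2+\tfrac m3\right)-\tfrac m6=\tfrac m2$ both check out, giving exactly the stated identity. The survey itself states this lemma without proof, merely citing \cite{GutinIerselMnichYeo}; your elementary six-case accounting is the natural argument and is essentially what that source does, so there is nothing to flag.
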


 Therefore, we want to find an ordering satisfying as many constraints as possible from both of our new
type of instances (note that we need to use the same ordering for all the problems).

Suppose we have a {\sc No}-instance of {\sc 3-Linear Ordering-AA}. As above, we replace it by three instances of {\sc Betweenness-AA} and
{\sc 2-Linear Ordering-AA}.
Now we apply the reduction rules for {\sc Betweenness-AA} and {\sc 2-Linear Ordering-AA} introduced above
as well as the proof techniques described in the previous sections in order to show that
the total number of variables and constraints left in any of our instances is bounded by $O(\kap{}^2)$. We then transform these reduced instances
back into an instance of {\sc  3-Linear Ordering-AA} as follows. If $\{v,\{u,w\}\}$ is a {\sc Betweenness} constraint then we add
the {\sc  3-Linear Ordering-AA} constraints $(u,v,w)$ and $(w,v,u)$ and if $(u,v)$ is an {\sc 2-Linear Ordering-AA} constraint
then we add the {\sc  3-Linear Ordering-AA} constraints $(u,v,w)$, $(u,w,v)$ and $(w,u,v)$ (for any $w \in V$). As a result,
we obtain a kernel of {\sc 3-Linear Ordering-AA} with at most $O(\kap{}^2)$ variables and constraints.

\subsection{(2,3)-Linear Ordering-AA}\label{subsec:23LO}

In the previous subsection, we overviewed a result that  {\sc  3-Linear Ordering-AA} has a kernel  with at most $O(\kap{}^2)$ variables and constraints.
This result has been partially improved by Kim and Williams \cite{KimWil} who showed that {\sc  3-Linear Ordering-AA} has a kernel  with at most $O(\kap{})$ variables. We will now outline their approach, where they considered {\sc  (2,3)-Linear Ordering-AA}. That is, we allow constraints to contain $2$ or $3$ variables. Thus, we can apply the following reduction rules, where $w(e)$ denotes the weight of constraint $e$ (i.e., the number of times $e$
appears in the constraint multiset) and if $e=(u,v,w)$ is a constraint then we denote $u$ by $e(1)$, $v$ by $e(2)$ and $w$ by $e(3)$, and $var(e)$ denotes the variables in $e$.

\begin{description}
\item[Redundancy Rule:] Remove a variable $v$ from $V$ if it does not appear in any constraint.
Remove a constraint $e$ from $C$ if its weight is zero.

\item[Merging Rule:] If $e_1$ and $e_2$ are identical, then replace them by a single constraint of weight $w(e_1) + w(e_2)$.

\item[Cancellation Rule:] If there are two constraints $e_1$, $e_2$ with $|e_1|=|e_2|=2$ and $e_2 = (e_1(2), e_1(1))$, let
$w_{\min} = \min\{w(e_1), w(e_2)\}$ and replace the weights by $w(e_1)=w(e_1) - w_{\min}$ and $w(e_2) = w(e_2)  - w_{\min}$.

\item[Edge Replacement Rule:] If $e_1,e_2,e_3$ are three constraints in $C$ with $var(e_1) = var(e_2) = var(e_3)$ and
such that $e_2 = (e_1(2), e_1(1), e_1 (3))$ and $e_3 = (e_1(1),e_1(3),e_1(2))$, then:

\begin{itemize}
\item replace the weight of a constraint by $w(e_i) = w(e_i) - w_{min}$ for each $i = 1, 2, 3$, where $w_{\min} =
\min\{w(e_1), w(e_2), w(e_3)\}$.

\item add the binary ordering constraint $(e_1(1), e_1(3))$ of weight $w_{\min}$.
\end{itemize}

\item[Cycle Replacement Rule:] If $e_1, e_2, e_3$ are three constraints in $C$ with $var(e_1) = var(e_2) = var(e_3)$ and
such that $e_2 = (e_1(2), e_1(3), e_1(1))$ and $e_3 = (e_1(3), e_1(1), e_1(2))$, then:

\begin{itemize}
\item  replace the weight of a constraint by $w(e_i) = w(e_i) - w_{\min}$ for each $i = 1, 2, 3$, where $w_{\min} =
\min\{w(e_1), w(e_2), w(e_3)\}$.

\item  add the three binary ordering constraints $(e_1(1), e_1(2))$, $(e_1(2), e_1(3))$ and $(e_1(3), e_1(1))$, each of weight
$w_{\min}$.
\end{itemize}
\end{description}

In \cite{KimWil} it is shown that these reduction rules produce equivalent instances. In \cite{KimWil} the following theorem is then proved.

\begin{theorem} \label{KimWilMain} \cite{KimWil}
Let $I = (V, C, \kap{})$ be an irreducible (under the above reduction rules)
instance of {\sc (2,3)-Linear Ordering-AA}. If $I$ is a {\sc No}-instance (that is,
less than $\rho W + \kap{}$ constraints in $I$ can be simultaneously satisfied, where $\rho W$ is the average weight of clauses satisfied
by a random ordering), then the number of variables in $I$ is $O(\kap{})$.
\end{theorem}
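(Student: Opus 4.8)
\noindent The plan is to follow the Strictly-Above-Below-Expectation philosophy, but to replace the second-moment estimate used for {\sc Betweenness-AA} and {\sc 3-Linear Ordering-AA} — which only yields a kernel quadratic in the number of variables — by the sum-free-set machinery of Lemma~\ref{lem:M-freeapplic}, exactly as in the proof of the linear-variable kernel of Theorem~\ref{thm:linkernel4lin2} for {\sc Max-$r$-Lin2-AA}. First I would encode an ordering of $V$ by two Boolean $\{-1,1\}$ bits per variable, via the $z_i^{*}\in\{0,1,2,3\}$ device preceding Lemma~\ref{lem:1}, and — splitting every ternary constraint into one betweenness constraint and two binary constraints by Lemma~\ref{lem:dev} — express the deviation of $I$ from its average as a multilinear polynomial $f(Z)=\sum_{S}c_S\prod_{i\in S}z_i$ in $N\le 2|V|$ Boolean variables. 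Arguing as for Lemma~\ref{lem:1}, $f$ has degree at most $6$, $\mathbb{E}[f(Z)]=0$, and $f(Z)\ge\kap{}$ for some $Z\in\{-1,1\}^{N}$ implies that $I$ is a {\sc Yes}-instance; moreover the coefficients $c_S$ are rationals whose denominators divide a fixed power $D$ of $2$. Clearing denominators turns $Df$ into the excess function of a {\sc Max-Lin2} system $S_f$ whose equations are $\prod_{i\in S}z_i=\mathrm{sign}(c_S)$ with positive integral weights $|Dc_S|$; indeed the excess of $S_f$ at any $Z$ equals $Df(Z)$.

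Next I would bring in the {\sc Max-Lin2} toolkit. Make $S_f$ irreducible by Rules~\ref{rule1} and~\ref{rulerank}; by Lemma~\ref{lem:SS'} this preserves the maximum excess, and after Rule~\ref{rulerank} the set $M$ of exponent vectors (the rows of the matrix of $S_f$, viewed in $\mathbb{F}_2^{N'}$ with $N'=\mathrm{rank}$) contains a basis of $\mathbb{F}_2^{N'}$ and consists of vectors with at most $6$ nonzero coordinates, as $\deg f\le 6$. Suppose $M$ contained an $M$-sum-free subset $K$ with $|K|\ge D\kap{}$. Then Lemma~\ref{lem:M-freeapplic} would produce, in polynomial time, an assignment of excess at least $w_{\rm min}|K|\ge D\kap{}$ (here $w_{\rm min}\ge 1$), hence $\max_Z f(Z)\ge\kap{}$, so $I$ would be a {\sc Yes}-instance, a contradiction. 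Thus $M$ has no $M$-sum-free subset of size $D\kap{}$. Applying the combinatorial lemma underlying Theorem~\ref{thm:linkernel4lin2} — a set of vectors with at most $r$ nonzero coordinates that contains a basis of $\mathbb{F}_2^{d}$, with $d\ge r(t-1)+1$, has a sum-free subset of size $t$ — with $r=6$ and $t=D\kap{}$, we conclude $N'\le 6(D\kap{}-1)=O(\kap{})$.

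It remains to deduce $|V|=O(N')$, and this is where the five reduction rules are used. The claim I would prove is that, once the Redundancy, Merging, Cancellation, Edge Replacement and Cycle Replacement rules have all been applied exhaustively, every variable of $V$ still occurs in some monomial of $f$ with coefficient bounded below in absolute value by a fixed positive constant, and, furthermore, the exponent vectors contributed by distinct variables cannot all collapse — so that the column rank $N'$ of (the matrix of) $S_f$ satisfies $N'=\Omega(|V|)$. Combined with $N'=O(\kap{})$ from the previous paragraph, this gives $|V|=O(\kap{})$, as claimed.

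The hard part is exactly this last claim: that the reduction rules are ``enough'', i.e.\ that no surviving variable is algebraically cancellable out of the Fourier expansion of the combined betweenness-plus-binary instance, and that the surviving traces of different variables are, up to $O(\kap{})$ exceptions, linearly independent over $\mathbb{F}_2$. Establishing this calls for a finite but delicate case analysis of the local constraint configurations around a variable, and it is precisely why one must work with {\sc (2,3)-Linear Ordering-AA} rather than {\sc 3-Linear Ordering-AA} alone: the binary constraints created by the Edge Replacement and Cycle Replacement rules are what absorb the degenerate ternary combinations — the ``rotation'' and ``reflection'' triples on a common variable triple — that would otherwise erase a variable from the degree-$3$ part of $f$. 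A minor additional point is to keep the denominator $D$ and the constant lower bound on the surviving $|c_S|$ under control, so that $w_{\rm min}$ stays a positive constant and the final $O(\cdot)$ hides only absolute constants.
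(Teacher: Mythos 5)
Your proposal follows essentially the same route as the paper's sketch: the two-bit $\{-1,1\}$ encoding of an ordering, a degree-6 deviation polynomial with zero mean whose Fourier expansion is read as the excess of a {\sc Max-6-Lin2-AA} instance, the sum-free-set machinery behind Theorem~\ref{thm:linkernel4lin2} to bound the number of variables appearing in that expansion by $O(\kappa)$, and the correct identification of the real difficulty, namely showing that after the five reduction rules every variable of $V$ is still represented in the Fourier expansion. Like the survey itself, you defer that last delicate coefficient analysis to Kim and Williams, so the proposal matches the paper's proof at the same level of detail.
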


In order to prove this theorem some above-mentioned techniques were used. Let $n=|V|.$ As for {\sc Betweenness-AA} (see Subsection \ref{subsec:btw}),
Kim and Williams \cite{KimWil} introduced a random variable $f(y_1,\ldots ,y_{2n})$, which is a polynomial of degree 6 with $2n$ random uniformly distributed and independent variables $y_i$, each taking value $1$ or $-1$. The key property of $f(y_1,\ldots ,y_{2n})$ is that for every {\sc No}-instance $I$ we have $f(y_1,\ldots ,y_{2n})<\kappa$ for each $(y_1,\ldots ,y_{2n})\in \{-1,1\}^{2n}.$ In Subsection \ref{subsec:btw}, a similar inequality was used to bound the number of constraints in $I$ using a probabilistic approach. Kim and Williams \cite{KimWil} use a different approach to bound the number of variables in $I$: they algebraically simplify $f(y_1,\ldots ,y_{2n})$ and obtain its Fourier expansion (see (\ref{foureq})). As in Subsection \ref{subsec:max-r-csp}, the Fourier expansion can be viewed as the excess of the corresponding instance of {\sc Max-6-Lin2-AA}. Thus, to bound the number of variables in the Fourier expansion, we can use Theorem \ref{thm:linkernel4lin2} (or, its weaker version obtained in \cite{KimWil}) which implies that the number is $O(\kap{})$.

However, there was a major obstacle that Kim and Williams \cite{KimWil} had to overcome. In general case, as a result of the algebraic simplification, the number of variables in the Fourier expansion may be significantly smaller than $2n$ and, thus, the bound on the number of variables in the Fourier expansion may not be used to bound $n$. To overcome the obstacle, Kim and Williams carefully analyzed the coefficients in the Fourier expansion and established that every variable of $V$ is ``represented'' in the Fourier expansion. As a result, they concluded $I$ can have only $O(\kap{})$ variables.

\medskip

\paragraph{Acknowledgments}
Research of Gutin was supported
in part by the IST Programme of the European Community, under the
PASCAL 2 Network of Excellence. Research of Gutin and Yeo was partially supported by an International Joint grant of Royal Society.

\urlstyle{rm}

\end{document}